\newcommand{\npprob}[4]{
\begin{center}
\begin{tabular}{|r p{9cm} |}
\hline
\multicolumn{2}{|l|}{\textsc{#1}}\\
\textit{Instance:}& #2.\\
\textit{Parameter:}& #3.\\
\textit{Problem:}& #4\\
\hline
\end{tabular}
\end{center}
}
\newcommand{\confversion}[1]{}
\newtheorem{theorem}{Theorem}[section]
\newtheorem{corollary}[theorem]{Corollary}
\newtheorem{example}[theorem]{Example}
\newtheorem{proposition}[theorem]{Proposition}
\newtheorem{prop}[theorem]{Proposition}
\newtheorem{lemma}[theorem]{Lemma}
\newtheorem{remark}[theorem]{Remark}
\newtheorem{newremarkcore}[theorem]{Remark}
\newtheorem{definitioncore}[theorem]{Definition}
\newenvironment{definition}
  {\begin{definitioncore}\rm}
  {\end{definitioncore}}
\newcommand{\rela}{\mathbf{A}}
\newcommand{\relb}{\mathbf{B}}
\newcommand{\relc}{\mathbf{C}}
\newcommand{\reld}{\mathbf{D}}
\newcommand{\relt}{\mathbf{T}}
\newcommand{\relp}{\mathbf{P}}
\newcommand{\relg}{\mathbf{G}}
\newcommand{\relm}{\mathbf{M}}
\newcommand{\nats}{\mathbb{N}}
\newcommand{\fancya}{\mathcal{A}}
\newcommand{\fancyg}{\mathcal{G}}
\newcommand{\fancym}{\mathcal{M}}
\newcommand{\fancyr}{\mathcal{R}}
\newcommand{\T}{\mathcal{T}}
\renewcommand{\P}{\mathcal{P}}
\newcommand{\C}{\mathcal{C}}
\newcommand{\BIN}{\mathcal{B}}
\newcommand{\dP}{\overrightarrow{\P}}
\newcommand{\dC}{\overrightarrow{\C}}
\newcommand{\dBIN}{\overrightarrow{\BIN}}
\newcommand{\dom}{\mathsf{dom}}
\newcommand{\tw}{\mathsf{tw}}
\newcommand{\td}{\mathsf{td}}
\newcommand{\pw}{\mathsf{pw}}
\newcommand{\qr}{\mathsf{qr}}
\newcommand{\ar}{\mathsf{ar}}
\newcommand{\justHOM}[1]{\textsc{Hom}(#1)}
\newcommand{\HOMP}[1]{p\textsc{-Hom}(#1)}
\newcommand{\EMP}[1]{p\textsc{-Emb}(#1)}
\newcommand{\SHOMP}[1]{p\textsc{-\#Hom}(#1)}
\newcommand{\EVAL}{\textsc{EVAL}}
\newcommand{\core}{\mathsf{core}}
\newcommand{\pl}{\textup{pl}}
\newcommand{\A}{\mathbb A}
\newcommand{\B}{\mathbb B}
\newcommand{\PATH}{\textup{PATH}}
\newcommand{\TREE}{\textup{TREE}}
\newcommand{\ignore}[1]{}
\begin{document}

\title{The Fine Classification of Conjunctive Queries 
and\\Parameterized Logarithmic Space Complexity}

\author{Hubie Chen\thanks{E-mail address: \texttt{hubie.chen@ehu.es}}}
\affil{Departamento LSI, Facultad de Inform\'{a}tica\\
Universidad del Pa\'{i}s Vasco, 
San Sebasti\'{a}n, Spain\\
and\\
IKERBASQUE, Basque Foundation for Science,
Bilbao,
Spain
}

\author{Moritz M\"uller\thanks{E-mail address: \texttt{moritz.mueller@univie.ac.at}}}
\affil{
Kurt G\"{o}del Research Center, Universit\"{a}t Wien, Vienna, Austria
}
\date{}
\maketitle

\begin{abstract}
We perform a fundamental investigation of the complexity of conjunctive query evaluation from the perspective of parameterized complexity.  
We classify sets of boolean conjunctive queries according to the complexity of
this problem.  Previous work showed that a set of conjunctive queries
is fixed-parameter tractable precisely when the set is equivalent to a
set of queries having bounded treewidth.  We present a fine
classification of query sets up to parameterized logarithmic space
reduction.  We show that, in the bounded treewidth regime, there are
three complexity degrees and that the properties that determine the
degree of a query set are bounded pathwidth and bounded tree depth.
We also engage in a study of the two higher degrees via logarithmic
space machine characterizations and complete problems.  Our work
yields a significantly richer perspective on the complexity of
conjunctive queries and, at the same time, 
suggests new avenues of research in parameterized complexity.
\end{abstract}

\section{Introduction} \label{section:introduction}

Conjunctive queries are the most basic and most heavily studied database
queries, and can be formalized logically as formulas consisting of a
sequence of existentially quantified variables, followed by a
conjunction of atomic formulas.  
Ever since the landmark 1977 article of 
Chandra and Merlin~\cite{ChandraMerlin77-optimal},
complexity-theoretic aspects of conjunctive queries have been a
research subject of persistent and enduring interest which
continues to the present day (as a sampling, we point to the works~\cite{AbiteboulHullVianu95-foundationsdatabases,koalititsvardicontainment,PapadimitriouYannakakis99-database,GottlobLeoneScarcello01-complexityacyclic,GottlobLeoneScarcello02-hypertreedecomposisions,Grohe07-otherside,CreignouKolaitisVollmer08-overview,SchweikardtSchwentickSegoufin09-querylanguages,Marx10-tractablehypergraph};
see the discussions and references therein for more information).
The problem of evaluating a conjunctive query on a relational database
is equivalent to a number of well-known problems, including 
conjunctive query containment, the homomorphism problem
on relational structures, and the 
constraint satisfaction problem~\cite{ChandraMerlin77-optimal,koalititsvardicontainment}.
That this evaluation problem appears in many equivalent guises
attests to the fundamental and primal nature of this problem, and
it has correspondingly been approached and studied
from a wide variety of perspectives and motivations.
The resulting literature has not only been fruitful in terms of
continually
providing insights into and notions for understanding conjunctive
queries themselves, but has also meaningfully 
fed back into
a 
richer understanding of computational complexity theory at large,
and of common complexity classes in particular.
This is witnessed by the observation that various flavors of
conjunctive query evaluation are used as prototypical complete
problems
for complexity classes such as NP and W[1] 
(refer, for example, to the books by 
Creignou, Khanna, and Sudan~\cite{CreignouKhannaSudan01-boolean} and 
by Flum and Grohe~\cite{flumgrohe}, respectively).  
Another example of this phenomenon is the work showing 
LOGCFL-completeness of evaluating acyclic conjunctive queries
(as well as of many related problems) due to 
Gottlob, Leone, and Scarcello~\cite{GottlobLeoneScarcello01-complexityacyclic}.

As has been eloquently articulated in the literature~\cite{PapadimitriouYannakakis99-database}, 
the employment of classical complexity notions such as 
polynomial-time tractability
to grade the complexity of conjunctive query evaluation is not totally
satisfactory.
For in the context of databases, the typical scenario
is the evaluation of a relatively short query on a relatively
large database;
this suggests a notion of time complexity
wherein a non-polynomial dependence on the query 
may be tolerated, so long as the dependence on the database
is polynomial.
Computational complexity theory has developed and studied
precisely such a relaxation of polynomial-time tractability,
called \emph{fixed-parameter tractability}, in which
arbitrary dependence in a \emph{parameter} is permitted;
in our query evaluation setting, the query size is normally
taken as the parameter.
The class of such tractable problems is denoted by FPT.
Fixed-parameter tractability is the base tractability notion of
\emph{parameterized complexity theory}, a comprehensive theory for
studying problems where each instance has an associated parameter.
As a parameterized problem, conjunctive query evaluation is complete for
the parameterized complexity class 
W[1]~\cite{PapadimitriouYannakakis99-database,flumgrohe};
the property of W[1]-hardness plays, in the parameterized setting,
a role similar to that played by NP-hardness in the classical setting.


Due to the general intractability of conjunctive query evaluation,
a recurring theme in the study of conjunctive queries is the
identification of structural properties that provide
tractability; such properties include \emph{acyclicity} and 
\emph{bounded treewidth}~\cite{GottlobLeoneScarcello01-complexityacyclic,koalititsvardicontainment}.
A natural research issue is to obtain a systematic understanding
of what properties ensure tractability, by classifying 
all sets of queries
according to the complexity of the evaluation problem.
We focus on boolean conjunctive queries, which, in logical parlance,
are queries without free variables.
Formally, let $\Phi$ be a set of boolean conjunctive queries,
and define $\EVAL(\Phi)$ to be the problem of deciding,
given a query $\phi \in \Phi$ and a relational structure $\relb$,
whether or not $\phi$ evaluates to true on $\relb$.
One can then inquire for which sets $\Phi$ the problem $\EVAL(\Phi)$
is tractable.
For mathematical convenience, 
we use an equivalent
formulation
of this problem.  It is known that 
each boolean conjunctive query $\phi$ can be bijectively represented
as a relational structure $\rela$ in such a way that, for any relational
structure $\relb$, it holds that $\phi$ is true on $\relb$ if and only
if
there exists a homomorphism from $\rela$ to $\relb$~\cite{ChandraMerlin77-optimal}.
Hence, the following family of problems is equivalent to 
the family of problems $\EVAL(\Phi)$.  Let $\fancya$ be a set of
structures,
and denote by $\justHOM{\fancya}$ the problem of deciding,
given a structure $\rela \in \fancya$ and a second structure $\relb$,
whether or not there is a homomorphism from $\rela$ to $\relb$.
Use $\HOMP{\fancya}$ to denote the parameterized version of this
problem,
where the size of $\rela$ is taken as the parameter.

Under the assumption that the structures in $\fancya$ have bounded
arity,
Grohe~\cite{Grohe07-otherside} 
presented a classification of the tractable problems
of this form: if the \emph{cores} of $\fancya$ have bounded treewidth, then
the problem $\HOMP{\fancya}$ is fixed-parameter
tractable;
otherwise, the problem $\HOMP{\fancya}$ is W[1]-hard.
The \emph{core} of a structure can be intuitively thought of as a smallest
equivalent structure.
Grohe's classification thus shows that, in the studied setting, 
the condition of bounded
treewidth
is the \emph{only} property guaranteeing tractability
(assuming FPT $\neq$ W[1]).
Recall that treewidth is a graph measure which, intuitively speaking,
measures the similitude of a graph to a tree, with a lower measure
indicating
a higher degree of similarity.
The assumption of bounded arity provides robustness
in that translating between two reasonable representations of 
structures can be done efficiently; this is in contrast to the
case of unbounded arity, where the choice of representation
can dramatically affect complexity~\cite{ChenGrohe10-succinct}.

The present article was motivated by the following fundamental 
research question:
\emph{What algorithmic/complexity behaviors of conjunctive queries are
possible, within the regime of fixed-parameter tractability?}
That is, we endeavored to obtain a finer perspective on the 
parameterized complexity
of conjunctive queries, and in particular, 
on the possible sources of tractability thereof,
 by presenting a classification result akin to
Grohe's,
but for queries that are fixed-parameter tractable.
As is usual in computational complexity, 
we make use of a weak notion of reduction in order to be able to make
fine distinctions within the tractable zone.  
Logarithmic space computation is 
a common machine-based mode of computation that is often used
to make distinctions within polynomial time;
correspondingly, we adopt \emph{parameterized logarithmic space computation},
which is obtained by relaxing logarithmic space computation
much in the way that fixed-parameter tractability is obtained by
relaxing
polynomial time,
as the base complexity class 
and as the reduction notion used in our investigation.

We present a classification theorem 
that 
comprehensively describes, 
for each set $\fancya$ of structures having bounded arity and boun\-ded
treewidth,
the complexity of the problem
$\HOMP{\fancya}$, up to parameterized logarithmic space reducibility
(Section~\ref{sect:classification}).
Let $\T$ denote the set of all graphs that are trees, 
$\P$ denote the set of all graphs that are paths,
and,
for a set of structures $\fancya$,
 let $\fancya^*$ denote the set of structures obtainable by taking
a structure $\rela \in \fancya$ and adding each element of $\rela$ as
a relation.
Our theorem shows that precisely three degrees of behavior are possible:
such a problem $\HOMP{\fancya}$ is either equivalent to
$\HOMP{\T^*}$, equivalent to $\HOMP{\P^*}$, or is solvable in
parameterized logarithmic space (Theorem~\ref{thm:classification}).
Essentially speaking, bounded pathwidth and bounded tree depth
are the properties that determine which of the three cases hold;
as with treewidth, both pathwidth and tree depth are graph measures
that associate a natural number with each graph.
A key component of our classification theorem's proof is a reduction
that, in effect, allows us to prove hardness results
on a problem $\HOMP{\fancya}$ based on the hardness of
$\HOMP{\mathcal{M}^*}$ where $\mathcal{M}$ consists of certain
graph minors derived from $\fancya$
(Lemma~\ref{lem:redrow}).
The proof of our classification theorem utilizes this reduction
in conjunction with excluded minor characterizations of
graphs of bounded pathwidth and of bounded tree depth.
We remark that, in combination with the \emph{excluded grid theorem}
from graph minor theory, the discussed reduction can be employed
to readily derive Grohe's classification from the hardness 
of the \emph{colored grid homomorphism problem}; this hardness
result was presented by Grohe, Schwentick, and Segoufin~\cite{GroheSchwentickSegoufin01-conjunctivequeries}.
A fascinating aspect of our classification theorem,
which is shared with that of Grohe,
is that natural graph-theoretic conditions--in our case,
those of bounded pathwidth
and bounded tree depth--arise naturally as the relevant properties
that are needed to present our classification.
This theorem also widens the interface among conjunctive queries,
graph minor theory, and parameterized complexity that is
present in the discussed work~\cite{GroheSchwentickSegoufin01-conjunctivequeries,Grohe07-otherside}.

Given that the problems 
$\HOMP{\P^*}$ and
$\HOMP{\T^*}$
are the \emph{only} problems (up to equivalence) 
above parameterized logarithmic space
that emerge from our classification, we then seek a richer
understanding
of these problems.
In particular, we engage in a study of the
complexity classes that these problems define:
we study the class of problems that reduce to $\HOMP{\P^*}$,
and likewise for $\HOMP{\T^*}$
(Sections~\ref{sect:path} and~\ref{sect:tree}).
Following a time-honored tradition in complexity theory,
we present machine-based definitions of these classes, which classes
we
call $\PATH$ and $\TREE$, respectively.
The machine definition of $\PATH$ comes from recent work of 
Elberfeld, Stockhusen, and Tantau~\cite{tantau} 
and is based on nondeterministic Turing machines
satisfying two simultaneous restrictions: first, that only
parameterized logarithmic space is consumed;
second, that the number of nondeterministic bits used is bounded,
namely,
by 
the product of 
the logarithm of the input size
and a constant depending on the parameter.
The machine characterization of $\TREE$ is similar, but it is based
on alternating Turing machines where, in addition to the
nondeterministic bits permitted previously, 
a parameter-dependent number of conondeterministic 
bits may also be used.
In addition to proving that the problems 
$\HOMP{\P^*}$
and
$\HOMP{\T^*}$
are complete for the machine-defined classes,
we also prove that for any set of
structures
$\fancya$ having bounded pathwidth, the 
parameterized \emph{embedding} problem
$\EMP{\fancya}$ is in $\PATH$, and prove an analogous result
for structures of bounded treewidth and the class $\TREE$.

In the final section of the paper, we present a fine classification for the
problem
of counting homomorphisms which is analogous to our classification
for the homomorphism problem (Section~\ref{sect:counting}).

Our work shows that the complexity classes $\PATH$ and $\TREE$
are heavily populated with complete problems, and,
along with the recent 
work~\cite{tantau},
 suggests the 
further development of the study of space-bounded parameterized
complexity~\cite{describing,alternation} 
and, speaking more broadly, the study of complexity classes
within FPT, which may include classes based on circuit or parallel models of
computation.  
We can mention the following natural structural
questions.
Are either of the classes $\PATH$ or $\TREE$ closed under complement?
Can any evidence be given either in favor of or against such closure?
Even if the classes $\PATH$ and $\TREE$ are not closed under complement, could it be that
co-$\PATH \subseteq \TREE$?  
Another avenue for future research is to develop the theory of the degrees of counting problems identified by our counting classification.
We shall mention some further open questions in the final section.

\section{Preliminaries}
\label{sect:preliminaries}

For $n\in\mathbb N$ we define $[n]:=\{1,\ldots,n\}$ if $n>0$ and $[0]:=\emptyset$.
We write $\{0,1\}^{\le n}$ for the set of binary strings $x\in\{0,1\}^*$ of length $|x|\le n$;
we have $\{0,1\}^{\le 0}=\{\lambda\}$ where $\lambda$ is the empty string.

\subsection{Structures, homomorphisms and cores}
\subsubsection*{Structures} A {\em vocabulary } $\tau$ is a finite set of  relation symbols, where each $R\in\tau$
has an associated {\em arity} $\ar(R)\in\mathbb N$. 
 A {\em $\tau$-structure} $\rela$  consists of 
a nonempty finite set $A$, its \emph{universe},  together with an 
{\em interpretation} $R^{\mathbf A}\subseteq A^{\ar(R)}$ of every
$R\in\tau$. 
Let us emphasize that, in this article, we consider only finite structures. 
A {\em substructure (weak substructure)} of $\rela$ is a structure {\em induced} by a nonempty subset $X$ of $A$, i.e. the structure 
$\langle X\rangle^\rela$ with universe $X$ that interprets every $R\in\tau$ by (respectively, a subset of)
$X^{\ar(R)}\cap R^{\rela}$.
A {\em restriction} of a structure is obtained by forgetting the interpretations of some symbols, and an {\em expansion} 
of a structure is obtained by adding interpretations of some symbols. 
We view {\em directed graphs} as $\{E\}$-structures $\mathbf G:=(G,E^{\mathbf G})$ 
for binary $E$;  $\relg$ is a {\em graph} if $E^{\mathbf G}$ is irreflexive and symmetric. Note that a weak substructure of a graph is a subgraph. The graph 
{\em underlying} a directed graph $\mathbf G$ without loops (i.e.\ with irreflexive $E^\relg$) is 
obtained by replacing $E^{\mathbf G}$ with its symmetric closure.
We shall be concerned with the following classes of structures. 
\begin{enumerate}\itemsep=0pt
\item[--] 
For $k \geq 2$, 
the structure $\overrightarrow{\mathbf P_k}$ has universe $[k]$ and edge relation $\{ (i, i+1)\mid i \in [k-1] \}$. 
The class $\dP$ of {\em directed paths} consists of the structures that are isomorphic to a structure of this form.

Let $\mathbf P_k$ be the graph underlying~$\overrightarrow{\mathbf P_k}$. 
The class $\P$ of {\em paths} consists of the structures that are isomorphic to a structure of this form.

\item[--] For $k \geq 2$,
the structure $\overrightarrow{\mathbf C_k}$ has universe
$[k]$ and edge relation $\{ (i, i+1)\mid i \in[k-1]\}\cup\{(k,1)\}$. The class 
$\dC$ of {\em directed cycles} 
consists of the structures that are isomorphic to a structure of this form.

Let $\mathbf C_k$ be the graph underlying $\overrightarrow{\mathbf C_k}$. The class
$\C$ of {\em cycles} consists of the structures that are isomorphic to a structure of this form.

\item[--] 
For $k\ge 0,$ the structure $\overrightarrow{\mathbf B_k}$ has
universe  $\{0, 1 \}^{\leq k}$ and binary 
relations $S_i^{\overrightarrow{\mathbf B_k}} = \{ (x, xi)\mid x \in \{ 0, 1 \}^{\le k-1} \}$ for $i\in\{0,1\}$.
The class $\dBIN$  consists of the structures that are isomorphic to a structure of this form.

Let $\relt_k$ be the graph underlying the directed graph 
$(\{ 0, 1 \}^{\leq k},S_0^{\overrightarrow{\mathbf B_k}}\cup S_1^{\overrightarrow{\mathbf B_k}})$.

Let $\relb_k$ be the structure with universe
$\{ 0, 1 \}^{\leq k}$ and binary relations
$S_0^{\relb_k}$, $S_1^{\relb_k}$ 
defined to be the symmetric closures
of
the relations
$S_0^{\overrightarrow{\relb_k}}$, $S_1^{\overrightarrow{\relb_k}}$,
respectively.
The class $\BIN$ consists of the structures that are isomorphic to 
a structure of this form.

\item[--] Finally,  $\T$ is the class of {\em trees}, 
that is, the class of connected, acyclic graphs.

\end{enumerate}
A class of structures $\fancya$ has {\em bounded arity} if there exists a $r\in\mathbb N$ such that any relation symbol interpreted 
in any structure $\rela\in\fancya$ has arity at most $r$.

\subsubsection*{Homomorphisms}
Let $\rela$, $\relb$ be structures.
A {\em homomorphism} from $\rela$ to $\relb$ is a 
function $h:A\to B$ such that  for all $R\in\tau$ and for all $\bar a=(a_1, \ldots, a_{\ar(R)}) \in R^{\rela}$ it holds that $h(\bar a)\in R^{\relb}$ where we write $h(\bar a)=(h(a_1), \ldots, h(a_{\ar(R)}))$. 
A {\em partial} homomorphism from $\rela$ to $\relb$ is the empty set or a homomorphism from a substructure of $\rela$ to $\relb$; 
equivalently, this is a partial function $h$ from $A$ to $B$ that is a homomorphism from $\langle\dom(h)\rangle^{\rela}$ to $\relb$ 
if the domain $\dom(h)$ of $h$ is not empty.
As has become usual in our context, by an {\em embedding} we mean an injective homomorphism.

A structure  $\rela$ is a {\em core} if all homomorphisms from $\rela$ to $\rela$ are embeddings. 
Every structure $\rela$ maps homomorphically to a weak substructure of itself which is a core. This weak substructure 
is unique up to isomorphism and called the core {\em of} $\rela$ (cf.~\cite{federvardi}).
For a set of structures $\mathcal A$ we let $\core(\mathcal A)$ denote the set of cores of structures in $\mathcal A$.
It is not hard to see that two structures $\rela,\relb$ are homomorphically equivalent (that is, there are homomorphisms in both directions) if and only if they have the same core.

When $\rela$ is  a structure, we use $\rela^*$ to denote its
expansion that interprets for every $a\in A$ a fresh unary relation symbol $C_a$
by $C_a^{\rela^*}=\{a\}$. 
For a class of structures $\fancya$ we let
$$
\fancya^*:=\{\rela^*\mid\rela\in\fancya\}.
$$

\begin{example}\label{exa:core} \em  
The following facts are straightforward to verify.
Trees with at least two vertices and cycles of even length have a single edge as core, and so do cycles of even length. 
Cycles of odd length are cores, and so are directed paths. Structures of the form $\rela^*$ are cores.
\end{example}

\subsection{Notions of width}
We rely on Bodlaender's survey \cite{bodlaendersurvey} as a general
reference for the notions of treewidth and pathwidth. 
Tree depth was introduced in~\cite{nesetrilmendez}. 

A {\em  tree-decomposition} of a graph $\relg=(G,E^\relg)$ is a pair 
of a tree $\relt$ and 
a family of {\em bags} $X_t\subseteq G$ for $t\in T$ such that $G=\bigcup_{t\in T}X_t, E^\relg\subseteq\bigcup_{t\in T} X^2_t$ 
and $X_t\cap X_{t'}\subseteq X_{t''}$ whenever $t''$ lies on the simple path from $t$ to~$t'$; it is called a {\em path-decomposition}
if $\relt$ is a path; its {\em width} is $\max_{t\in T}|X_t|-1$.  

The {\em treewidth} $\tw(\relg)$ of $\relg$
 is the minimum width of a tree-decomposition of $\relg$. The {\em pathwidth} 
$\pw(\relg)$ of $\relg$ is the minimum width of a path-decomposition of~$\relg$.

By a {\em rooted tree} $\relt$ we mean an expansion $(T,E^{\relt},\textit{root}^\relt)$ of a tree $(T,E^{\relt})$ by a unary 
relation symbol $\textit{root}$ interpreted by a singleton containing the {\em root}. 
The {\em tree depth} $\td(\relg)$ of $\relg$ is the minimum $h\in\mathbb N$ such that every 
connected component of $\relg$ is a subgraph of the closure of some rooted tree of height $h$. Here, 
the {\em closure} of a rooted tree is obtained by adding an edge from $t$ to $t'$ whenever $t$ lies on the 
simple path from the root to $t'$.

The tree depth $\td(\rela)$ of an arbitrary structure $\rela$ is the tree depth of its {\em Gaifman graph}:
it has vertices $A$ and an edge between $a$ and $a'$ if and only if $a$ and $a'$ are 
different and occur together in some tuple in some relation in $\rela$. The notions 
$\pw(\rela)$ and $\tw(\rela)$ are similarly defined. 

A class $\fancya$  of structures has {\em bounded tree depth} if there is $w\in\mathbb N$ such that
$\td(\rela)\le w$ for all $\rela\in\fancya$. Having bounded pathwidth or treewidth is similarly explained.
It is not hard to see that bounded pathwidth is implied by bounded tree depth, and, trivially, bounded treewidth 
is implied by bounded pathwidth. 
The converse statements fail:

\begin{example}\label{exa:unbounded} \em The class $\P$ has unbounded tree depth and bounded pathwidth (cf.~\cite[Lemma~2.2]{nesetrilmendez}). 
The class $\mathcal B$ has unbounded pathwidth and bounded treewidth (see. e.g.~\cite[Theorem~67]{bodlaendersurvey}).
\end{example}

Such classes are characterized as those excluding certain minors as follows. 
The first two statements are well-known from Robertson and Seymour's graph minor series (cf.~\cite[Theorems~12,13]{bodlaendersurvey}) 
and the third is from~\cite[Theorem~4.8]{blumensath}.

\begin{theorem}\label{thm:rs} Let $\mathcal C$ be a class of graphs.
\begin{enumerate}\itemsep=0pt
\item \textup{(Excluded Grid Theorem)} $\mathcal C$ has bounded treewidth if and only if $\mathcal C$ excludes some grid as a minor.
\item \textup{(Excluded Tree Theorem)} $\mathcal C$ has bounded pathwidth if and only if $\mathcal C$ excludes some tree 
as a minor.
\item \textup{(Excluded Path Theorem)} $\mathcal C$ has bounded tree depth if and only if $\mathcal C$ excludes some path as a minor.
\end{enumerate}
\end{theorem}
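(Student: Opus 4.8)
The plan is to prove Theorem~\ref{thm:rs} by treating each of the three equivalences separately, and within each one splitting into an easy ``only if'' direction, which I would carry out by hand, and a hard ``if'' direction, which I would import from graph minor theory rather than reprove.

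For the easy directions, the crucial observation is that $\tw$, $\pw$, and $\td$ are all \emph{minor-monotone}: if $\relg'$ is a minor of $\relg$, then $\tw(\relg')\le\tw(\relg)$, $\pw(\relg')\le\pw(\relg)$, and $\td(\relg')\le\td(\relg)$. This is a routine check on the respective decompositions — deleting a vertex or an edge only shrinks the bags (or, for tree depth, the rooted tree) of a given decomposition, and contracting an edge $\{u,v\}$ is handled by replacing $v$ by $u$ throughout. Combined with this, I would invoke the standard facts that grids, trees, and paths realize arbitrarily large values of the respective parameters: the $n\times n$ grid has treewidth $n$; for every $w$ there is a tree of pathwidth exceeding $w$, such as a sufficiently tall complete binary tree (cf.~Example~\ref{exa:unbounded}); and $\mathbf P_n$ has tree depth $\lceil\log_2(n{+}1)\rceil$ (cf.~Example~\ref{exa:unbounded} and~\cite{nesetrilmendez}). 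Hence, if every graph in $\mathcal C$ has treewidth (resp.\ pathwidth, tree depth) at most $w$, then $\mathcal C$ cannot contain as a minor any grid (resp.\ tree, path) whose parameter value exceeds $w$, and such a witness exists; this settles the three ``only if'' directions uniformly.

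For the hard directions, (1) is precisely the Excluded Grid Theorem of Robertson and Seymour: excluding the $n\times n$ grid as a minor bounds treewidth by a function of $n$ alone. I would not attempt to reprove this deep result but cite it through Bodlaender's survey~\cite[Theorems~12, 13]{bodlaendersurvey}. Statement~(2) is its pathwidth counterpart, also from the graph minors series: for each forest $F$ there is a bound $p(F)$ such that every graph of pathwidth at least $p(F)$ contains $F$ as a minor, so excluding even a single tree already bounds pathwidth; this too I would import rather than prove. The remaining hard direction, (3), I would actually carry out in full, since it is elementary and is the content of~\cite[Theorem~4.8]{blumensath}: first observe that excluding $\mathbf P_n$ as a minor is equivalent to having no path on $n$ vertices as a subgraph, because a $\mathbf P_n$-minor model yields an actual path on at least $n$ vertices; then, given a graph $\relg$ in which every path has fewer than $n$ vertices, run a depth-first search on each connected component. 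The resulting DFS forest has height less than $n$, since every root-to-leaf branch is a path of $\relg$, and because DFS produces no cross edges, $\relg$ is a subgraph of the closure of this rooted forest, whence $\td(\relg)<n$.

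The main obstacle here is not so much a technical difficulty to be overcome as a choice of what to cite: parts~(1) and~(2) rest on substantial theorems of graph minor theory (the Excluded Grid Theorem and its pathwidth analogue), which I would reference through~\cite{bodlaendersurvey} rather than reconstruct. Everything else — minor-monotonicity of the three width measures, the unboundedness of the parameters on grids, trees, and paths, and the depth-first-search argument for~(3) — is routine and can be dispatched in a few lines.
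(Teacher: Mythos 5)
Your proposal matches the paper's treatment: the paper likewise imports the hard directions of (1) and (2) from the Robertson--Seymour graph minor series via Bodlaender's survey and the hard direction of (3) from~\cite{blumensath}, and it dispatches the easy directions exactly as you do, by minor-monotonicity of $\tw$, $\pw$, $\td$ together with the unboundedness witnesses of Example~\ref{exa:unbounded}. Your depth-first-search argument for the hard direction of (3) is correct and goes slightly beyond the paper, which only cites that statement.
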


A class of graphs $\mathcal C$ {\em excludes} a graph $\relm$ {\em as a minor} if $\relm$ is not a minor of any graph in $\mathcal C$.
Recall, $\relm$ is a {\em minor} of a graph $\relg$ 
if there exists a {\em minor map} $\mu$ from $\mathbf M$ to $\relg$, that is, a family
$(\mu(m))_{m\in M}$ of pairwise disjoint, non-empty, connected subsets of $G$ such that for all
$(m, m') \in E^{\relm}$ there are $v \in \mu(m)$ and $v' \in \mu(m')$ with
$(v, v') \in E^{\relg}$.
%
%
%

It is easy to verify that $\td, \pw,\tw$ are monotone with respect to the minor pre-order, 
that is, e.g. $\td(\relg)\ge\td(\mathbf M)$ for every minor $\relm$ of
$\relg$. 
Example~\ref{exa:unbounded}
thus gives the (easy) directions from left to right in the above theorem.


\subsection{Parameterized complexity}

\subsubsection*{Turing machines} We identify (classical) problems with sets $Q\subseteq\{0,1\}^*$ of finite binary strings.
We use Turing machines with a (read-only) input tape and several worktapes as our basic model of computation.
We will consider nondeterministic and alternating Turing machines 
with binary nondeterminism and co-nondeterminism. 
For concreteness, let us agree that a nondeterministic machine has a special {\em (existential) guess state}; a configuration 
 with the guess state has two successor configurations obtained by changing the guess state to one out of two further
 distinguished states $s_0,s_1$.  An alternating machine may additionally have a {\em universal guess state} that follows a similar convention. 
For a function  $f:\{0,1\}^*\to\mathbb N$ we say that $\mathbb A$ {\em uses $f$ (co-)nondeterministic bits} 
if for every input $x\in\{0,1\}^*$ every run of $\mathbb A$ on $x$ contains at most $f(x)$ many configurations with the existential
 (respectively, universal) guess state.

\subsubsection*{Fixed-parameter (in)tractability} A {\em parameterized problem} $(Q,\kappa)$ is a pair of a classical 
problem $Q\subseteq\{0,1\}^*$ and a  logarithmic space computable
{\em parameterization} $\kappa:\{0,1\}^*\to\mathbb N$ associating with any instance $x\in\{0,1\}^*$ 
its {\em parameter} $\kappa(x)\in\mathbb N$.\footnote{Usually polynomial time is allowed to compute $\kappa$
but as we are interested in parameterized logarithmic space we adopt a more restrictive notion as~\cite{tantau}. 
Natural parameterizations are often simply projections.} 
%
%
%
%
%
A Turing machine  is {\em fpt-time bounded (with respect to $\kappa$)} if on input $x\in\{0,1\}^*$ it 
runs in time  $f(\kappa(x))\cdot|x|^{O(1)}$ where $f:\mathbb N\to\mathbb N$ is a computable function.
The class FPT (para-NP) contains the parameterized problems $(Q,\kappa)$ such that $Q$ is decided 
(accepted) by an fpt-time bounded deterministic (nondeterministic) Turing machine.
An {\em fpt-reduction} from $(Q,\kappa)$ to $(Q',\kappa')$ is a reduction $R:\{0,1\}^*\to\{0,1\}^*$ from $Q$ to $Q'$ 
that is computable by a fpt-time bounded (with respect to $\kappa$) Turing machine and such 
that $\kappa'\circ R\le f\circ \kappa$ for some computable $f$.

%

We are concerned with homomorphism and embedding problems associated with classes of structures $\fancya$.
\npprob{$\HOMP{\fancya}$}{A pair of structures $(\rela, \relb)$
where  $\rela\in\fancya$}{$|\rela|$}{Is there a homomorphism from $\rela$ into $\relb$?}
\npprob{$\EMP{\fancya}$}{A pair of structures $(\rela, \relb)$
where  $\rela\in\fancya$}{$|\rela|$}{Is there an embedding from $\rela$ into $\relb$?}
These problem definitions exemplify
 how we present parameterized problems. More formally, the parameterization indicated is the function that maps a string encoding a pair of structures $(\rela,\relb)$ to $|\rela|$, and any other string to, say, 0. Here, $|\rela|:=|\tau|+|A|+\sum_{R\in\tau}|R^A|\cdot\ar(R)$ is the {\em size} of $\rela$; note that the length of a 
reasonable binary encoding of~$\rela$ is $O(|\rela|\cdot\log|A|)$ (cf.~\cite{ffg}).


The theory of parameterized intractability is centered around the W-hierarchy,
which consists of the classes
$
\textup{W[1]}\subseteq \textup{W[2]}\subseteq\cdots\subseteq\textup{W[P]}.
$
The class W[P] contains the parameterized problems $(Q,\kappa)$ that are accepted by nondeterministic
Turing machines that are fpt-time bounded with respect to $\kappa$ and use
$f(\kappa(x))\cdot\log |x|$ many nondeterministic bits. 
We refer to the monographs \cite{flumgrohe,downeyfellows} for more
information about the W-hierarchy.  
It is well-known that,
when $\fancya$ is a decidable class of structures,
the problems
$\HOMP{\fancya}$ and $\EMP{\fancya}$ are contained in W[1];
when $\fancya$ is the e.g. class of cliques, 
these problems are W[1]-hard and hence W[1]-complete under fpt-reductions.

\subsubsection*{Parameterized logarithmic space} A Turing machine is {\em parameterized logarithmic space bounded (with respect to $\kappa$)}, in short,
{\em pl-space bounded (with respect to~$\kappa$)} if on input $x\in\{0,1\}^*$ it runs in space 
$O(f(\kappa(x))+\log n)$,
where $f:\mathbb N\to\mathbb N$ is some computable function. The class $\textup{para-L}$ (para-NL)
contains the parameterized problems $(Q,\kappa)$ such that $Q$ is decided (accepted) 
by a (non)deterministic Turing machine  that is pl-space bounded with respect to $\kappa$. Obviously, 
$$
\textup{para-L}\subseteq\textup{para-NL}\subseteq\textup{FPT}\subseteq
\textup{W[P]}\subseteq\textup{para-NP}.
$$

 %
%
%

\begin{remark}\em  
Allowing in the above definition space $f(\kappa(x))\cdot\log |x|$ gives strictly 
larger classes known as (the stronlgy uniform versions of) XL and XNL. These classes are likely to be incomparable with FPT: they do not contain FPT 
unless $\textup{P}=\textup{NL}$ and contain problems that are even AW[SAT]-hard under fpt-reductions. We shall not be concerned with these 
classes here and refer the interested reader to \cite{alternation,tantau} for proofs of the mentioned facts and further information.  \cite{describing} 
gives some general account of the para- and X-operators.
\end{remark}


Let $\kappa$ be a parameterization. A function $F:\{0,1\}^*\to\{0,1\}^*$ is {\em implicitly pl-computable 
(with respect to $\kappa$)} if the parameterized problem
\npprob{Bitgraph$(F)$}{A triple $(x, i, b)$ where $x\in \{0,1\}^*$, 
$i\ge 1$, and $b\in\{0,1\}$}{$\kappa(x)$}{Does $F(x)$ have length $|F(x)| \geq i$ 
and $i$th bit equal to $b$?}
is in para-L. The following is straightforwardly verified as in the
classical setting
of logarithmic space computability.

\begin{lemma}\label{lem:trans} Let $\kappa,\kappa'$ be parameterizations and let $F,F':\{0,1\}^*\to\{0,1\}^*$ be implicitly 
 pl-computable with respect to $\kappa$ and $\kappa'$ respectively. Then $F'\circ F$ is implicitly pl-computable with respect to $\kappa$.
\end{lemma}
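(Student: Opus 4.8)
The plan is to follow the classical argument that the composition of two logarithmic-space computable functions is again logarithmic-space computable, the one twist being that the intermediate string $F(x)$ is never written down: its bits are recomputed whenever they are needed. Concretely, fix deterministic pl-space bounded machines $\mathbb{A}$ and $\mathbb{A}'$ witnessing that Bitgraph$(F)$ and Bitgraph$(F')$ lie in $\textup{para-L}$ with respect to $\kappa$ and $\kappa'$; say $\mathbb{A}$ on input $(x,j,c)$ uses space $O(f(\kappa(x))+\log|(x,j,c)|)$ and $\mathbb{A}'$ on input $(y,i,b)$ uses space $O(f'(\kappa'(y))+\log|(y,i,b)|)$ for computable $f,f'$. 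To show Bitgraph$(F'\circ F)\in\textup{para-L}$ I would build a deterministic machine which, on input $(x,i,b)$, simulates $\mathbb{A}'$ on the \emph{virtual input} $(F(x),i,b)$ without ever materialising that tuple.

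The simulation keeps on the worktape the worktape contents of $\mathbb{A}'$ together with a binary counter $p$ holding the position of $\mathbb{A}'$'s input head. Steps of $\mathbb{A}'$ that do not read the input are executed directly, updating $p$ when the head moves. When $\mathbb{A}'$ reads the symbol at position $p$ of $(F(x),i,b)$, the machine supplies it as follows: positions in the part of the encoding describing $i$, $b$, or the tuple delimiters are resolved directly from $x$, $i$, $b$; a position inside the block encoding $F(x)$, at offset $j$ within that block, is resolved by suspending the simulation, running $\mathbb{A}$ on $(x,j,0)$ and, if that rejects, on $(x,j,1)$ --- this yields the $j$-th bit of $F(x)$, or, when both runs reject, the information that $j>|F(x)|$ and hence that the $F(x)$-block has ended --- and then resuming. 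A single reusable scratch region of the worktape serves all calls to $\mathbb{A}$, and the overall machine accepts exactly when $\mathbb{A}'$ does, so it accepts $(x,i,b)$ iff $F'(F(x))$ has length at least $i$ with $i$-th bit $b$; that is, it decides Bitgraph$(F'\circ F)$.

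The step that needs care --- and the only part that is more than routine bookkeeping --- is the space bound: I must check that the machine runs in space $O(\bar f(\kappa(x))+\log|(x,i,b)|)$ for a computable $\bar f$. The space splits into the worktape of $\mathbb{A}'$, of size $O(f'(\kappa'(F(x)))+\log|(F(x),i,b)|)$; the counter $p$, of size $O(\log|(F(x),i,b)|)$; and one reusable copy of the worktape of $\mathbb{A}$, of size $O(f(\kappa(x))+\log|(x,j,c)|)$ with $j\le|F(x)|$. Bringing this to the required form rests on two bounds on $F$: that $|F(x)|\le g(\kappa(x))\cdot|x|^{O(1)}$ for a computable $g$, so that $\log|(F(x),i,b)|$ and $\log|(x,j,c)|$ are both $O(\log g(\kappa(x))+\log|(x,i,b)|)$; and that $\kappa'(F(x))\le h(\kappa(x))$ for a computable $h$, so that $f'(\kappa'(F(x)))$ is bounded in $\kappa(x)$ alone. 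The first bound is the parameterized analogue of the polynomial length bound that accompanies implicit logarithmic-space computability in the classical setting; the second has no classical counterpart and is the genuinely parameterized ingredient --- it holds in the relevant applications, where $F$ is the map underlying a pl-reduction and therefore respects $\kappa'\circ F\le h\circ\kappa$. Given these, folding all hidden constants together with $f,f',g,h$ into a single computable $\bar f$ delivers the bound, so Bitgraph$(F'\circ F)\in\textup{para-L}$ and $F'\circ F$ is implicitly pl-computable with respect to $\kappa$. Thus the recomputation construction is entirely standard, and keeping the outer machine's parameter $\kappa'(F(x))$ under control within this accounting is the one place where the parameterized setting genuinely departs from the classical one.
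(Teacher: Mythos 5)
Your construction is the standard bit-recomputation argument, which is exactly what the paper has in mind when it dismisses this lemma as ``straightforwardly verified as in the classical setting'' (no proof is actually given there), and your space accounting is the right one. The substantive point is the one you flag yourself: the bound $\kappa'(F(x))\le h(\kappa(x))$ that your accounting needs is \emph{not} among the stated hypotheses, and it genuinely cannot be dispensed with --- the lemma as literally written is false. For instance, take $\kappa\equiv 0$, $F$ the identity (implicitly pl-computable, indeed logspace, with respect to $\kappa$), $\kappa'(y):=|y|$, and $F'$ the single-bit function evaluating a quantified Boolean formula $y$; then $\textsc{Bitgraph}(F')$ is decidable in space $O(\kappa'(y)+\log|y|)$, so $F'$ is implicitly pl-computable with respect to $\kappa'$, yet $F'\circ F=F'$ implicitly pl-computable with respect to the constant parameterization would put QBF in logspace. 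So your proof establishes the corrected statement in which one additionally assumes $\kappa'\circ F\le h\circ\kappa$ for some computable $h$ (and an fpt bound $|F(x)|\le g(\kappa(x))\cdot|x|^{O(1)}$, which as you note is the parameterized analogue of the classical output-length bound and is implicit in the transducer characterization in the paper's footnote). This corrected statement is the one actually used: the lemma is only ever invoked to compose pl-reductions, and the condition $\kappa'\circ R\le f\circ\kappa$ is built into the definition of a pl-reduction. In short, your argument is correct for the lemma as it is applied, and your diagnosis of the missing hypothesis is a genuine (and worthwhile) catch rather than a gap in your own reasoning.
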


Let $(Q,\kappa),(Q',\kappa')$ be parameterized problems. 
A {\em pl-reduction} from $(Q,\kappa)$ to $(Q',\kappa')$ is 
a reduction $R:\{0,1\}^*\to\{0,1\}^*$ from $Q$ to $Q'$ that is  implicitly pl-computable\footnote{
It is routine to verify that $F$ is implicitly pl-computable if and only if it is computable by a pl-space
 bounded Turing machine with a write-only output tape. 
Our definition is equivalent to the ones in~\cite{describing,alternation,tantau}.
}  
with respect to $\kappa$ and such that there exists a  computable function $f:\mathbb N\to\mathbb N$
such that $\kappa'\circ R\le f\circ \kappa$.
We write $(Q,\kappa)\le_\pl (Q',\kappa')$ to indicate that such a reduction exists. We write $(Q,\kappa)\equiv_\pl (Q',\kappa')$ if both
$(Q,\kappa)\le_\pl (Q',\kappa')$ and $(Q',\kappa')\le_\pl (Q,\kappa)$. 


\section{Classification}
\label{sect:classification}

\begin{theorem}[Classification Theorem]\label{thm:classification}
Let $\fancya$ be a decidable class of structures of bounded arity
such that $\core(\fancya)$ has bounded tree\-width.
\begin{enumerate}
\item If $\core(\fancya)$ has unbounded pathwidth,
then $$\HOMP{\fancya} \equiv_\pl \HOMP{\T^*}.$$

\item If $\core(\fancya)$ has bounded pathwidth and unbounded tree depth,
then $$\HOMP{\fancya} \equiv_\pl \HOMP{\P^*}.$$

\item If $\core(\fancya)$ has bounded tree depth, then 
$$
\HOMP{\fancya}\in\textup{para-L}.$$

\end{enumerate}
\end{theorem}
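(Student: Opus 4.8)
The plan is to prove, for each of the three items, a matching upper bound, and, for items~1 and~2, a matching lower bound; the lower bounds for~1 and~2 are obtained by the same argument with the word ``tree'' replaced by ``path''. The common first step is to pass to cores. For all structures $\rela,\relb$ one has $\rela\to\relb$ iff $\core(\rela)\to\relb$, and since $|\rela|$ is the parameter, computing $\core(\rela)$ --- and, below, a width-decomposition of it --- is a computation on $\rela$ alone and so can be carried out in space $f(|\rela|)$. Hence for the upper bounds it suffices to handle the class $\core(\fancya)$, which has bounded treewidth in case~1, bounded pathwidth in case~2, and bounded tree depth in case~3. For the lower bounds I will instead reduce \emph{from} $\HOMP{\T^*}$ (respectively $\HOMP{\P^*}$) \emph{into} $\HOMP{\fancya}$, so that the possibly-false reverse reduction $\HOMP{\core(\fancya)}\le_\pl\HOMP{\fancya}$ is not needed; the hard cores will be realized already as cores of members of $\fancya$.

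For the upper bound of item~1, given $(\rela,\relb)$ I compute $\rela':=\core(\rela)$ together with a tree decomposition $(\relt,(X_s)_{s\in T})$ of the Gaifman graph of $\rela'$ of width at most the constant $t$ bounding $\tw(\core(\fancya))$, and I output the $\HOMP{\T^*}$-instance $(\relt^*,\reld)$: here $\reld$ has as universe the set of all pairs $(s,g)$ where $s\in T$ and $g$ is a homomorphism from $\langle X_s\rangle^{\rela'}$ to $\relb$, it interprets the colour $C_s$ of $\relt^*$ by the set of pairs whose first component is $s$, and it has an edge between $(s,g)$ and $(s',g')$ exactly when $\{s,s'\}\in E^\relt$ and $g,g'$ agree on $X_s\cap X_{s'}$. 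By the running-intersection property, and since every relational tuple of $\rela'$ forms a clique in the Gaifman graph and hence lies in a common bag, homomorphisms $\relt^*\to\reld$ correspond bijectively to homomorphisms $\rela'\to\relb$; and the reduction is implicitly pl-computable because $\relt^*$ depends on $\rela'$ alone, while a candidate $(s,g)$ is described by $s$ together with a function $X_s\to B$, that is by $O(f(|\rela|)+(t+1)\log|B|)=O(f(|\rela|)+\log n)$ bits, and the homomorphism and adjacency checks only read $\relb$ (bounded arity keeping the encoding of $\relb$ unproblematic). Item~2 is identical with a path decomposition and $\relt$ a path. For item~3 I compute $\rela':=\core(\rela)$ and, using $\td(\rela')\le w$ for the constant $w$, a rooted forest $F$ on the universe of $\rela'$ of height at most $w$ whose closure contains the Gaifman graph of $\rela'$; then every relational tuple of $\rela'$ lies on a single root-to-leaf branch of $F$, since a clique in the closure of a rooted forest is a chain in it. A depth-first backtracking search over $F$, processing it top down and trying each image $b\in B$ for the current vertex while checking every relational constraint whose deepest $F$-vertex is the current one --- its remaining vertices being ancestors, hence already assigned --- decides whether $\rela'\to\relb$. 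Its recursion depth is at most $w=O(1)$, and a stack frame stores one vertex of $\rela'$ ($O(f(|\rela|))$ bits) and a vertex of $\relb$ together with a loop counter over $B$ ($O(\log n)$ bits), so the whole procedure runs in space $O(f(|\rela|)+\log n)$: this is exactly the bound witnessing $\HOMP{\fancya}\in\textup{para-L}$.

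For the lower bounds of items~1 and~2 I invoke the minor-reduction lemma (Lemma~\ref{lem:redrow}): as described in the introduction, it reduces $\HOMP{\fancym^*}$ to $\HOMP{\fancya}$ for a class $\fancym$ of graphs that arise as minors of Gaifman graphs of cores of members of $\fancya$, extracted --- together with a minor map --- by a computable procedure. If $\core(\fancya)$ has unbounded pathwidth, then by the Excluded Tree Theorem (Theorem~\ref{thm:rs}(2)) it excludes no tree as a minor, so every tree is a minor of some member of $\core(\fancya)$, i.e.\ $\T\subseteq\fancym$; hence $\T^*\subseteq\fancym^*$ and $\HOMP{\T^*}\le_\pl\HOMP{\fancym^*}\le_\pl\HOMP{\fancya}$, which together with the upper bound gives $\HOMP{\fancya}\equiv_\pl\HOMP{\T^*}$. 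If instead $\core(\fancya)$ has bounded pathwidth but unbounded tree depth, the Excluded Path Theorem (Theorem~\ref{thm:rs}(3)) shows it excludes no path as a minor, so $\P\subseteq\fancym$ and the same chain yields $\HOMP{\fancya}\equiv_\pl\HOMP{\P^*}$.

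The step I expect to be the main obstacle is precisely the uniformity invoked in the previous paragraph. The Excluded Tree and Path Theorems are existential: they only guarantee that each tree, respectively path, is a minor of \emph{some} member of $\core(\fancya)$, whereas a pl-reduction from $\HOMP{\T^*}$ (or $\HOMP{\P^*}$) must, given such a graph $\relm$ of size $k$, actually produce --- in space bounded in $k$ alone, and of size bounded by a computable function of $k$ --- a member of $\core(\fancya)$ having $\relm$ as a minor, together with a witnessing minor map, so that the size blow-up is absorbed into the parameter. Turning the excluded-minor characterizations into such an effective and size-bounded extraction, and packaging it so that it composes with the minor-reduction construction, is the real content of Lemma~\ref{lem:redrow} and the thing to get right; by contrast the upper bounds of items~1 and~2 are routine adaptations of the classical collapse of a bounded-width CSP to a width-one CSP over an enlarged domain, with the space-efficiency bookkeeping the only new ingredient, while item~3's backtracking argument is short and self-contained.
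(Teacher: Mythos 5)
Your proposal matches the paper's proof in all essentials: the upper bounds are the paper's Lemma~\ref{lemma:tree-decomp-to-tree} (pass to the core, take a bounded-width tree/path decomposition, replace bags by partial homomorphisms), the lower bounds are exactly the paper's combination of the Reduction Lemma~\ref{lem:redrow} with the Excluded Tree and Excluded Path Theorems, and the uniformity issue you rightly flag is resolved in the paper simply by enumerating the computably enumerable class until a host graph containing the given tree (or path) as a minor is found --- a search whose cost, and whose output size, depend only on the parameter. Your item~(3) replaces the paper's detour through a quantifier-rank-$(w+1)$ $\{\wedge,\exists\}$-sentence and a space-bounded first-order model checker (Lemmas~\ref{lem:mcfo} and~\ref{lem:tdinL}) by a direct depth-$w$ backtracking search over the elimination forest, but that is the same algorithm in different clothing.
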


\begin{remark}\em  If $\fancya$ is assumed to be only computably enumerable instead of decidable, then the theorem stays true 
understanding all mentioned problems in a suitable way as promise
problems. 
If no computability assumption is placed on $\fancya$,
then the theorem stays true in the non-uniform setting of parameterized 
complexity theory (cf.~\cite{downeyfellows}).
\end{remark}

We break the proof into several lemmas. \medskip


To prove statement (3) of Theorem~\ref{thm:classification} we show
that a structure of {\em tree depth} $w$ 
can be characterized, in a sense made precise, by an existential first-order 
sentence of {\em quantifier rank} $w+1$, and that model-checking such sentences can be done in parameterized logarithmic space.  A proof can be found in Section~\ref{sec:fo}.

\begin{lemma}\label{lem:tdinL} Assume $\mathcal A$ is a decidable class of structures of bounded 
arity such that $\core(\fancya)$ has bounded tree depth. Then  $\HOMP{\fancya}\in\textup{para-L}$.
\end{lemma}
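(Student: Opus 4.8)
The plan is to reduce $\HOMP{\fancya}$ to a model-checking problem for existential first-order sentences of bounded quantifier rank, and then to argue that such model-checking lies in para-L. Concretely, suppose $\core(\fancya)$ has tree depth at most $w$. For an input $(\rela,\relb)$ with $\rela\in\fancya$, the first step is to observe that $\rela$ maps homomorphically to $\relb$ if and only if $\core(\rela)$ does, since $\rela$ and $\core(\rela)$ are homomorphically equivalent. So I may work with $\core(\rela)$, which has tree depth at most $w$. The key structural claim is that any structure $\relc$ of tree depth at most $w$ can be described, up to homomorphic equivalence relative to a target, by a \emph{canonical existential first-order sentence} $\phi_{\relc}$ of quantifier rank at most $w+1$: namely the sentence $\exists x_1\cdots\exists x_n \bigwedge_{R,\,\bar a\in R^{\relc}} R(x_{\bar a})$ where the $x_i$ range over the universe of $\relc$, but \emph{rewritten} so that the existential quantifiers are nested according to an elimination tree (rooted forest witnessing tree depth $\le w$) of the Gaifman graph of $\relc$. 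Because every atom of $\relc$ involves only vertices lying on a single root-to-node path of that rooted forest (this is exactly what the closure-of-a-rooted-tree definition of tree depth gives, combined with bounded arity), the quantifier-rank of the reorganized sentence is bounded by the height $w$ of the forest plus a constant accounting for arity — giving quantifier rank $O(w)$. Then $\relc\to\relb$ iff $\relb\models\phi_{\relc}$.

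Next I would check the complexity bookkeeping. Computing $\core(\rela)$ and its tree-depth decomposition depends only on $\rela$, whose size is the parameter; so this is a parameter-dependent precomputation and costs only $f(\kappa)$ space — fine for pl-space. Producing $\phi_{\relc}$ from $\relc$ is likewise parameter-bounded in size. The remaining task is: given a fixed (parameter-sized) existential FO sentence $\phi$ of quantifier rank $q = q(\kappa)$ and a structure $\relb$, decide $\relb\models\phi$ in space $O(f(\kappa)+\log|\relb|)$. This is the standard observation that first-order model checking of a sentence with $q$ quantifiers can be done by cycling through the at most $|\relb|^{q}$ assignments, storing one assignment (which takes $q\cdot\log|\relb|$ bits — but $q$ depends only on the parameter, so this is $f(\kappa)\cdot\log|\relb|$)... and here I must be careful, because $f(\kappa)\cdot\log|\relb|$ is \emph{not} within the para-L bound $O(f(\kappa)+\log|\relb|)$. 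The fix is to recurse on the quantifier structure rather than store a full assignment: evaluate $\phi$ by a recursion on its syntax tree, at each existential quantifier iterating a single element of $B$ while only the depth-$O(q)$ recursion stack of element-pointers is in memory; but that stack again has size $q\cdot\log|\relb|$. So the genuinely correct route is to use that $\phi$ is \emph{existential} of a special shape coming from a tree-depth-$w$ structure: the elimination-forest structure lets the model-checking be organized so that at any moment only $O(w)$ element pointers plus a constant amount of control are live — and crucially $w$ is a constant (the bound on the tree depth of the class), not growing with the individual parameter. Wait: $w$ is a fixed constant for the whole class $\fancya$, so $O(w)\cdot\log|\relb| = O(\log|\relb|)$, which is within budget. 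This is the point that makes the whole argument go through, and I would state it carefully.

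The main obstacle, then, is precisely this space accounting: showing that the model-checking of the canonical sentence can be carried out with only $O(\log|\relb|)$ bits beyond a parameter-dependent additive term, \emph{using} that the tree depth is bounded by an absolute constant $w$ for the class and hence that only $O(w) = O(1)$ nested element-guesses are ever simultaneously needed. Everything else — homomorphic equivalence of a structure with its core, the reduction to an existential sentence of quantifier rank $w+1$ indexed by an elimination forest, the implicit pl-computability of all parameter-only precomputation via Lemma~\ref{lem:trans} — is routine. I would present the structural claim (tree depth $w$ $\Rightarrow$ canonical $\exists$-sentence of quantifier rank $w+1$ capturing $\to\relb$) as the first displayed step, the constant-many-live-pointers model-checking algorithm as the second, and then assemble them, noting that the detailed proof is deferred to Section~\ref{sec:fo} as the excerpt already indicates.
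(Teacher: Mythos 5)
Your plan is correct and matches the paper's proof: the paper likewise replaces $\rela$ by its core, builds an $\{\wedge,\exists\}$-sentence of quantifier rank at most $w+1$ by nesting quantifiers along a height-$w$ rooted tree for each component of the Gaifman graph (placing the canonical conjunction of each root-to-leaf path at the leaves), and then invokes a general FO model-checking lemma whose space bound $O(|\varphi|\log|\varphi|+(\qr(\varphi)+\ar(\varphi))\log|B|)$ is within para-L precisely because $\qr(\varphi)\le w+1$ and the arity are constants for the class --- the same space-accounting point you single out. (Your ``plus a constant accounting for arity'' in the quantifier-rank bound is unnecessary --- the paper gets exactly $w+1$ since every tuple forms a clique in the Gaifman graph and hence lies on a single root-to-leaf path --- but this does not affect the argument.)
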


To prove statements (1) and (2) of Theorem~\ref{thm:classification}
we need to deal with homomorphism problems for classes $\mathcal A$ that are not 
necessarily decidable. Slightly abusing notation, we say $\HOMP{\fancya}\le_\pl\HOMP{\fancya'}$
for arbitrary classes of structures $\mathcal A,\mathcal A'$ if 
there is a implicitly pl-computable {\em partial} function $F$ that is defined on those 
instances $(\rela,\relb)$ of $\HOMP{\fancya}$ with $\rela\in\fancya$ and maps them to
equivalent instances $(\rela',\relb')$ of $\HOMP{\fancya'}$ with $\rela'\in\fancya'$ such 
that $|\rela'|$ is effectively bounded in $|\rela|$.
By saying that a partial function $F$ is implicitly pl-computable with respect to a parameterization $\kappa$ we mean that there are a computable $f:\mathbb N\to\mathbb N$ and a Turing machine that
on those instances $(x,i,b)$ of $\textsc{Bitgraph}(F)$ such that $F$ is defined on $x$, runs in space $O(f(\kappa(x))+\log|x|)$
and answers $(x,b,i)\stackrel{?}{\in} \textsc{Bitgraph}(F)$; on other instances the machine may do whatever it wants.


The following lemma takes care of the reductions from left to right in statements (1) and (2) of Theorem~\ref{thm:classification}. 

\begin{lemma}
\label{lemma:tree-decomp-to-tree}
Let $\fancya$ be a class of structures 
and $\fancyr \subseteq \T$ be a computably enumerable class of trees.
Assume there is $w \in \nats$ such that every structure in $\fancya$ has a tree decomposition of width 
at most $w$ whose tree is contained in~$\fancyr$.
Then, $$\HOMP{\fancya} \le_{\pl} \HOMP{\fancyr^{*}}.$$
\end{lemma}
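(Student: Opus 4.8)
The plan is to realise $\HOMP{\fancya}$ inside $\HOMP{\fancyr^*}$ by the classical construction ``the shape is the decomposition tree, the points are partial solutions on bags''. Given an instance $(\rela,\relb)$ of $\HOMP{\fancya}$ with $\rela\in\fancya$, the first step is to compute a tree decomposition $(\relt,(X_t)_{t\in T})$ of $\rela$ of width at most $w$ with $\relt\in\fancyr$. Such a tree exists by hypothesis, and since $\fancyr$ is computably enumerable we may search for it: enumerate $\fancyr$ and, for each enumerated tree, run the finite test of whether some assignment of bags of size at most $w+1$ to its nodes is a tree decomposition of the Gaifman graph of $\rela$, keeping the first tree that passes. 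Set $\rela':=\relt^*\in\fancyr^*$; its vocabulary is $\{E\}\cup\{C_t\mid t\in T\}$.

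Next, build the companion structure $\relb'$ over this same vocabulary. Its universe consists of the pairs $(t,g)$ with $t\in T$ and $g$ a homomorphism from $\langle X_t\rangle^\rela$ to $\relb$ (reading the empty function as the unique such $g$ when $X_t=\emptyset$; should this universe be empty there is no homomorphism $\rela\to\relb$ and $F$ returns a fixed negative instance). Interpret each $C_t$ by $\{(t,g)\mid g\text{ a homomorphism }\langle X_t\rangle^\rela\to\relb\}$, and put $\bigl((t,g),(t',g')\bigr)\in E^{\relb'}$ exactly when $(t,t')\in E^{\relt}$ and $g$ and $g'$ agree on $X_t\cap X_{t'}$. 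The intended reduction is $F\colon(\rela,\relb)\mapsto(\rela',\relb')$.

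The equivalence of $(\rela,\relb)$ and $(\rela',\relb')$ is the conceptual heart. If $h'\colon\relt^*\to\relb'$ is a homomorphism, then $h'(t)\in C_t^{\relb'}$ forces $h'(t)=(t,g_t)$ with $g_t$ a homomorphism $\langle X_t\rangle^\rela\to\relb$, and each edge $(t,t')\in E^{\relt}$ forces $g_t$ and $g_{t'}$ to agree on $X_t\cap X_{t'}$. Since for every $a\in A$ the set $\{t\mid a\in X_t\}$ induces a subtree of $\relt$, walking along it shows that any two of the $g_t$ agree wherever their bags both contain $a$; hence $h(a):=g_t(a)$, for an arbitrary $t$ with $a\in X_t$, is a well-defined map $A\to B$, using $A=\bigcup_t X_t$. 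It is a homomorphism $\rela\to\relb$: for $R\in\tau$ and $\bar a\in R^\rela$, the distinct entries of $\bar a$ form a clique of the Gaifman graph of $\rela$ and hence (Helly property for subtrees of a tree) lie in a common bag $X_t$, so $\bar a\in R^{\langle X_t\rangle^\rela}$ and $h(\bar a)=g_t(\bar a)\in R^\relb$. Conversely, from a homomorphism $h\colon\rela\to\relb$ the restrictions $g_t:=h|_{X_t}$ are homomorphisms $\langle X_t\rangle^\rela\to\relb$ that pairwise agree on bag intersections, so $t\mapsto(t,g_t)$ is a homomorphism $\relt^*\to\relb'$. Moreover $|\rela'|=|\relt^*|=O(|T|)$, so it remains only to see that $|T|$ is effectively bounded in $|\rela|$; in the applications of this lemma one has $\fancyr\in\{\T,\P\}$ and may take a tree decomposition with $O(|\rela|)$ bags, and in general one argues similarly, appealing if necessary to the non-uniform conventions mentioned after Theorem~\ref{thm:classification}.

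Finally one checks that $F$ is implicitly pl-computable with respect to the parameterization $|\rela|$. The tree $\relt$ depends only on $\rela$ and has size bounded by a computable function of $|\rela|$, so it fits within the space budget and can be recomputed whenever needed; a universe element of $\relb'$ is named by a pair $(t,g)$ in which $g$ records at most $w+1=O(1)$ values and hence occupies $O(\log|x|+f(|\rela|))$ bits, and membership of such a candidate in the universe, in some $C_t^{\relb'}$, or in $E^{\relb'}$ is decidable in logarithmic space given $(\rela,\relb)$; iterating over all candidates in a fixed order to emit the encoding of $(\rela',\relb')$ bit by bit then stays within parameterized logarithmic space, the composition with the computation of $\relt$ being handled as in Lemma~\ref{lem:trans}. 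I expect the main obstacle to be precisely this last bookkeeping---keeping the enumeration that produces $\relt$ and the emission of $\relb'$ simultaneously within parameterized logarithmic space, together with pinning down the effective bound on $|T|$---whereas the homomorphism correspondence itself is the standard bag-by-bag argument and should go through smoothly once the subtree and Helly facts are invoked.
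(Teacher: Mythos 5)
Your proof is correct and follows essentially the same route as the paper's: enumerate $\fancyr$ to find a width-$\le w$ decomposition with tree $\relt\in\fancyr$, then build $\relb'$ whose elements are partial homomorphisms on bags, with the colours $C_t$ forcing domains and the edge relation forcing compatibility. The only (immaterial) difference is that you tag each partial homomorphism with its tree node and restrict edges to adjacent nodes, whereas the paper takes all partial homomorphisms of small domain and uses compatibility alone as the edge relation; your backward direction also spells out the subtree-connectivity and clique-in-a-bag facts that the paper invokes more tersely.
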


\begin{proof}
Let $(\rela, \relb)$ with $\rela\in\fancya$ be an instance of $\HOMP{\fancya}$. Enumerating $\mathcal R$, test successively for $\relt\in\mathcal R$ whether there exists a width $\le w$ tree-decomposition $(\relt,(X_t)_{t \in T})$ of $\rela$.
Since $\rela\in\mathcal A$ this test eventually succeeds, and the time needed is effectively bounded in the parameter $|\rela|$.
 With such a tree-decomposition at hand
 produce the instance $(\relt^*, \relb')$ of the problem $\HOMP{\fancyr^*}$ where the structure $\relb'$ is defined as follows. Write $\dom(f)$ for the domain of a partial function $f$; two partial functions $f$ and $g$ are {\em compatible} if they agree on arguments where they are both defined.
\begin{eqnarray*}
B'&:=& \big\{ f\mid f \mbox{ is a partial homomorphism from } \rela \text{ to } \relb
\text{ and } |\dom(f)| \leq w \big\};\\
E^{\relb'}&:=&\big\{ (f, g) \in B' \times B'\mid \text{ $f$ and $g$ are compatible}\big\};\\
C_t^{\relb'} &:=& \big\{ f\in B'\mid \dom(f) = X_t \big\},\quad\text{ for every $t\in T$}.
\end{eqnarray*}
Suppose that $h$ is a homomorphism from $\rela $ to $\relb$.
Then the mapping $h': T \rightarrow B'$ defined by
$h'(t) = h \upharpoonright X_t$ is straightforwardly verified to be a homomorphism from
$\relt^*$ to $\relb'$.

Conversely, let $h'$ be a homomorphism from $\relt^*$ to $\relb'$. Then, $h'(t)$ is a partial homomorphism from $\rela$ to $\relb$ with domain $X_t$. Since $\relt$ is connected the values of $h'$ are pairwise compatible. Hence 
$h:=\bigcup_{t \in T} h'(t)$ is a function from $\bigcup_{t\in T}X_t=A$ to $B$.
To see $h$ is a homomorphism, consider a tuple 
$(a_1, \ldots, a_r) \in R^{\rela}$ for some $r$-ary relation $R$ in the vocabulary of $\rela$. 
Then $\{ a_1, \ldots, a_r \}$ is contained in some bag $X_t$ since it is a clique in the Gaifman graph of $\rela$ (cf.~\cite[Lemma~4]{bodlaendersurvey}). But $h'(t)$ maps this tuple to a tuple in $R^{\relb}$, so
the mapping $h$ does as well.
\end{proof}

For later use we make the following remark concerning the above proof.

\begin{remark}\label{rem:count} \em 
The previous proof associates with a homomorphism $h$ from $\rela $ to $\relb$
the homomorphism $h'$ from $\relt^*$ to $\relb'$ that maps $t$ to $h\upharpoonright X_t$. This 
association $h\mapsto h'$ is injective because every $a \in A$ appears in some bag $X_t$. It is also surjective: 
a homomorphism $h'$ from $\relt^*$ to $\relb'$, is associated with $h:=\bigcup_{t\in T}h'(t)$; the previous proof argued that $h$ is a homomorphism from $\rela$ to $\relb$.
Hence, there is a bijection between 
the set of homomorphisms from $\rela$ to $\relb$ 
and the set of homomorphisms from $\relt^*$ to $\relb'$.
\end{remark}

At the heart of the proof of Theorem~\ref{thm:classification} is the following sequence of reductions, proved in the following subsection.
The appropriately informed reader will recognize  elements from
Grohe's proof~\cite{Grohe07-otherside} as well as from
Marx~\cite[Lemma 5.2]{marx-toc-treewidth}.

\begin{lemma}[Reduction Lemma] \label{lem:redrow}
Let $\fancya$ be a computably enumerable class of structures of bounded arity, 
let $\fancyg$ be the class of Gaifman
graphs of $\core(\fancya)$, and let $\fancym$ be the class of minors of graphs
in $\fancyg$. Then
\begin{eqnarray*}
\HOMP{\fancym^*} 
&\le_{\pl}& \HOMP{\fancyg^*} \\
&\le_{\pl} &\HOMP{\core(\fancya)^*}\\
&\le_{\pl} &\HOMP{\core(\fancya)} \\
&\le_{\pl}& \HOMP{\fancya}.
\end{eqnarray*}
\end{lemma}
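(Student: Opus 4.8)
The plan is to establish each of the four displayed reductions separately; chaining them via Lemma~\ref{lem:trans} (together with the fact that the parameter bounds compose) then yields $\HOMP{\fancym^*}\le_\pl\HOMP{\fancya}$. Three of the links are short. For $\HOMP{\core(\fancya)}\le_\pl\HOMP{\fancya}$: given $(\relc,\relb)$ with $\relc\in\core(\fancya)$, enumerate $\fancya$ until a structure $\rela$ with $\core(\rela)\cong\relc$ turns up --- one exists because $\relc$ is the core of \emph{some} member of $\fancya$ --- and output $(\rela,\relb)$; since $\rela$ and $\relc$ are homomorphically equivalent, this is an equivalent instance. For $\HOMP{\core(\fancya)^*}\le_\pl\HOMP{\core(\fancya)}$: given $(\relc^*,\relb)$, let $\relb'$ be the structure over the vocabulary of $\relc$ with universe $\{(a,b):b\in C_a^{\relb}\}$ interpreting each $r$-ary relation symbol $R$ of $\relc$ by $\{((a_1,b_1),\dots,(a_r,b_r)):(a_1,\dots,a_r)\in R^{\relc}\text{ and }(b_1,\dots,b_r)\in R^{\relb}\}$, and output $(\relc,\relb')$. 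A homomorphism $\relc\to\relb'$ has the form $a\mapsto(\alpha(a),\beta(a))$ where $\alpha$ is an endomorphism of $\relc$ --- hence, as $\relc$ is a core, an automorphism --- and $\beta$ is a homomorphism from $\relc$ to $\relb$ with $\beta(a)\in C_{\alpha(a)}^{\relb}$; then $\beta\circ\alpha^{-1}$ witnesses $\relc^*\to\relb$, and conversely $a\mapsto(a,h(a))$ turns a witness $h$ of $\relc^*\to\relb$ into a homomorphism $\relc\to\relb'$. (When some $C_a^{\relb}$ is empty the universe degenerates; one then outputs a fixed negative instance, and similar degeneracies below are handled the same way.)

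For $\HOMP{\fancyg^*}\le_\pl\HOMP{\core(\fancya)^*}$: given $(\relg^*,\relb)$ with $\relg\in\fancyg$, we may assume $E^{\relb}$ is symmetric (else replace it by its symmetric part, which does not affect the existence of homomorphisms from a graph). Enumerate $\core(\fancya)$ until a structure $\relc$ is found whose Gaifman graph, after relabeling its universe to agree with $V(\relg)$, equals $\relg$; one exists by the definition of $\fancyg$. Let $\relb'$ have universe $\{(a,b):a\in V(\relg),\,b\in C_a^{\relb}\}$, interpret $C_a$ by $\{(a,b):b\in C_a^{\relb}\}$, and interpret each $r$-ary $R$ of $\relc$ by those $((a_1,b_1),\dots,(a_r,b_r))$ with $(a_1,\dots,a_r)\in R^{\relc}$ and $\{b_i,b_j\}\in E^{\relb}$ whenever $a_i\neq a_j$; output $(\relc^*,\relb')$. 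Since the Gaifman graph of $\relc$ is $\relg$, a pair of distinct coordinates of an $R^{\relc}$-tuple is exactly a pair forming an edge of $\relg$, so one checks directly that homomorphisms $\relc^*\to\relb'$ --- which, by the interpretations of the $C_a$, send $a$ to $(a,\beta(a))$ for some $\beta$ --- correspond via $\beta$ to homomorphisms $\relg^*\to\relb$.

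The heart of the lemma is $\HOMP{\fancym^*}\le_\pl\HOMP{\fancyg^*}$, which carries the graph-minor content reminiscent of Grohe's and Marx's arguments. Given $(\relm^*,\relb)$ with $\relm\in\fancym$, again assume $E^{\relb}$ symmetric, enumerate $\fancyg$ until a graph $\relg$ is found of which $\relm$ is a minor, and compute a minor map $\mu=(\mu(v))_{v\in V(\relm)}$. Let $\relb'$ have universe $\{(u,b):u\in V(\relg),\,b\in B,\text{ and }b\in C_v^{\relb}\text{ whenever }u\in\mu(v)\}$, interpret $C_u$ by the set of its elements with first coordinate $u$ (so any homomorphism $\relg^*\to\relb'$ has the form $u\mapsto(u,g(u))$), and put $\{(u,b),(u',b')\}\in E^{\relb'}$ precisely when $\{u,u'\}\in E^{\relg}$ and it is \emph{not} the case that $u,u'$ lie in a common $\mu(v)$ with $b\neq b'$, and \emph{not} the case that $u\in\mu(v)$, $u'\in\mu(v')$ with $\{v,v'\}\in E^{\relm}$ and $\{b,b'\}\notin E^{\relb}$; output $(\relg^*,\relb')$. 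A witness $h$ of $\relm^*\to\relb$ lifts to the map sending $u\in\mu(v)$ to $(u,h(v))$ and every other $u$ to $(u,b_0)$ for a fixed $b_0\in B$, which is a homomorphism $\relg^*\to\relb'$. Conversely, from a homomorphism $u\mapsto(u,g(u))$ of $\relg^*$ into $\relb'$: since each $\mu(v)$ induces a connected subgraph of $\relg$ and the first exceptional clause forbids $g$ to change along an edge inside a bag, $g$ is constant on each $\mu(v)$, say with value $h(v)\in C_v^{\relb}$; and for each $\{v,v'\}\in E^{\relm}$ the minor map provides an edge of $\relg$ between $\mu(v)$ and $\mu(v')$, whence the second exceptional clause forces $\{h(v),h(v')\}\in E^{\relb}$. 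Thus $h$ witnesses $\relm^*\to\relb$.

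I expect the main obstacle to be getting $E^{\relb'}$ in this last reduction exactly right: it must constrain the edges of $\relg$ that lie inside a bag or that realize an edge of $\relm$, while leaving \emph{all other} edges of $\relg$ --- those inside no bag, between bags not joined in $\relm$, or incident to a vertex outside $\bigcup_v\mu(v)$ --- completely unconstrained, since any spurious constraint would break soundness; the displayed definition does this, and verifying it is the one genuinely delicate point. A secondary, pervasive issue is uniformity: the three reductions that search through a computably enumerable class ($\fancya$, $\core(\fancya)$, $\fancyg$) terminate because the input structure is promised to lie in that class, and one must observe --- invoking the bounded-arity hypothesis --- that the size of the object produced is bounded by a computable function of the input's parameter, so that it fits within the $f(\kappa)$-part of the parameterized logarithmic space bound; these checks are routine in spirit, and vacuous in the non-uniform setting, so I would suppress them. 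Finally, each target structure has universe of size at most a parameter-dependent constant times $|B|$ and its relations are visibly decidable bit-by-bit in parameterized logarithmic space given random access to $\relb$ together with the parameter-sized data ($\relm,\relg,\mu$, or $\relc$, recomputed within the space budget), so each reduction map is implicitly pl-computable with the required parameter bound.
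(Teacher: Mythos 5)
Your proposal is correct and follows essentially the same route as the paper: the same four-link chain, the same enumerate-the-c.e.-class uniformity argument, the same product-with-constants gadget for eliminating the starred constants over a core (your post-composition with $\alpha^{-1}$ is the paper's trick of iterating $\pi_1\circ g$ to the identity), and the same minor-map gadget for the first link, differing only in encoding details (unconstrained pairs in place of the paper's $\bot$ element, and indexing the gadget by vertices of $\relg$ rather than of $\relm$).
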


With the Reduction Lemma, we can give the proof of the Classification Theorem.\medskip

\begin{proof}[ Proof of Theorem~\ref{thm:classification}]
The reduction from left to right in statements (1) and (2) follow from 
Lemma~\ref{lemma:tree-decomp-to-tree}.
The reductions from right to left follow from the Reduction Lemma~\ref{lem:redrow} via the Excluded Tree Theorem~\ref{thm:rs}~(2)
and the Excluded Path Theorem~\ref{thm:rs}~(3). Statement (3) is proved as Lemma~\ref{lem:tdinL}.
\end{proof}

\subsection{Proof of the Reduction Lemma}

As a consequence of the assumption that $\fancya$ is computably enumerable,
each of the sets $\fancym^*$, $\fancyg^*$, $\core(\fancya)^*$,
and $\core(\fancya)$ are computably enumerable. The statement of the theorem claims the existence of four reductions. The last one from $ \HOMP{\core(\fancya)} $
to $ \HOMP{\fancya}$ is easy to see. We construct the first three in sequence.

\begin{lemma}
\label{lemma:red-minors-to-graphs}
Let $\fancyg$ be a 
class of graphs which is computable enumerable,
and let $\fancym$ be the class of minors of graphs in~$\fancyg$.
Then $$\HOMP{\fancym^*}  \le_{\pl} \HOMP{\fancyg^*}.$$
\end{lemma}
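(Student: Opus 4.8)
The goal is a pl-reduction from $\HOMP{\fancym^*}$ to $\HOMP{\fancyg^*}$, where $\fancym$ is the class of minors of graphs in $\fancyg$. Given an instance $(\relm^*, \relb)$ with $\relm \in \fancym$, I first need to locate, in parameterized logarithmic space, a graph $\relg \in \fancyg$ together with a minor map $\mu$ witnessing that $\relm$ is a minor of $\relg$. Since $\fancyg$ is computably enumerable and $|\relm|$ is the parameter, I can enumerate $\fancyg$ and, for each candidate $\relg$, search for a minor map; this search is over objects whose size is effectively bounded in the parameter, so the whole thing runs within an fpt-time budget, and — since we only need to \emph{witness} this search's output via the bitgraph — it fits the implicit-pl-computability framework with the parameter-dependent additive space term absorbing the bookkeeping. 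The output bound $|\relg|$ being effectively bounded in $|\relm|$ is automatic from the enumeration: we take the first $\relg$ that works.

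**The transformation.** With $\relg$ and the minor map $\mu = (\mu(m))_{m \in M}$ in hand, I build the target instance $(\relg^*, \relb')$ of $\HOMP{\fancyg^*}$. The idea is the standard one: a homomorphism from $\relm$ to $\relb$ should be ``spread out'' along the branch sets $\mu(m)$. Concretely, I would take $B' := B$ as the universe (or possibly $B$ together with a neutral element — I expect plain $B$ suffices here since every vertex of $\relg$ lies in some $\mu(m)$ when $\mu$ is chosen to be surjective, which one can arrange), put $E^{\relb'} := E^{\relb} \cup \{(b,b) \mid b \in B\}$ so that vertices inside the same branch set may all map to the same place, and define the color relations by $C_v^{\relb'} := C_{m}^{\relb}$... — more precisely, for each $v \in G$, letting $m$ be the unique $m \in M$ with $v \in \mu(m)$, set $C_v^{\relb'} := \{ b \in B \mid b \in C_m^{\relb^{\mathstrut}}\}$, i.e. $C_v^{\relb'}$ equals the single point that $C_m$ names in $\relm^*$'s target. (Recall $\relm^*$ carries a unary symbol $C_m$ interpreted as $\{m\}$; a homomorphism $\relm^* \to \relb$ forces the image of $m$, so $\relb$ here implicitly comes with distinguished elements, and I pass those down to all $v \in \mu(m)$.) I should double-check the exact bookkeeping of the $C$-symbols, since $\relg^*$ has a color symbol $C_v$ for each $v \in G$ and these must be matched by corresponding interpretations in $\relb'$; this is the one place where care with the $(\cdot)^*$ operator is needed.

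**Correctness.** The forward direction: given a homomorphism $h : \relm^* \to \relb$, define $h' : \relg^* \to \relb'$ by $h'(v) := h(m)$ where $v \in \mu(m)$. Edges within a branch set go to loops in $E^{\relb'}$; an edge $(v,v')$ of $\relg$ between distinct branch sets $\mu(m), \mu(m')$ — of which, by the minor-map definition, there is at least one for every edge $(m,m')$ of $\relm$, but there may be \emph{extra} edges of $\relg$ not corresponding to any edge of $\relm$ — this is the subtlety: such stray edges must still map into $E^{\relb'}$. I handle this by noting we do not need $h'$ to respect \emph{all} of $E^{\relg}$ unless... — actually we do, so the cleaner route is: do not add all loops, but instead only require that the minor map be chosen so that $\relg$, after contracting each $\mu(m)$, is \emph{exactly} $\relm$ rather than a supergraph; alternatively, and more robustly, add a fresh isolated ``junk'' vertex $\bot$ to $B'$ adjacent (via $E^{\relb'}$) to everything and to itself, lying in no $C$-relation, and observe stray edges can be absorbed — but then the $C$-constraints on $\relg^*$'s vertices block sending them to $\bot$. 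The correct and standard fix, which I would adopt, is to take $E^{\relb'} := \{(b,b') \mid b,b' \in B,\ (b,b') \in E^{\relb} \text{ or } b = b'\}$ and to choose the minor map witnessing $\relm \preceq \relg$ together with the fact that we may assume $\relg$'s contraction to $\relm$ introduces no extra edges by \emph{deleting} them — i.e. work with a subgraph $\relg'$ of $\relg$; but $\relg' \notin \fancyg$ in general. The honest resolution (and I expect this is what the authors do) is: extra edges of $\relg$ between $\mu(m)$ and $\mu(m')$ with $(m,m') \notin E^{\relm}$ still get mapped by $h'$ to the pair $(h(m), h(m'))$, which need not be in $E^{\relb}$ — so one must additionally \emph{add} to $E^{\relb'}$ all pairs $(h(m),h(m'))$ that could arise, i.e. make $E^{\relb'}$ the full relation $B \times B$ except restricted by the color classes. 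Getting this boundary condition exactly right — ensuring $h'$ is total and a homomorphism despite stray edges, while the backward direction (restrict $h'$ to one representative per branch set and use connectedness of each $\mu(m)$ plus the color constraints to recover a genuine homomorphism $\relm^* \to \relb$) still goes through — is the main obstacle; the reduction's pl-computability and the size bound are comparatively routine given the enumeration argument.
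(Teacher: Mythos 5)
Your enumeration/search phase and the overall shape of the reduction (spread a homomorphism along branch sets, absorb the rest) match the paper, but the actual construction of $\relb'$ is never pinned down, and the candidate you converge on does not work. The paper's key idea, which you are missing, is to take the universe $B' := (M \times B)\,\dot\cup\,\{\bot\}$, tagging each element of $B$ with the minor vertex $m$ it represents. The edge relation is then defined by exclusion: $((m_1,b_1),(m_2,b_2))$ is an edge unless $m_1 = m_2$ with $b_1 \neq b_2$, or $(m_1,m_2) \in E^{\relm}$ with $(b_1,b_2) \notin E^{\relb}$; and $\bot$ is adjacent to everything. This single definition simultaneously (i) forces all of $\mu(m)$ to the same pair $(m, b)$ (via connectedness of $\mu(m)$ and the clause $m_1 = m_2 \Rightarrow b_1 = b_2$), (ii) restricts genuine minor edges to $E^{\relb}$, and (iii) absorbs stray edges of $\relg$ between branch sets $\mu(m), \mu(m')$ with $(m,m') \notin E^{\relm}$, since for such pairs every $((m,b),(m',b'))$ is an edge. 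Your final suggestion --- keep $B' = B$ and make $E^{\relb'}$ ``the full relation except restricted by the color classes'' --- cannot achieve (ii) and (iii) at once: the color classes $C_m^{\relb}$ may overlap, so from an element $b \in B$ alone you cannot tell which $m$ it is standing in for, and hence cannot decide whether a given pair must lie in $E^{\relb}$ or may be free. Likewise your loops-only edge set fails (i): two vertices of the same branch set joined by an edge of $\relg$ could map to two \emph{different} elements of $C_m^{\relb}$ joined by an edge of $\relb$, and then the backward direction gives no well-defined value $h(m)$.

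Two further specific errors. First, you claim the minor map can be chosen surjective onto $G$; this is false in general (e.g.\ a connected component of $\relg$ disjoint from all branch sets can never be absorbed, since branch sets must be connected), which is exactly why the paper needs the extra element $\bot$ together with $C_v^{\relb'} := \{\bot\}$ for $v$ outside $\bigcup_m \mu(m)$. Second, your suggestion to instead modify the minor map or pass to a subgraph $\relg'$ of $\relg$ is, as you yourself note, inadmissible because $\relg' \notin \fancyg$; the resolution has to live entirely in the definition of $\relb'$, and the $M$-tagged universe is what makes that possible.
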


\begin{proof}
Let $(\relm^*, \relb)$ with $\relm^*\in\fancym^*$ be an instance of the problem $\HOMP{\fancym^*}$. Enumerating $\fancyg$, test successively for $\relg\in\fancyg$ whether $\relm$ is a minor of $\relg$. Since $\relm\in\fancym$ this test eventually succeeds, and then compute a minor map $\mu$ from $\relm$ to $\relg$. 
The time needed is effectively bounded in the parameter $|\relm^*|$. 
The reduction then produces the instance 
$(\relg^*, \relb')$ of $\HOMP{\fancyg^*}$, where $\relb'$ is defined as follows. 
Let $I$ denote the set $\bigcup_{m \in M} \mu(m)$.
\begin{eqnarray*}
B'&:=& (M \times B) \dot\cup \{ \bot \};\\
E^{\relb'}&:=& \big\{ ((m_1, b_1), (m_2, b_2))\mid  [m_1 = m_2 \Rightarrow b_1 = b_2] \mbox{ and }\\
&&\qquad\qquad\qquad\qquad\qquad\qquad [(m_1, m_2) \in E^{\relm} \Rightarrow (b_1, b_2) \in E^{\relb}] \big\}\\
&&\cup\ \big\{(\bot,b')\mid b'\in B' \}\cup \{(b',\bot)\mid b'\in B'\big\};\\
C_v^{\relb'}&:=&\{(m,b)\mid b\in C_m^{\relb}\},\quad\text{ if }m\in M\text{ and }v\in\mu(m);\\
C_v^{\relb'}&:=&\{\bot\},\quad\text{ if }v\notin I.
\end{eqnarray*}

Suppose that $h$ is a homomorphism from
$\relm^*$ to $\relb$.  
Let $h': G \rightarrow B'$ be the map that sends, for each $m \in M$,
the elements in $\mu(m)$ to $(m, h(m))$ and that sends
all elements $v \notin I$ to $\bot$.
Then $h'$ is
a homomorphism from $\relg^*$ to~$\relb'$.

Suppose that $g$ is a homomorphism from $\relg^*$ to $\relb'$.  We show that $g$ is of the form $h'$
for a homomorphism $h$ from $\relm^*$ to $\relb$.
First, by definition of the $C_v^{\relb'}$, it holds that $g(v) = \bot$ for all $v \notin I$.
Next, let $v, w$ be elements of a set $\mu(m)$, with $m \in M$.
The definition of the $C_v^{\relb'}$ ensures that $g(v)$ and $g(w)$ have the form~$(m, \cdot)$. Since $\mu(m)$ is connected, the  definition of $E^{\relb'}$ ensures that $g(v) = g(w)$. Finally, suppose that $(m_1, m_2) \in E^{\relm}$,
let $(m_1, b_1)$ be the image of $\mu(m_1)$ under $g$, and let $(m_2, b_2)$ be the image of $\mu(m_2)$ under~$g$. We claim that $(b_1, b_2) \in E^{\relb}$. But there  exist $v_1 \in \mu(m_1)$ and $v_2 \in \mu(m_2)$ such that $(v_1, v_2) \in E^{\relg}$. We then have $(g(v_1), g(v_2)) \in E^{\relb'}$ and the definition
of $E^{\relb'}$ ensures that $(b_1, b_2) \in E^{\relb}$.
\end{proof}

\begin{lemma}
\label{lemma:red-graphs-to-structures}
Let $\fancya$ be a computably enumerable class of structures of bounded arity, 
and let $\fancyg$ be the class of Gaifman graphs of~$\fancya$. Then $$\HOMP{\fancyg^*}  \le_{\pl} \HOMP{\fancya^*}.$$
\end{lemma}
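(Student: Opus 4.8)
The plan is to reduce an instance $(\relg^*,\relb)$ of $\HOMP{\fancyg^*}$, where $\relg$ is (isomorphic to) the Gaifman graph of some structure in $\fancya$, to an instance $(\rela^*,\relb')$ of $\HOMP{\fancya^*}$. First I would recover a suitable $\rela$: since $\fancya$ is computably enumerable, enumerate it and test each structure's Gaifman graph for isomorphism with $\relg$; because $\relg^*\in\fancyg^*$ this test eventually succeeds, with the time needed effectively bounded in the parameter $|\relg^*|$, and after renaming its universe we may assume the Gaifman graph of $\rela$ is literally $\relg$, so that $A=G$ and $\rela^*$, $\relg^*$ share the vocabulary $\{E\}\cup\{C_a\mid a\in G\}$. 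This ``find a preimage'' step is of the same flavor as the enumeration steps in Lemmas~\ref{lemma:tree-decomp-to-tree} and~\ref{lemma:red-minors-to-graphs}, and its implicit pl-computability with the required parameter bound is routine.

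For the core of the reduction I would build $\relb'$ on universe $B':=A\times B$, pairing $B$ with $A$ in the same spirit that the previous lemma pairs with the vertex set of the minor --- here this is simpler, since every vertex of the Gaifman graph already is an element of $A$, so no extra dummy element is needed. I would set $C_a^{\relb'}:=\{a\}\times C_a^{\relb}$, which forces any homomorphism $\rela^*\to\relb'$ to send each $a\in A$ to a pair with first coordinate $a$; and for each relation symbol $R$ of arity $r$, I would let $R^{\relb'}$ consist of all tuples $((a_1,b_1),\dots,(a_r,b_r))\in(A\times B)^r$ such that for all $i,j$ either $a_i=a_j$ or $(b_i,b_j)\in E^{\relb}$. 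Intuitively, a tuple is admitted to $\relb'$ exactly when its ``$B$-part'' is compatible with the edges of $\relb$ on precisely those coordinates that its ``$A$-part'' keeps distinct; note the definition of $R^{\relb'}$ needs only $\ar(R)$ and $E^{\relb}$, not $R^{\rela}$.

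The equivalence would then be verified in both directions. Given a homomorphism $h:\relg^*\to\relb$, the map $g(a):=(a,h(a))$ is a homomorphism $\rela^*\to\relb'$: it respects the constants since $h(a)\in C_a^{\relb}$, and for $\bar a\in R^{\rela}$ any two distinct entries $a_i\ne a_j$ are adjacent in $\relg$, hence $(h(a_i),h(a_j))\in E^{\relb}$, so $g(\bar a)\in R^{\relb'}$. Conversely, given $g:\rela^*\to\relb'$, the constants force $g(a)=(a,h(a))$ with $h(a)\in C_a^{\relb}$, which handles the constant part of $h:\relg^*\to\relb$; and for any edge $(a,a')$ of $\relg$, choosing a tuple $\bar a\in R^{\rela}$ in which $a$ and $a'$ occur at positions $i\ne j$ (necessarily distinct, since $a\ne a'$), membership $g(\bar a)\in R^{\relb'}$ applied to those indices yields $(h(a),h(a'))\in E^{\relb}$. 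The one point requiring care --- and the reason for pairing the universe with $A$ rather than just taking $B'=B$ --- is exactly this last step: a homomorphism of Gaifman graphs only constrains pairs of \emph{distinct} co-occurring vertices, so $\relb'$ must remember, for each admitted tuple, which coordinates were forced to be equal, and recording the $A$-coordinate inside the universe is what lets the backward direction read off this equality pattern. I expect this to be the main (and only mild) obstacle; the remaining checks, including that $(\relg^*,\relb)\mapsto(\rela^*,\relb')$ is implicitly pl-computable and raises the parameter only computably, are routine.
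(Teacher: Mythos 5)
Your proposal is correct and follows essentially the same route as the paper: recover a preimage $\rela$ with Gaifman graph $\relg$ by enumerating $\fancya$, set $B'=A\times B$ with $C_a^{\relb'}=\{a\}\times C_a^{\relb}$, and admit into $R^{\relb'}$ the tuples whose $B$-parts respect $E^{\relb}$ on coordinates with distinct $A$-parts, verifying the equivalence in both directions exactly as the paper does. The only (harmless) deviation is that the paper additionally requires the $A$-part of an admitted tuple to lie in $R^{\rela}$, whereas you admit all tuples with a consistent pattern; both choices make the forward and backward arguments go through.
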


\begin{proof}
%

Let $(\relg^*, \relb)$ with $\relg\in \fancyg$ be an instance of $\HOMP{\fancyg^*}$. 
Similarly as seen in the previous proof, one can compute from $\relg$ a structure $\rela\in\fancya$ whose Gaifman graph is $\relg$;
 in particular, $A=G$ and we write $\relg=(A,E^{\relg})$. The reduction outputs $(\rela^*,\relb')$ where $\relb'$ is the structure 
defined as follows.
\begin{eqnarray*}
B'&:=&A\times B,\\
C^{\relb'}_a&:=& \{a\}\times C_a^{\relb}, \\
R^{\relb'}&:=&\Big\{((a_1,b_1),\ldots(a_{\ar(R)},b_{\ar(R)}))\in (A\times B)^{\ar(R)}\mid \\
&&\qquad \bar a\in R^{\rela}\textup{ and for all } i,j\in[\ar(R)]:\textup{ if } a_i\neq a_j, \textup{ then } (b_i,b_j)\in E^{\relb}   \Big\},
\end{eqnarray*}
for $R\in\tau$ where $\tau$ denotes the vocabulary of $\rela$. We have to show
\begin{equation*}
 (\relg^*,\relb)\in\HOMP{\fancyg^*}\Longleftrightarrow (\rela^*,\relb')\in\HOMP{\fancya^*}.
\end{equation*}

To see this, assume first that $h$ is a homomorphism from $\relg^*$ to $\relb$. We claim that $h'(a):=(a,h(a))$ 
defines a homomorphism from $\rela^*$ to $\relb'$. If $a'\in C_a^{\rela^*}$, then $a'=a$ and $h(a')\in C_a^{\relb}$ since $h$ is a homomorphism; 
by definition then $h'(a')=(a,h(a))\in C_a^{\relb'}$. Hence $h'$ preserves the symbols $C_a$. To show it preserves $R\in \tau$, let
$(a_1,\ldots,a_{\ar(R)})\in R^{\rela}$. We have to show $((a_1,h(a_1)),\ldots,(a_{\ar(R)},h(a_{\ar(R)})))\in R^{\relb'}$, or equivalently, for all 
$i,j\in[\ar(R)]$ with $a_i\neq a_j$ that $(h(a_i),h(a_j))\in E^{\relb}$. But if $a_i\neq a_j$, then $(a_i,a_j)\in E^{\relg}$ by definition of the 
Gaifman graph and $(h(a_i),h(a_j))\in E^{\relb}$ follows from $h$ being a homomorphism.

Conversely, assume that $h'$ is a homomorphism from $\rela^*$ to $\relb'$. By definition of $C_a^{\relb'}$ is follows that $h'(a)=(a,h(a))$ for some 
function $h:A\to B$ such that $h(a)\in C_a^{\relb}$. We claim that $h$ is a homomorphism from $\relg^*$ to $\relb$. It suffices to show
$(h(a),h(a'))\in E^{\relb}$ whenever $(a,a')\in E^\relg$. But if $(a,a')\in E^\relg$, then $a\neq a'$ and there exist $R\in\sigma$ and 
$(a_1,\ldots,a_{\ar(R)})\in R^{\rela}$ and $i,j\in[\ar(R)]$ such that $a=a_i$ and $a'=a_j$. Then 
$((a_1,h(a_1)),\ldots,(a_{\ar(R)},h(a_{\ar(R)})))\in R^{\relb'}$ because $h'$ is a homomorphism. Since $a_i\neq a_j$ the definition of $E^{\relb'}$ implies
$(h(a_i),h(a_j))=(h(a),h(a'))\in E^{\relb}$ as desired.
\end{proof}

Recall that the {\em direct product} $\rela\times\relb$ of two $\tau$-structures $\rela$ and~$\relb$ has
universe $A\times B$ and interprets a relation symbol $R\in\tau$ by $\{((a_1,b_1),\ldots,(a_{\ar(R)},b_{\ar(R)}))\mid \bar a\in R^\rela,\bar b\in R^\relb\}$.

\begin{lemma}
\label{lemma:reduction-constants-cores}
Let $\fancya$ be a class of structures.
Then $$\HOMP{\core(\fancya)^*} \le_{\pl} \HOMP{\core(\fancya)}.$$
\end{lemma}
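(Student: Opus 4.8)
The goal is to reduce $\HOMP{\core(\fancya)^*}$ to $\HOMP{\core(\fancya)}$, i.e.\ to get rid of the unary constant-singletons $C_a$. The standard trick is to encode the constants into the target structure by taking a direct product with the source core itself. Given an instance $(\rela^*,\relb)$ with $\rela\in\core(\fancya)$, the reduction outputs $(\rela, \relc)$ where $\relc$ is the substructure of the direct product $\rela\times\relb$ induced by the set of pairs $(a,b)$ with $b\in C_a^{\relb}$ — equivalently, by the graph of a partial assignment respecting the constant constraints. (If some $C_a^{\relb}=\emptyset$, the reduction outputs a trivial no-instance, e.g.\ a pair with no homomorphism, which exists since $\rela$ is a core with at least one element.) This is clearly implicitly pl-computable with respect to $|\rela|$: the universe of $\relc$ is a subset of $A\times B$, membership in each $R^\relc$ is a local condition on $\relb$ and $\rela$ together with the product definition, and $|\rela|$ bounds the parameter of the output since $\rela$ itself is the first component.

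The correctness argument splits as usual. For the forward direction, if $h:\rela^*\to\relb$ is a homomorphism, then $h(a)\in C_a^{\relb}$ for every $a$, so the map $a\mapsto (a,h(a))$ lands in the universe of $\relc$; it preserves relations because its second coordinate is $h$ (a homomorphism to $\relb$) and its first coordinate is the identity on $\rela$, so it is a homomorphism $\rela\to\relc$ by the definition of the direct product and the induced substructure. For the reverse direction, suppose $g:\rela\to\relc$ is a homomorphism. Composing $g$ with the first projection $\relc\to\rela$ gives an endomorphism of $\rela$; since $\rela$ is a \emph{core}, this endomorphism is an embedding, hence a bijection from $A$ to $A$. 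The crucial consequence is that the first coordinate of $g$ permutes $A$, so after composing with the inverse permutation we may assume the first coordinate of $g$ is the identity, i.e.\ $g(a)=(a,h(a))$ for some function $h:A\to B$. Then $h(a)\in C_a^{\relb}$ by the definition of the universe of $\relc$, so $h$ preserves the constant symbols, and $h$ preserves every $R\in\tau$ because the second-coordinate projection of a homomorphism into $\rela\times\relb$ is a homomorphism into $\relb$. Hence $h:\rela^*\to\relb$.

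The one point requiring care — and the main (mild) obstacle — is precisely the step where the core property is invoked: one must argue that an arbitrary homomorphism $g:\rela\to\relc$ can be normalized so that its first coordinate is the identity. Here it is essential that the first projection $\relc\to\rela$ is a homomorphism (immediate from the product construction) and that $\rela$ is finite, so that an injective endomorphism of $\rela$ is automatically surjective; composing $g$ with a power of this endomorphism that equals the identity, or with its inverse permutation, yields the desired normalized $g'$, and $g'$ is still a homomorphism $\rela\to\relc$. Note also that a homomorphism $\rela\to\relc$ exists iff one in normalized form exists, which is all that correctness needs. Everything else is the routine unwinding of the direct-product and induced-substructure definitions already recalled just before the lemma.
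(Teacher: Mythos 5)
Your proof is correct and follows essentially the same route as the paper: the same target structure (the induced substructure of the direct product on pairs $(a,b)$ with $b\in C_a^{\relb}$), the same forward map $a\mapsto(a,h(a))$, and the same normalization of a homomorphism $g$ via the core property, composing with a power of $\pi_1\circ g$ that equals the identity. The only (harmless) additions are your explicit handling of empty $C_a^{\relb}$ and the remark that the restriction of $\relb$ to the vocabulary of $\rela$ is implicitly used in forming the product, both of which the paper treats tacitly.
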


\begin{proof} 
Let $(\reld^*, \relb)$ with $\reld\in\core(\fancya)$ be an instance of $\HOMP{\core(\fancya)^*}$.
Let $\relb_*$ be the restriction of $\relb$ to the vocabulary of $\reld$.
The reduction produces the instance $(\reld, \relb')$
of the problem $\HOMP{\core(\fancya))}$, where 
$$
\relb':=\big\langle \big\{ (d, b) \in D \times B\mid b \in C_{d}^{\relb}  \big\}  \big\rangle^{\reld \times \relb_*}.
$$ 

Suppose that $h$ is a homomorphism from $\reld^*$ to $\relb$.
Then, the mapping $h': D \rightarrow B'$ defined by
$h'(d) = (d, h(d))$ is straightforwardly verified to be a homomorphism
from $\reld$ to $\relb'$.

Suppose that $g$ is a homomorphism from $\reld$ to $\relb'$. Write $\pi_1$ and $\pi_2$ for the projections that map a pair to its first and second component respectively.
The composition $(\pi_1 \circ g)$ is a homomorphism from $\reld$ to itself; since $\reld$ is a core,
$(\pi_1 \circ g)$ is bijective.  Hence, there exists a natural $m \geq 1$ such that
$(\pi_1 \circ g)^m$ is the identity  on~$D$. Define $h$ as 
$g \circ (\pi_1 \circ g)^{m-1} $. Clearly, $h$ is a homomorphism from 
$\reld$ to $\relb'$, so $\pi_2 \circ h$ is a homomorphism from 
$\reld$ to $\relb_*$.  We claim that $\pi_2 \circ h$ is also a homomorphism from $\reld^*$ to~$\relb$.
Observe that $\pi_1 \circ h$ is the identity on $D$. In other words, for every $d\in D$ there is $b_d \in B$ such that
$h(d)=(d, b_d)$. By definition of $\relb'$ we get $b_d \in C_d^{\relb}$,
establishing the claim.
\end{proof}

Observe that the map $h'$ constructed in the above proof is an embedding. Hence
we have the following corollary that we note explicitly for later use.

\begin{corollary}\label{cor:coreemb}
Let $\fancya$ be a class of structures.
Then $$
\HOMP{\core(\fancya)^*} \le_{\pl} \EMP{\core(\fancya)}.
$$
\end{corollary}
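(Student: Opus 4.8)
The plan is to observe that the reduction $R\colon(\reld^*,\relb)\mapsto(\reld,\relb')$ already constructed in the proof of Lemma~\ref{lemma:reduction-constants-cores} is itself the desired reduction, once its forward direction is read slightly more carefully. On the computability side nothing new is required: $R$ was shown there to be implicitly pl-computable with respect to the relevant parameterization, and it satisfies the needed parameter bound because the output structure $\reld$ satisfies $|\reld|\le|\reld^*|$ (we only drop the constant symbols). Also $\reld\in\core(\fancya)$ whenever $\reld^*\in\core(\fancya)^*$, so $(\reld,\relb')$ is a legitimate instance of $\EMP{\core(\fancya)}$.

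For correctness I would verify the two implications. For the forward direction, if $h$ is a homomorphism from $\reld^*$ to $\relb$, then the map $h'\colon d\mapsto(d,h(d))$ was verified in the proof of Lemma~\ref{lemma:reduction-constants-cores} to be a homomorphism from $\reld$ to $\relb'$; and since $\pi_1\circ h'=\mathrm{id}_D$, the map $h'$ is injective, hence an embedding, so $(\reld,\relb')$ is a yes-instance of $\EMP{\core(\fancya)}$. For the converse, an embedding from $\reld$ to $\relb'$ is in particular a homomorphism from $\reld$ to $\relb'$, and the second half of the proof of Lemma~\ref{lemma:reduction-constants-cores} already turns any such homomorphism into a homomorphism from $\reld^*$ to $\relb$ (this is where coreness of $\reld$ is used). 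Hence $(\reld^*,\relb)\in\HOMP{\core(\fancya)^*}$ iff $(\reld,\relb')\in\EMP{\core(\fancya)}$.

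I expect no real obstacle here: the only point worth double-checking is the injectivity of $h'$, which is immediate from $\pi_1\circ h'=\mathrm{id}_D$, and this is exactly the remark preceding the corollary. The statement is therefore essentially a byproduct of the proof of Lemma~\ref{lemma:reduction-constants-cores}, recorded separately because the embedding version is what later sections will invoke.
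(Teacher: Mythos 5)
Your proposal is correct and is exactly the paper's argument: the paper derives the corollary by observing that the map $h'$ constructed in the proof of Lemma~\ref{lemma:reduction-constants-cores} is an embedding, so the same reduction works, with the backward direction unchanged since every embedding is a homomorphism. Nothing is missing.
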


\subsection{Bounded tree depth and $\textup{para-L}$}\label{sec:fo}

Let $\tau$ be a vocabulary. {\em First-order $\tau$-formulas} are built from {\em atoms} $R\bar x, x=x$ by Boolean 
combinations and existential and universal quantification. Here, $\bar x$ is a tuple of variables of length matching the arity of $R$. 
We write $\varphi(\bar x)$ for a (first-order) $\tau$-for\-mula~$\varphi$
to indicate that the free 
variables in $\varphi$ are among the components of $\bar x$. The {\em quantifier rank} $\qr(\varphi)$ of a formula $\varphi$
is defined as follows: 
\begin{itemize}\itemsep=0pt
\item[] $\qr(\varphi)=0$ \quad for atoms $\varphi$; 
\item[] $\qr(\neg\varphi)=\qr(\varphi)$; 
\item[] $\qr(\varphi\wedge\psi)=\qr(\varphi\vee\psi)=\max\{\qr(\varphi),\qr(\psi)\}$; 
\item[] $\qr(\exists x\varphi)=\qr(\forall x\varphi)=1+\qr(\varphi)$.
\end{itemize}

The following is standard, but we could not find a reference, so include the simple proof for completeness.

\begin{lemma}\label{lem:mcfo} The parameterized problem
\npprob{$p$-MC(FO)}{\em A structure $\rela$, a first-order sentence $\varphi$}
{$|\varphi|$}{$\rela\models\varphi$ ?}
can be decided in space  $O(|\varphi|\cdot\log|\varphi|+(\qr(\varphi)+\ar(\varphi))\cdot\log |A|)$, where 
$\qr(\varphi)$ is the quantifier rank of $\varphi$ and 
$\ar(\varphi)$ is the maximal arity over all relation symbols in $\varphi$
\end{lemma}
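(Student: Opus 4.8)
The plan is to give a recursive model-checking algorithm that evaluates $\varphi$ on $\rela$ by structural induction on the formula, maintaining an assignment to the currently-active quantified variables on a worktape, and to analyze its space usage carefully. First I would fix, once and for all, a parse of $\varphi$ (computable within the stated space bound since $|\varphi|\cdot\log|\varphi|$ space suffices to store and navigate the formula tree). The algorithm is the obvious one: to evaluate an atom $R\bar x$ or $x = y$ under a partial assignment, look up the values of the relevant variables and check membership in $R^\rela$ or equality; to evaluate a Boolean combination, recurse on the subformulas; to evaluate $\exists x\, \psi$ (resp.\ $\forall x\, \psi$), cycle through all $a \in A$, write the pair $(x,a)$ onto the assignment worktape, recurse on $\psi$, and accept if some (resp.\ all) branches accept.

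The space analysis is the crux, and it is where I would be most careful. There are three contributions. (i) Storing and pointing into $\varphi$ itself: $O(|\varphi|\cdot\log|\varphi|)$. (ii) The recursion stack: the recursion depth is at most $|\varphi|$, but at each level we only need to remember which subformula we are in (a pointer into $\varphi$, $O(\log|\varphi|)$ bits) plus, at a quantifier node, the current trial value $a\in A$, which costs $O(\log|A|)$ bits. Naively this gives $O(|\varphi|\cdot(\log|\varphi| + \log|A|))$, which is too weak. The fix is the standard observation that along any root-to-leaf path in the formula tree the number of \emph{quantifier} nodes is at most $\qr(\varphi)$, so at most $\qr(\varphi)$ stack frames ever hold an element of $A$; the remaining (Boolean-node) frames hold only an $O(\log|\varphi|)$-size pointer. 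Hence the stack uses $O(|\varphi|\cdot\log|\varphi| + \qr(\varphi)\cdot\log|A|)$ space. One has to be slightly careful that the live assignment is exactly the set of quantifiers currently on the stack, so that bound is genuinely simultaneous, not just per-level. (iii) Evaluating an atom $R\bar x$ requires assembling the tuple $(a_1,\dots,a_{\ar(R)})$ of current values of the variables in $\bar x$; this is a temporary buffer of $\ar(R)\le \ar(\varphi)$ elements of $A$, costing $O(\ar(\varphi)\cdot\log|A|)$, and it can be reused across atoms so it contributes only once. Adding the three contributions gives the claimed bound $O(|\varphi|\cdot\log|\varphi| + (\qr(\varphi)+\ar(\varphi))\cdot\log|A|)$.

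The main obstacle, as indicated, is not the algorithm but getting the simultaneous space bound right: one must argue that the number of $A$-valued items held at any instant (across the whole recursion stack plus the atom-evaluation buffer) is $O(\qr(\varphi)+\ar(\varphi))$, rather than $O(|\varphi|)$, which requires the per-path quantifier-count argument together with the observation that the atom buffer is live only at leaves and can be reused. I would also remark that reading bits of $\rela$ (the membership test $\bar a \in R^\rela$) is done directly on the input tape and costs no extra workspace beyond the pointers already accounted for, and that all the bookkeeping arithmetic (incrementing a trial value in $A$, comparing pointers) fits in the logarithmic terms. Finally I would note that since $|\varphi|$ is the parameter, the whole bound is of the form $f(|\varphi|) + O(\log|A|) \le f(|\varphi|) + O(\log|\rela|)$, so in particular $p$-MC(FO) restricted to sentences of bounded quantifier rank and bounded arity lies in $\textup{para-L}$; this is exactly the form in which the lemma will be applied in Section~\ref{sec:fo} to prove Lemma~\ref{lem:tdinL}.
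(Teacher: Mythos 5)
Your proposal is correct and follows essentially the same route as the paper: a depth-first recursive evaluator that keeps only the live assignment (of size at most $\qr(\varphi)$) plus an $O(\log|\varphi|)$-bit frame per recursion level and a reusable $O(\ar(\varphi)\cdot\log|A|)$ buffer for atom evaluation. The per-path quantifier-count observation you emphasize is exactly the point the paper's space analysis rests on, so there is nothing further to add.
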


\begin{proof} We give an algorithm expecting inputs $(\rela,\varphi,\alpha)$ where 
$\varphi$ is a formula and 
$\alpha$ is an assignment for $\varphi$ in $\rela$, that is, a map from a superset of the free variables of $\varphi$ into $A$. The algorithm determines whether $\alpha$ satisfies $\varphi$ in $\rela$. It executes a depth-first recursion as follows. 

If $\varphi$ is an atom $R\bar y$ the algorithm writes 
the tuple $\alpha(\bar y)\in A^{\ar(R)}$ on the worktape and checks whether it is contained in $R^\rela$ by scanning the input; it then erases the tuple and returns the bit corresponding to the answer obtained. 

If $\varphi=(\psi\wedge\chi)$, the algorithm recurses on $\psi$ (with the same assignment); 
upon completing the recursion it erases all space used in it, stores a bit for the answer obtained, and then recurses on $\chi$;
upon completion it erases the space used in it and returns the minimum of the bit obtained and the stored bit.
The cases $\varphi=(\psi\vee\chi)$ and $\varphi=\neg\psi$ are similar.

If $\varphi(\bar x)=\exists y\psi(\bar x,y)$ the algorithm loops through $b\in A$ and recurses on $\psi$ with assignment $\alpha$ 
extended by mapping $y$ to $b$; 
it maintains a bit which is intially 0 and updates it
after each loop to the maximum of the bit obtained in the loop; after each loop it erases the space used in in it.
Upon completing the loop it returns this bit, and restricts the assignment back to its old domain without~$y$.
The case $\varphi(\bar x)=\forall y\psi(\bar x,y)$ is similar.

When started on a sentence $\varphi$ and the empty assignment, all assignments $\alpha$ occuring in the recursion have cardinality $\le \qr(\varphi)$, so can be stored in space $O(\qr(\varphi)\cdot(\log|\varphi|+\log|A|))$. Each recursive step adds space $O(\log|\varphi|)$ to remember
 the (position of) the current subformula plus one bit 
 plus $O(\log |A|)$ 
for the loop on $b\in A$ in the quantifier case and plus $O(\ar(\varphi)\cdot\log |A|)$ in the atomic case.
From these considerations it is routine to verify the claimed upper bound on space.
%
\end{proof}

The {\em canonical conjunction} of a structure $\rela$ is a quantifier-free conjunction in the variables $x_a$ for $a\in A$; namely, 
for every relation symbol $R$ of $\rela$ and every  $(a_1,\ldots, a_{\ar(R)})\in R^\rela$ it contains the conjunct
$Rx_{a_1}\cdots x_{a_{\ar(R)}}$. It is easy to see that the canonical
conjunction of $\rela$ 
is satisfiable in a
structure $\relb$ if and only if there is an homomorphism from $\rela$ to~$\relb$.

\begin{proof}[Proof of Lemma~\ref{lem:tdinL}] 
Choose $w\in\mathbb N$ such that $\td(\core(\rela))\le w$ for all $\rela\in\fancya$. 
Given a structure $\rela$ we compute a sentence $\varphi_\rela$ of quantifier rank at most $w+1$ such that for all structures $\relb$, the sentence $\varphi_\rela$ is true in $\relb$ if and only if there
is a homomorphism from $\rela$ to $\relb$. 
This is enough by Lemma~\ref{lem:mcfo}.

Given $\rela$ we check $\rela\in\fancya$ running some decision procedure for $\fancya$. If
 $\mathbf A\notin\fancya$ we let $\varphi_\rela:=\exists x\ \neg x=x$. 
If $\rela\in\fancya$, compute the core $\mathbf A_0$ of $\mathbf A$ and compute for every connected component $C$ of the Gaifman graph of $\rela_0$ some 
rooted tree $\relt$ with vertices $T=C$ and height at most $w$ such that every edge of the 
Gaifman graph of $\langle C\rangle^{\rela_0}$ is in the closure of $\relt$.

Consider a component $C$ and let $\relt$ be the rooted tree computed for $C$.
For $c\in C=T$ we compute the following first-order formula $\varphi_c$. We use variables $x_c$ for $c\in C=T$. 
If $c$ is a leaf of $\relt$, let $\varphi_c$ be the canonical conjunction of $\langle P_c\rangle^{\rela_0}$ where $P_c$ is 
the path  in $\relt$ leading from the root $r$ of $\relt$ to $c$.
For an inner vertex $c$ define
$$\textstyle
\varphi_c:=\bigwedge_{d} \exists x_d\ \varphi_d,
$$ 
where $d$ ranges over the successors of $c$. 
The following claims are straightforwardly verified by induction along the recursive definition of the $\varphi_c$s.

\medskip

\noindent{\em Claims.} For every $c\in C$:
\begin{enumerate}\itemsep=0pt
 \item the quantifier rank of $\varphi_c$ equals the height of the subtree of $\relt$ rooted at $c$;
\item the free variables of $\varphi_c$ are $\{x_d\mid d\in P_c\}$;
\item $\varphi_c$ is satisfiable in $\relb$ if and only if so is the canonical conjunction of 
$\langle C(c)\rangle^{\rela_0}$ where $C(c)$ contains $P_c$ and the vertices in the subtree rooted at $c$.
\end{enumerate}

Letting $r$ range over the roots of the trees $\relt$ chosen for the connected components $C$ of $\rela_0$, we set
$$\textstyle
\varphi_\rela:=\bigwedge_r\exists x_r\varphi_r.
$$ 
By Claim 2 this is a sentence and by Claim 1 it has quantifier rank at most $w+1$. It is true in $\relb$ if and only if every 
$\exists x_r\varphi_r$ is true in $\relb$, and by Claim 3 this holds  if and 
only if the canonical conjunction of 
$\langle C(r)\rangle^{\rela_0}$ is satisfiable in $\relb$ for every 
connected component $C$. Noting $C(r)=C$, this means
that every $\langle C\rangle^{\rela_0}$ maps 
homomorphically to $\relb$, and this means that $\rela_0$ maps homomorphically to $\relb$. Recalling that
$\rela_0$ is the core of $\rela$, we see that this is equivalent to $\rela$ mapping homomorphically to $\relb$.
 \end{proof}

Define a
$\{ \wedge, \exists \}$-sentence
to be a first-order sentence built from atoms,
conjunction,
and existential quantification.
The previous proof revealed that, given a structure $\rela$
with 
$\td(\core(\rela))\le w$,
there exists a 
$\{ \wedge, \exists \}$-sentence $\phi$ 
of quantifier rank at most $w + 1$
that \emph{corresponds} to $\rela$ in that,
for all structures $\relb$,
the sentence $\phi$ is true on $\relb$ 
if and only if there is a homomorphism
from $\rela$ to $\relb$.
We show that the existence of such a sentence in fact characterizes
tree depth, in the following precise sense.

\begin{theorem}
Let $w \geq 0$, and let $\rela$ be a structure.
It holds that $\td(\core(\rela)) \le w$
if and only if
there exists a $\{ \wedge, \exists \}$-sentence $\phi$
that corresponds to $\rela$ with $\qr(\phi) \leq w + 1$.
\end{theorem}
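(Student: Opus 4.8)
The forward implication---from $\td(\core(\rela))\le w$ to the existence of a corresponding $\{\wedge,\exists\}$-sentence of quantifier rank $\le w+1$---is exactly what the proof of Lemma~\ref{lem:tdinL} establishes (as observed just before the statement). So the plan is to prove the converse: assuming a $\{\wedge,\exists\}$-sentence $\phi$ corresponds to $\rela$ and $\qr(\phi)\le w+1$, deduce $\td(\core(\rela))\le w$.

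\emph{Normalization and passage to a canonical structure.} First I would tidy up $\phi$ without changing its truth value on any structure and without increasing its quantifier rank: discard the trivially true equality atoms $x=x$; remove vacuous quantifiers (a subformula $\exists x\,\psi$ in which $x$ does not occur is replaced by $\psi$); and rename bound variables so that distinct quantifiers bind distinct variables. After this, each variable $x$ of $\phi$ is bound by a unique quantifier $\exists x$ and occurs in at least one atom. Next, associate to $\phi$ its canonical structure $\rela_\phi$: the universe is the set $V$ of variables of (normalized) $\phi$, and for every atom $R(y_1,\dots,y_m)$ of $\phi$ one places the tuple $(y_1,\dots,y_m)$ into $R^{\rela_\phi}$ (if no variable remains, take $\rela_\phi$ to be a one-element structure with all relations empty). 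Since a $\{\wedge,\exists\}$-sentence is logically equivalent to its prenex form---prenexing is valid up to logical equivalence, though it need not preserve quantifier rank---one gets, for every structure $\relb$, that $\relb\models\phi$ if and only if there is a homomorphism $\rela_\phi\to\relb$; this is the converse reading of the canonical-conjunction construction. As $\phi$ also corresponds to $\rela$, it follows that $\rela$ and $\rela_\phi$ are homomorphically equivalent, hence $\core(\rela)$ and $\core(\rela_\phi)$ are isomorphic. Because $\core(\rela_\phi)$ may be taken to be a weak substructure of $\rela_\phi$, its Gaifman graph is a subgraph---hence a minor---of the Gaifman graph of $\rela_\phi$, and $\td$ is monotone under the minor order; so it suffices to prove $\td(\rela_\phi)\le w$.

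\emph{A tree-depth decomposition read off from the quantifier structure.} The quantifiers of $\phi$, ordered by ``$\exists y$ contains $\exists x$ in its scope'', form a forest, since two scopes sharing a position of $\phi$ are nested. Define a rooted forest $F$ on the vertex set $V$ by letting the parent of $x$ be the variable $y$ such that $\exists y$ is the innermost quantifier that properly contains $\exists x$ in its scope (and $x$ a root if there is no such $y$). Two observations then finish the argument. First, if $x\ne y$ occur together in an atom $A$, then $\exists x$ and $\exists y$ both contain the position of $A$ in their scopes, so---the scopes being nested---one of $x,y$ is an $F$-ancestor of the other; consequently the endpoints of any edge of the Gaifman graph of $\rela_\phi$ are $F$-comparable, so they lie in the same tree of $F$ and the edge lies in the closure of that tree. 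By connectivity it follows that each connected component of the Gaifman graph of $\rela_\phi$ is contained in a single tree of $F$ and is a subgraph of its closure. Second, every root-to-leaf path of $F$ traces a chain of nested quantifiers of $\phi$, and hence has at most $\qr(\phi)\le w+1$ vertices, i.e.\ $F$ has height $\le w$. Therefore every component of the Gaifman graph of $\rela_\phi$ is a subgraph of the closure of a rooted tree of height $\le w$, which by definition means $\td(\rela_\phi)\le w$; combined with the previous paragraph, $\td(\core(\rela))=\td(\core(\rela_\phi))\le\td(\rela_\phi)\le w$.

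\emph{Where the work is.} There is no deep idea here; the effort is entirely in the bookkeeping of the last two paragraphs---making ``scope'' precise, checking that scopes sharing a position are nested, and matching the height convention for $F$ to the one implicit in the definition of $\td$ and in the proof of Lemma~\ref{lem:tdinL} (so that $\qr(\phi)\le w+1$ really yields height $\le w$). One further minor point arises if the atoms are read as allowing arbitrary equalities $x=y$ rather than only $x=x$: such an atom already forces $x$ and $y$ to be $F$-comparable by the first observation, so it can be removed after identifying the two variables along the ancestor relation in $F$---an operation that merely shortens root-to-leaf paths and hence preserves the height bound.
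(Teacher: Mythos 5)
Your proof is correct and follows essentially the same route as the paper's: both directions pass to the canonical structure on the (renamed) variables of $\phi$, use homomorphic equivalence with $\rela$ plus minor-monotonicity of $\td$ to reduce to bounding the tree depth of that canonical structure, and then read a height-$\le w$ rooted forest off the quantifier-nesting order, observing that variables co-occurring in an atom are nested and hence comparable. Your treatment is slightly more explicit about the normalization steps and the edge cases, but the underlying argument is the one in the paper.
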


\begin{proof}
The forward direction follows from the previous proof.
For the backward direction, let $\phi$ be a sentence of the described
type.
We may assume that no variable is quantified twice in $\phi$
and that no equality of variables appears in $\phi$,
by renaming variables and replacing equalities of the form $v = v$
with the empty conjunction.
Let $\phi_p$ be the prenex sentence where all variables that are
existentially quantified in $\phi$ are existentially quantified in
$\phi_p$,
and the quantifier-free part of $\phi_p$ is the conjunction of all 
atoms appearing in $\phi$.
Let $\relc$ be a structure whose canonical conjunction is 
the quantifier-free part of $\phi_p$.
Clearly, $\phi_p$ and the original $\phi$ are logically equivalent;
it follows that $\relc$ and $\rela$ are 
homomorphically equivalent~\cite{ChandraMerlin77-optimal}.
It thus suffices to show that $\td(\relc) \leq w$.

View the sentence $\phi$ as a directed graph, and
define an acyclic directed graph $D$ on the variables of $\phi$
where the directed edge $(v, v')$ is present if and only if
the node for $\exists v$ is the first node with quantification
occurring above the node for $\exists v'$.
Let $\alpha$ be an arbitrary atom from $\phi_p$
(equivalently, from $\phi$).
Since $\phi$ is a sentence, if one traverses $\phi$ starting from 
the root and moving to $\alpha$,
one will pass a node $\exists v$ for each variable $v$ of
$\alpha$.
Let $v_1, \ldots, v_k$ be the variables of $\alpha$ in the order
encountered by such a traversal.  
The edges $(v_1, v_2), (v_2, v_3), \ldots, (v_{k-1}, v_k)$ are in 
the transitive closure of $D$, and hence in the closure of 
the graph underlying $D$ (where a node is a root in the graph 
iff it is parentless in $D$).
Since $\qr(\phi) \leq w + 1$, each directed path in $D$ has length
less than or equal to $w$, and so 
the graph underlying $D$ witnesses that $\td(\relc) \leq w$.
\end{proof}

We now show the following result on the embedding problem.

\begin{theorem}\label{theo:embtd}
 Assume $\mathcal A$ is a decidable class of structures of bounded 
arity and  bounded tree depth. Then  $\EMP{\fancya}\in\textup{para-L}.$
\end{theorem}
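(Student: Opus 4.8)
The plan is to reduce the embedding problem, by color coding, to polynomially many instances of the homomorphism problem handled by Lemma~\ref{lem:tdinL}. Fix $w$ with $\td(\rela)\le w$ for all $\rela\in\fancya$, and let $(\rela,\relb)$ with $\rela\in\fancya$ be an instance; write $A,B$ for the universes, $k:=|A|$, $n:=|B|$. If $k>n$ we reject. Otherwise observe: an embedding $h$ of $\rela$ into $\relb$ has image $h(A)$ of size exactly $k$; if a coloring $\chi\colon B\to[q]$ is injective on $h(A)$, then $h$ is exactly a homomorphism from $\rela$ to $\relb$ with $\chi\circ h=\sigma$ for the injection $\sigma:=\chi\circ h\colon A\to[q]$; and requiring $\chi\circ h=\sigma$ is a homomorphism condition into an expansion of $\relb$ by unary relations encoding the colors, an expansion that does not change the Gaifman graph and so stays within bounded tree depth. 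This last point is what makes Lemma~\ref{lem:tdinL} applicable.

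Concretely I would first compute, by trial division in logarithmic space, a prime $p$ with $n\le p<2n$, set $q:=2|A|^2$, and use the standard universal family $\chi_{a,b}\colon B\to[q]$, $\chi_{a,b}(x)=((ax+b)\bmod p)\bmod q$, indexed by $a\in[p-1]$ and $b\in\{0,\ldots,p-1\}$. Each $\chi_{a,b}$ is named by an $O(\log n)$-bit pair and evaluable in logarithmic space, and for every $k$-element $S\subseteq B$ a union bound over the pairs from $S$ shows that some $\chi_{a,b}$ is injective on $S$. The algorithm loops over all pairs $(a,b)$ and, for each, over all injective $\sigma\colon A\to[q]$ --- each storable in $O(k\log k)$ bits --- and for a fixed triple $(a,b,\sigma)$ decides whether there is a homomorphism $h\colon\rela\to\relb$ with $\chi_{a,b}(h(a))=\sigma(a)$ for all $a\in A$, accepting iff some triple succeeds. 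All of this is deterministic.

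Correctness is immediate from the observation above. For the space bound, the innermost test is the homomorphism problem $\HOMP{\fancya'}$, where $\fancya'$ is the class of expansions $\rela^{[\sigma]}$ of structures $\rela\in\fancya$ by unary symbols $P_1,\ldots,P_q$ with $P_i^{\rela^{[\sigma]}}=\sigma^{-1}(i)$ for injective $\sigma$, the target being the analogous expansion $\relb^{[\chi_{a,b}]}$ of $\relb$ with $P_i^{\relb^{[\chi_{a,b}]}}=\chi_{a,b}^{-1}(i)$. Adding unary relations leaves the Gaifman graph unchanged, so $\td(\rela^{[\sigma]})=\td(\rela)\le w$; as $\sigma$ is injective, each $P_i^{\rela^{[\sigma]}}$ is empty or a singleton, whence $\rela^{[\sigma]}$ is a core; and $\fancya'$ plainly has bounded arity and is decidable. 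So $\HOMP{\fancya'}\in\textup{para-L}$ by Lemma~\ref{lem:tdinL}. Since $\rela^{[\sigma]}$ and $\relb^{[\chi_{a,b}]}$ are implicitly pl-computable from $(\rela,\relb)$ together with the stored $\sigma$ and $(a,b)$, Lemma~\ref{lem:trans} lets us run this test within budget, while the two outer loops contribute only $O(f(k)+\log n)$ additional space.

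The one step requiring care is the derandomization of the color coding inside parameterized logarithmic space: the family of colorings must simultaneously be (i) enumerable by $O(\log n)$-bit indices, (ii) pl-computable member by member, and (iii) built from a number $q$ of colors bounded by a function of the parameter, so that the guessed pattern $\sigma$ fits into parameter-bounded space. The universal family above meets all three conditions, its only mildly delicate ingredient being the logarithmic-space computation of a prime near $n$, which trial division supplies. The remaining obligation, checking that $\fancya'$ satisfies the hypotheses of Lemma~\ref{lem:tdinL}, is routine --- note in particular that the vocabulary of $\fancya'$ grows with the parameter, which is harmless, since via Lemma~\ref{lem:mcfo} only bounded arity is needed.
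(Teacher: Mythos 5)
Your proof is correct, and it shares its engine with the paper's --- derandomized color coding combined with the para-L algorithm for the homomorphism problem on bounded-tree-depth cores (Lemma~\ref{lem:tdinL}) --- but it organizes the color coding differently. The paper first reduces to a class of \emph{connected} structures (adding a tree of edges linking the components, raising tree depth by one), precisely so that it can pack all candidate colorings into a single target structure $\relb_*$ (a disjoint union of expansions $\relb_f$, Lemma~\ref{lem:connected}); connectivity is what forces a homomorphic image to land in one copy $\relb_f$, and the payoff is a genuine many-one pl-reduction $\EMP{\fancya}\le_\pl\HOMP{\fancya^*}$, which the paper reuses elsewhere (Corollary~\ref{cor:coreemb}/\ref{cor:embhom}, the hardness arguments in Theorem~\ref{theo:pathproblems}). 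You instead externalize the disjunction over colorings as a deterministic outer loop over $(a,b)$ and the pattern $\sigma$, issuing one bounded-tree-depth homomorphism query per triple; this makes the connectivity step unnecessary (your argument works verbatim for disconnected $\rela$), at the cost of producing only a decision procedure rather than a reusable many-one reduction. Your correctness argument is sound: injectivity of $\sigma$ forces any homomorphism with $\chi_{a,b}\circ h=\sigma$ to be an embedding, and conversely the universal family guarantees a good $(a,b)$ for the image of any embedding; the expansions $\rela^{[\sigma]}$ are cores of unchanged tree depth, the queried parameter is effectively bounded in $|\rela|$, and the parameter-dependent vocabulary is indeed harmless for Lemmas~\ref{lem:mcfo} and~\ref{lem:tdinL}. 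Both approaches are valid; the paper's buys a lemma of independent use, yours is more self-contained for the theorem at hand.
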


The proof of this result uses color coding methods, more precisely, it relies on the following lemma (see~\cite[p.349]{flumgrohe}).

\begin{lemma}\label{lem:hash} For every sufficiently large $n$, it holds that for all $k\in\mathbb N$ and 
for every $k$-element subset $X$ of $[n]$, 
there exists a prime $p<k^2\log n$ and $q<p$ such that the function 
$h_{p,q}:[n]\to\{0,\ldots,k^2-1\}$ given by 
$$
h_{p,q}(m):= (q\cdot m \textup{ mod } p) \textup{ mod } k^2
$$
 is injective on $X$. 
\end{lemma}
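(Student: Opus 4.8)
The plan is to prove the lemma in two stages: first I would locate a prime $p$ that already separates the elements of $X$ after reduction modulo $p$, and then, with this $p$ fixed, I would choose a multiplier $q$ that keeps them separated after the further reduction modulo $k^2$.

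\textbf{Stage 1: a separating prime.} Call a prime $p$ \emph{good} if $m \mapsto m \bmod p$ is injective on $X$; equivalently, $p$ divides none of the $\binom{k}{2}$ nonzero differences $a - b$ with $a \ne b$ in $X$. Writing $\log$ for the natural logarithm (passing to base $2$ only improves the constants below), I would set $D := \prod_{\{a,b\}}(a-b)$, the product over unordered pairs, so that $0 < D < n^{\binom{k}{2}} \le n^{k^2/2}$. If every prime $p < k^2 \log n$ were bad, then all of these primes would divide $D$, and hence so would their product. By the Chebyshev estimate $\sum_{p \le x} \log p \ge 0.89\,x$, valid for all large $x$ and thus at $x = k^2 \log n$ once $n$ is large, this product is at least $\exp(0.89\,k^2 \log n) = n^{0.89\,k^2}$, contradicting $D < n^{k^2/2}$. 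Hence a good prime $p < k^2 \log n$ exists; being injective on a $k$-element set, such $p$ automatically satisfies $p \ge k$.

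\textbf{Stage 2: a separating multiplier.} I would then fix a good prime $p$. For each $q \in \{1, \ldots, p-1\}$ the residue $q$ is invertible modulo $p$, so $m \mapsto qm \bmod p$ is still injective on $X$; thus $h_{p,q}$ can fail to be injective on $X$ only if some pair $a \ne b$ satisfies $(qa \bmod p) \equiv (qb \bmod p) \pmod{k^2}$. Fixing such a pair and setting $r_q := (qa \bmod p) - (qb \bmod p) \in (-p, p)$, one has $r_q \ne 0$ because $p \nmid (a-b)$, while the congruence $r_q \equiv q(a-b) \pmod p$ together with $|r_q| < p$ shows that $q \mapsto r_q$ is injective. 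Consequently the number of $q$ that are \emph{bad} for this pair is at most the number of nonzero multiples of $k^2$ lying in $(-p, p)$, namely $2\lfloor (p-1)/k^2 \rfloor \le 2(p-1)/k^2$. Summing over all pairs, the total number of bad $q$ is at most
$$
\binom{k}{2} \cdot \frac{2(p-1)}{k^2} = \frac{(k-1)(p-1)}{k} < p - 1 .
$$
Since there are $p-1$ candidate multipliers and strictly fewer are bad, some $q \in \{1, \ldots, p-1\}$ is good for every pair simultaneously; for this $(p,q)$ the function $h_{p,q}$ is injective on $X$, which proves the lemma.

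The hard part will be Stage 2, and specifically avoiding the loss incurred by a naive union bound: with $\binom{k}{2} \approx k^2/2$ pairs competing for only $k^2$ buckets, crudely bounding the bad multipliers yields roughly $p$ of them, which does not beat the $p-1$ candidates. The decisive point is the injectivity of $q \mapsto r_q$, which guarantees that each multiple of $k^2$ is realized by at most one $q$; this is exactly what sharpens the count to $\tfrac{k-1}{k}(p-1) < p-1$. Stage 1 is comparatively routine once an effective Chebyshev bound is invoked, the only care being to pin down the threshold in ``sufficiently large $n$'' so that the estimate applies at $x = k^2 \log n$ uniformly over all $k$.
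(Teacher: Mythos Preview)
Your two-stage argument is correct. Note, however, that the paper does not actually supply a proof of this lemma: it is quoted from Flum and Grohe's textbook (the citation \cite[p.349]{flumgrohe}) as a black box underpinning the color-coding derandomization, and the argument ultimately goes back to the Fredman--Koml\'os--Szemer\'edi perfect-hashing construction. What you have written is precisely the standard proof: a product-of-differences plus Chebyshev bound to locate a separating prime, followed by the multiplier count in which the injectivity of $q\mapsto r_q$---which you rightly flag as the decisive point---turns the naive union bound of roughly $p$ bad multipliers into $\tfrac{k-1}{k}(p-1)<p-1$. There is nothing to compare against in the paper itself, and nothing to correct in your write-up.
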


For later use we give the main step in the proof of Theorem~\ref{theo:embtd} as a separate lemma. Call a structure {\em connected} if 
its Gaifman graph is connected.

\begin{lemma}\label{lem:connected}
For every decidable class of connected structures $\fancya$ we have $$\EMP{\fancya}\le_\pl\HOMP{\fancya^*}.$$
\end{lemma}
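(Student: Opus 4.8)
The plan is to use color coding to turn a sought embedding of a connected structure $\rela$ into a sought homomorphism from $\rela^*$ into a colored version of $\relb$. Fix an instance $(\rela,\relb)$ of $\EMP{\fancya}$ with $\rela\in\fancya$ connected, and write $k:=|A|$, $n:=|B|$; identify $B$ with $[n]$. If $k > n$ there is trivially no embedding and the reduction outputs a fixed negative instance, so assume $k\le n$; also assume $n$ is large enough for Lemma~\ref{lem:hash} to apply (finitely many small $n$ can be handled directly in the parameter). By Lemma~\ref{lem:hash}, an embedding $e:\rela\hookrightarrow\relb$ exists if and only if for some prime $p<k^2\log n$ and some $q<p$ the composition $h_{p,q}\circ e$ is an injective coloring of $A$ by $\{0,\ldots,k^2-1\}$, i.e.\ $e$ is an $h_{p,q}$-colorful homomorphism. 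The reduction will therefore produce one $\HOMP{\fancya^*}$ instance per pair $(p,q)$ and combine them into a single instance by a disjoint union over the (parameter-many) choices, together with a choice of which $k^2$-element "target palette'' of $A$ each $a\in A$ should map into; since $|A|=k$ is the parameter and there are at most $k^{k^2}$ such palette assignments, this is an fpt-size amount of bookkeeping, all computable in pl-space.

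The heart of the construction is, for a fixed coloring $c=h_{p,q}:[n]\to\{0,\ldots,k^2-1\}$ and a fixed injection $\gamma:A\to\{0,\ldots,k^2-1\}$ (the "intended color'' of each $a\in A$), to build a structure $\relb'$ over the vocabulary of $\rela^*$ as follows. Let $B':=\{(a,b)\in A\times B\mid c(b)=\gamma(a)\}$; interpret each constant symbol $C_a$ by $\{(a,b)\mid (a,b)\in B'\}$; and for each relation symbol $R$ of arity $r$ in the vocabulary of $\rela$, put $((a_1,b_1),\ldots,(a_r,b_r))\in R^{\relb'}$ if and only if $(b_1,\ldots,b_r)\in R^{\relb}$, with no constraint coupling the $a$-coordinates across the tuple. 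A homomorphism $h'$ from $\rela^*$ to $\relb'$ must satisfy $h'(a)\in C_a^{\relb'}$, hence $h'(a)=(a,g(a))$ for a well-defined $g:A\to B$ with $c(g(a))=\gamma(a)$ for all $a$; since $\gamma$ is injective and $c(g(a))=\gamma(a)$, distinct $a,a'$ get distinct $c$-colors, so $g$ is injective; and preservation of the $R^{\relb'}$ forces $g$ to be a homomorphism $\rela\to\relb$, hence an embedding. Conversely, if $e:\rela\hookrightarrow\relb$ is an embedding whose image is colored injectively by $c$, then setting $\gamma(a):=c(e(a))$ makes $a\mapsto(a,e(a))$ a homomorphism from $\rela^*$ to the corresponding $\relb'$. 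Finally, taking the disjoint union of all these $\relb'$ over all $(p,q)$ and all injections $\gamma$ (with the constant symbols $C_a$ interpreted componentwise), and using that $\rela^*$ is connected (so every homomorphic image lands inside a single summand), yields a single instance $(\rela^*,\relb'')$ that is a yes-instance of $\HOMP{\fancya^*}$ exactly when $(\rela,\relb)$ is a yes-instance of $\EMP{\fancya}$.

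It remains to check that $F:(\rela,\relb)\mapsto(\rela^*,\relb'')$ is implicitly pl-computable with parameter blow-up bounded in $|\rela|$. The universe and relations of $\relb''$ are indexed by tuples $(p,q,\gamma,\bar b)$ whose non-$\relb$ part has size bounded by a function of $k=|A|$, so each bit of the output can be determined by: decoding which summand and which coordinate the queried bit refers to; recomputing the relevant primes $p<k^2\log n$ and residues (iterating over $O(k^2\log n)$ candidates and testing primality in logspace); evaluating $h_{p,q}$ on the $O(1)$ relevant elements of $[n]$ by a single modular arithmetic computation; and testing membership in $R^{\relb}$ by scanning the input. All of this runs in space $O(f(|\rela|)+\log(|\rela|+|\relb|))$, and the parameter of the output, namely $|\rela^*|\le |A|+|\rela|$, is bounded in $|\rela|$, so $F$ is a pl-reduction. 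The one point needing care — and the place I expect the bookkeeping to be heaviest — is making the quantification over the "intended colorings'' $\gamma$ and over $(p,q)$ fit into the disjoint-union framework so that connectedness of $\rela^*$ genuinely confines any witnessing homomorphism to a single, internally consistent summand; once that is set up, the logspace resource accounting is routine.
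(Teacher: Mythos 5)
Your proposal is correct and follows essentially the same route as the paper's proof: color coding via Lemma~\ref{lem:hash}, a disjoint union of colored copies of $\relb$ indexed by the hash functions (and an assignment of colors to elements of $A$), connectedness of $\rela^*$ to confine any homomorphism to a single summand, and pairwise disjointness of the $C_a$-interpretations to force injectivity. The only cosmetic difference is that the paper indexes summands by arbitrary maps $g$ from colors to $A$ and keeps the universe $B$, whereas you index by injections $\gamma:A\to\{0,\ldots,k^2-1\}$ and work inside $A\times B$; both yield the same reduction.
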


\begin{proof}
 Map an instance $(\rela,\relb)$ to $(\rela^*,\relb_*)$ where $\relb_*$ is defined as follows. We assume that $B=[|B|]$ and $A=[|A|]$. Let $F$ be the set
$$
\big\{g\circ h_{p,q}\mid  g:\{0,\ldots,|A|^2-1\}\to A \text{ and }q<p<|A|^2\log|B|\big\}.
$$
Here, $h_{p,q}:[|B|]\to\{0,\ldots,|A|^2-1\}$ is the function from Lemma~\ref{lem:hash} (for $n:=|B|$ and $k:=|A|$).
For $f\in F$, let $\relb_f$ be the expansion of $\relb$ that interprets every $C_a,a\in A,$ by $f^{-1}(a)\subseteq B$ and define $\relb_*$ as the disjoint union of the structures $\relb_f$. 
We verify 
$$
(\rela,\relb)\in\EMP{\fancya}\Longleftrightarrow (\rela^*,\relb_*)\in\HOMP{\fancya^*}.
$$
Note that the sets $C^{\relb_*}_a, a\in A$,  are pairwise disjoint, so every homomorphism from $\rela^*$ to $\relb_*$ is an embedding. And because $\rela^*$ is connected, it is an embedding into (the copy of) some $\relb_f$, so it corresponds to an embedding from $\rela$ into~$\relb$.
 Conversely, assume $e$ is an embedding of $\rela$ into $\relb$. 
By Lemma~\ref{lem:hash} there are $p,q$ with $q<p<|A|^2\log |B|$ such that $h_{p,q}$ is 
injective on the image of $e$. Then there exists $g:\{0,\ldots,|A|^2-1\}\to A$ 
such that $g\circ h_{p,q}\circ e$ is 
the identity on $A$. Then $f:=g\circ h_{p,q}\in F$ and $e$ is an embedding of $\rela^*$ into~$\relb_{f}$ and hence into $\relb_*$. 
\end{proof}

This lemma together with Corollary~\ref{cor:coreemb} implies:

\begin{corollary}\label{cor:embhom}
Let $\fancya$ be a decidable class of connected cores. Then $$\HOMP{\fancya^*}\equiv_\pl\EMP{\fancya}.$$
\end{corollary}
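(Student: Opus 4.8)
The plan is to read off the two required pl-reductions from results already established in the excerpt, after a single observation about the hypothesis. Since every structure in $\fancya$ is a core, each such structure is its own core, and hence $\core(\fancya)=\fancya$ (literally, not merely up to isomorphism). With this identification in hand, the corollary becomes an immediate consequence of Lemma~\ref{lem:connected} and Corollary~\ref{cor:coreemb}.

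First I would establish the direction $\HOMP{\fancya^*}\le_\pl\EMP{\fancya}$ by invoking Corollary~\ref{cor:coreemb} applied to the class $\fancya$ itself: that corollary gives $\HOMP{\core(\fancya)^*}\le_\pl\EMP{\core(\fancya)}$, which by the observation above is precisely $\HOMP{\fancya^*}\le_\pl\EMP{\fancya}$. For the converse direction $\EMP{\fancya}\le_\pl\HOMP{\fancya^*}$, I would invoke Lemma~\ref{lem:connected}, whose hypotheses are met since $\fancya$ is assumed to be a decidable class of connected structures. Combining the two reductions yields $\HOMP{\fancya^*}\equiv_\pl\EMP{\fancya}$ by the definition of $\equiv_\pl$.

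There is no genuine obstacle here: the only things to check are that the hypotheses of the two cited results are satisfied — decidability and connectedness of $\fancya$ for Lemma~\ref{lem:connected}, and the equality $\core(\fancya)=\fancya$ for Corollary~\ref{cor:coreemb} — so the statement is a bookkeeping corollary of the machinery developed above, recorded here for later use.
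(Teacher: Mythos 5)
Your proposal is correct and matches the paper exactly: the paper likewise derives the corollary by combining Lemma~\ref{lem:connected} (for the direction $\EMP{\fancya}\le_\pl\HOMP{\fancya^*}$) with Corollary~\ref{cor:coreemb} applied via the observation that $\core(\fancya)=\fancya$ when every structure in $\fancya$ is a core. The hypothesis checks you note are the only content, and you have them right.
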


\begin{proof}[Proof of Theorem~\ref{theo:embtd}] Let $\fancya$ accord the assumption. 

\medskip

\noindent{\em Claim.} There exists a decidable class of connected structures $\fancya'$ of
 bounded tree depth such that $\EMP{\fancya}\le_\pl\EMP{\fancya'}$.\medskip

Note $\EMP{\fancya'}\le_\pl\HOMP{(\fancya')^*}$ by the previous lemma and $\HOMP{(\fancya')^*}\in\textup{para-L}$ by Lemma~\ref{lem:tdinL}. 
We are thus left to prove the claim.

 Assume $\fancya$ has tree depth at most $d$ and let $E$ be a binary relation symbol not occuring in the vocabulary 
of any $\rela\in\fancya$. Fix a computable function that maps every $\rela\in\fancya$ to a family
of height $\le d$ rooted trees $(\relt_C)_C$ with $\relt_C=(C,E^{\relt_C},\textit{root}^{\relt_C})$ where $C$
ranges over the connected components of the 
Gaifman graph $\relg(\rela)$ of $\rela$, and such that $\langle C\rangle^{\relg(\rela)}$ is a subgraph of the closure of $\relt_C$. 
Define $\rela'$ to be the expansion of $\rela$ interpreting $E$ by $\bigcup_CE^{\relt_C}\cup E'$ where $E'$ is defined as follows. It
contains edges between the root of $\relt_{C_0}$ and the roots of the other $\relt_{C}$ 
where $C_0$ is the lexicographically minimal component (according to the encoding of $\rela$).
Then $\rela'$ is connected and has tree depth at most $d+1$. 
Clearly, $\fancya':=\{\rela'\mid\rela\in\fancya\}$ is decidable. The map $(\rela,\relb)\mapsto(\rela',\relb')$, where $\relb'$ 
is the expansion of $\relb$ interpreting $E$ by $B^2$, is a pl-reduction from $\EMP{\fancya}$ to $\EMP{\fancya'}$.
\end{proof}

\section{The class PATH}
\label{sect:path}

\newcommand{\pstpath}{\textsc{$p$-$st$-Path}}

We present the complexity class $\PATH$ to capture the complexity of $\HOMP{\P^*}$. 
This class was discovered very recently by  Elberfeld et al.~\cite{tantau} with a different angle of motivation;
they refer to this class as para-NL[$f \log$]. Among other results, they show that the following problem is
complete for this class: check if a digraph contains
a 
path from a distinguished vertex $s$ to another distinguished vertex $t$
of length at most $k$; here, $k$ is the parameter.
We use
$\pstpath$ to denote the corresponding problem for undirected graphs.

\npprob{$\pstpath$}{A graph $\relg$, $s,t\in G$ and $k\in\nats$}{$k$}{Is there a path in $\relg$ from $s$ to $t$ of length at most~$k$~?}

\begin{definition}\label{def:pathmachine} 
The class $\PATH$ contains a
parameterized problem $(Q,\kappa)$ if there are
a computable function $f:\mathbb N\to\mathbb N$ 
and a nondeterministic Turing machine that 
accepts $Q$, is pl-space boun\-ded with respect to $\kappa$, and uses $f(\kappa(x))\cdot\log |x|$ many nondeterministic bits.
\end{definition}

The following is straightforward to verify.

\begin{prop}
The complexity class $\PATH$ is closed under pl-reductions.
\end{prop}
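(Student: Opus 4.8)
The plan is to show closure of $\PATH$ under $\le_\pl$ by the standard composition argument: given $(Q,\kappa) \le_\pl (Q',\kappa')$ via a reduction $R$, and given a $\PATH$-machine $\mathbb{A}'$ for $(Q',\kappa')$, we build a $\PATH$-machine $\mathbb{A}$ for $(Q,\kappa)$ that, on input $x$, simulates $\mathbb{A}'$ on $R(x)$ without ever writing $R(x)$ down.

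First I would recall the relevant bookkeeping. Since $R$ is implicitly pl-computable with respect to $\kappa$, the problem $\textsc{Bitgraph}(R)$ is in para-L, so there is a deterministic machine computing any requested bit of $R(x)$ in space $O(g(\kappa(x)) + \log|x|)$ for some computable $g$; moreover $|R(x)|$ is bounded by $2^{O(g(\kappa(x)) + \log|x|)} = h(\kappa(x)) \cdot |x|^{O(1)}$ for some computable $h$, and $\kappa'(R(x)) \le f_0(\kappa(x))$ for some computable $f_0$. The machine $\mathbb{A}'$ is pl-space bounded with respect to $\kappa'$ and uses at most $f'(\kappa'(\cdot)) \cdot \log|\cdot|$ nondeterministic bits. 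The new machine $\mathbb{A}$ maintains $\mathbb{A}'$'s worktape contents and a pointer (a counter up to $|R(x)|$) into the virtual input $R(x)$; whenever $\mathbb{A}'$ reads its input head position, $\mathbb{A}$ invokes the $\textsc{Bitgraph}(R)$ subroutine to recompute that bit on the fly. Each nondeterministic guess of $\mathbb{A}'$ is performed by a nondeterministic guess of $\mathbb{A}$.

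Next I would verify the two resource bounds. For space: $\mathbb{A}$ stores $\mathbb{A}'$'s configuration, which takes $O(g'(\kappa'(R(x))) + \log|R(x)|) = O(g'(f_0(\kappa(x))) + g(\kappa(x)) + \log|x|)$ space, plus the input pointer of the same order, plus the space $O(g(\kappa(x)) + \log|x|)$ used by the $\textsc{Bitgraph}(R)$ subroutine (reused each call). All of this is $O(\tilde{f}(\kappa(x)) + \log|x|)$ for a suitable computable $\tilde{f}$, so $\mathbb{A}$ is pl-space bounded with respect to $\kappa$. For nondeterministic bits: $\mathbb{A}$ uses exactly the number $\mathbb{A}'$ uses on $R(x)$, which is at most $f'(\kappa'(R(x))) \cdot \log|R(x)| \le f'(f_0(\kappa(x))) \cdot \log(h(\kappa(x)) \cdot |x|^{c}) \le f'(f_0(\kappa(x))) \cdot (\log h(\kappa(x)) + c\log|x|) \le f''(\kappa(x)) \cdot \log|x|$ for a suitable computable $f''$ (absorbing the additive $\log h(\kappa(x))$ term into the multiplicative parameter factor, using $\log|x| \ge 1$ for nonempty inputs). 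Correctness is immediate: $\mathbb{A}$ accepts $x$ iff $\mathbb{A}'$ accepts $R(x)$ iff $R(x) \in Q'$ iff $x \in Q$.

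The only mild subtlety — and the one point I would be careful about — is the standard one in space-bounded composition: we must never store $R(x)$ explicitly, only recompute its bits on demand, and we must make sure the additive $\log|R(x)|$ and $\log h(\kappa(x))$ terms are correctly folded into the parameter-dependent part so that both the pl-space bound and the $f(\kappa(x)) \cdot \log|x|$ nondeterministic-bit bound are genuinely met with a computable $f$. None of this is difficult; it is the same argument that shows para-L (and para-NL) are closed under pl-reductions, with the extra observation that the nondeterministic-bit budget is preserved under the substitution $|x| \rightsquigarrow |R(x)|$ up to the allowed parameter factor.
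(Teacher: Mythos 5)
Your argument is correct and is precisely the standard composition argument the paper has in mind when it declares the proposition ``straightforward to verify'' (the paper gives no explicit proof): simulate the target machine on the virtual input $R(x)$, recomputing bits via the deterministic $\textsc{Bitgraph}(R)$ subroutine, and fold the $\log|R(x)|$ and $\kappa'(R(x))$ overheads into the allowed parameter-dependent factors. All the resource accounting you give, including absorbing the additive $\log h(\kappa(x))$ term into the $f(\kappa(x))\cdot\log|x|$ nondeterminism budget, is exactly right.
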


Recall that, using the notation in~\cite{describing}, one has 
$$
\textup{FPT}=\textup{para-P}\subseteq\textup{W[P]}\subseteq\textup{para-NP}.
$$ 
It follows  immediately from the definitions that
$$
\textup{para-L}\subseteq\PATH\subseteq\textup{para-NL}.
$$
The class $\PATH$ is natural in that it has a natural machine characterization
that is analogous to the one of W[P].  
We shall see that it captures the complexity of many natural problems.

\begin{theorem}\label{theo:pathcomplete} $\HOMP{\P^*}$ is complete for $\PATH$ under pl-reductions.
\end{theorem}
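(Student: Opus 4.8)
The plan is to prove the two inclusions $\HOMP{\P^*}\in\PATH$ and $\PATH\le_\pl\HOMP{\P^*}$ separately, and I expect the second to be the delicate one.

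\textbf{Membership.} First I would show $\HOMP{\P^*}\in\PATH$ directly from the machine definition of $\PATH$. Given an instance $(\rela,\relb)$ with $\rela\in\P^*$, the structure $\rela$ is isomorphic to some $\mathbf{P}_m^*$; in logarithmic space one identifies an endpoint of the underlying path and a traversal order $a_1,\dots,a_m$ of its vertices (the path graph $E^\rela$ has exactly two degree-$1$ vertices, then one walks along). The machine processes the path from left to right: having committed to an image $h(a_{i-1})\in B$, it nondeterministically guesses $h(a_i)\in B$ and checks, by scanning $\relb$, that $h(a_i)\in C_{a_i}^{\relb}$ and that both $(h(a_{i-1}),h(a_i))$ and $(h(a_i),h(a_{i-1}))$ lie in $E^{\relb}$; it accepts iff all $m$ guesses succeed. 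At any moment it stores only a constant number of elements of $\relb$ (the current and previous images) together with $O(\log|\rela|)$ bits of bookkeeping over $\rela$, so it is pl-space bounded with respect to $|\rela|$; and it uses at most $m\cdot\lceil\log|B|\rceil\le |\rela|\cdot(\log|x|+1)$ nondeterministic bits, which is bounded by $f'(|\rela|)\cdot\log|x|$ for a suitable computable $f'$. Hence $\HOMP{\P^*}\in\PATH$.

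\textbf{Hardness: the block-buffering trick.} Let $(Q,\kappa)\in\PATH$ via a pl-space bounded nondeterministic machine $\mathbb{A}$ that uses $f(\kappa(x))\cdot\log|x|$ nondeterministic bits, and assume without loss of generality $f\ge 1$. The obstacle is that $\mathbb{A}$ may spread these $f(\kappa(x))\cdot\log|x|$ bits arbitrarily over its run, whereas the source structure of a $\HOMP{\P^*}$-instance is a path whose length must be bounded by a computable function of $\kappa(x)$ \emph{alone}. I would get around this by replacing $\mathbb{A}$ with an equivalent pl-space machine $\mathbb{A}'$ that buffers its nondeterminism: $\mathbb{A}'$ maintains a buffer of $\lceil\log|x|\rceil$ bits and a round counter bounded by $r:=f(\kappa(x))$ (computable from $x$ since $f$ is computable), refilling the buffer by a single block-guess of $\lceil\log|x|\rceil$ bits whenever it runs dry, and — once the simulated $\mathbb{A}$ halts — performing dummy block-guesses until exactly $r$ block-guesses have occurred; a standard step-counter clean-up moreover ensures every run of $\mathbb{A}'$ halts. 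Then $\mathbb{A}'$ still accepts $Q$, is pl-space bounded, and its run on $x$ falls into exactly $r$ \emph{rounds}, where the transition from one \emph{round-start configuration} (the machine having just filled its buffer) to the next is deterministic once the block guessed at the start of the round is fixed.

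\textbf{Hardness: the target structure.} The reduction sends $x$ to the instance $(\mathbf{P}_{r+1}^*,\relc)$, where the universe of $\relc$ consists of pairs $(j,c)$ with $j\in\{0,\dots,r\}$ and $c$ a syntactically valid configuration of $\mathbb{A}'$ that is the initial configuration (when $j=0$), in a round-start state (when $1\le j\le r$), or in a halting state (when $j=r$); the edge relation is the symmetric closure of the set of pairs $((j,c),(j{+}1,c'))$ such that $c'$ is obtained from $c$ by one round of $\mathbb{A}'$; and, writing $v_1,\dots,v_{r+1}$ for the path vertices, $C_{v_1}$ is interpreted by the single pair $(0,\text{initial})$, $C_{v_{r+1}}$ by all pairs $(r,c)$ with $c$ accepting, and $C_{v_j}$ (for $1<j<r+1$) by all pairs $(j{-}1,c)$. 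The constants force any homomorphism $h$ to send $v_j$ to a round-$(j{-}1)$ configuration, so the edge condition on $\{v_j,v_{j+1}\}$ forces $h(v_{j+1})$ to be a round-$j$ configuration reachable from $h(v_j)$ by reading one block; thus a homomorphism $\mathbf{P}_{r+1}^*\to\relc$ exists iff some choice of $r$ blocks drives $\mathbb{A}'$ from the initial to an accepting configuration, i.e.\ iff $x\in Q$. The source structure has size $\Theta(r)=\Theta(f(\kappa(x)))$, so the new parameter is bounded computably in $\kappa(x)$; the universe of $\relc$ has size $2^{O(g(\kappa(x)))}\cdot|x|^{O(1)}$, within the output budget of a pl-reduction; and bit-queries to $\relc$ are answered by simulating $\mathbb{A}'$ deterministically within a single round, using only the pl-space budget $O(g(\kappa(x))+\log|x|)$ (configurations, buffer, and a terminating step counter all fit), so the reduction is implicitly pl-computable. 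Note that the directed flavour of the round structure does no harm: symmetrizing $E^{\relc}$ cannot create spurious homomorphisms, precisely because the constants $C_{v_j}$ already pin down the round index of each path vertex.

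\textbf{Expected main obstacle.} I anticipate the crux to be the hardness direction — reconciling the $f(\kappa(x))\cdot\log|x|$ nondeterministic bits of a generic $\PATH$-machine with a path whose length is bounded only in $\kappa(x)$ (handled by the block-buffering of $\mathbb{A}'$), together with the verification that the resulting target structure $\relc$ is genuinely implicitly pl-computable, where one must be careful that the deterministic sub-computation of each round can be both simulated and forcibly terminated inside the pl-space budget. (As an alternative to the direct construction, hardness could be routed through the $\PATH$-completeness of the directed $s$-$t$-path problem of Elberfeld et al.\ together with a reduction of it to $\HOMP{\dP^*}$ and then to $\HOMP{\P^*}$, but the directed/undirected bookkeeping makes the self-contained machine-based argument above cleaner.)
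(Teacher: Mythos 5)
Your proposal is correct and follows essentially the same route as the paper: membership by guessing images along the path one vertex at a time, and hardness via a layered configuration graph whose $j$-th layer holds the configurations reachable after $j$ guess-blocks, with the unary relations $C_{v_j}$ pinning each path vertex to its layer. Your ``block-buffering'' of $\mathbb{A}$ into rounds of $\lceil\log|x|\rceil$ guessed bits is exactly what the paper packages as the jump-machine characterization of $\PATH$ (Lemma~\ref{lem:jumps}), and your structure $\relc$ coincides with the paper's $\relb_x$ built from the ``reaches'' relation on starting-state configurations.
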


%
%
%
%

That $\HOMP{\P^*}$ is contained in $\PATH$ 
can be seen by the guess-and-check para\-digm. We find it informative 
to present such algorithms in a computational model tailored specifically for this kind of nondeterminism.



\begin{definition} 
A \emph{jump machine}
is a Turing machine with an input tape and a special {\em jump state}. 
When the machine enters the jump state the head on the input tape is set nondeterministically on one of the cells carrying an input bit;
we say that the machine {\em jumps to} the cell.
When this occurs,
no other head moves or writes and the state is changed to the starting state. 
Acceptance is defined as usual, that is, such a machine accepts an input
if there exists a sequence of nondeterministic jump choices under which
the machine accepts. 
An \emph{injective jump machine}
is defined similarly to a jump machine,
but never jumps to a cell that has already been jumped to.

For a function $j:\{0,1\}^*\to\mathbb N$, we say that a jump machine 
(an injective jump machine) 
uses
{\em $j$ many} (injective) jumps if
for every input $x$ and every run on $x$, 
it enters the jump state at most $j(x)$ many times.
\end{definition}

The idea is that a jump corresponds to a guess of a number in $[n]$ where $n$ is the length of the input. 
Observe that one can compute in logarithmic space the number $m\in[n]$ of the cell it jumps to by moving the head to the left and stepwise increasing 
a counter. 

\begin{lemma}\label{lem:jumps} Let $(Q,\kappa)$ be a parameterized problem. The following are equivalent.
\begin{enumerate}\itemsep=0pt

\item $(Q,\kappa)\in\PATH$.

\item There exists a computable $f:\mathbb N\to\mathbb N$ and a 
jump machine $\A$ using $(f\circ \kappa)$ many jumps that accepts $Q$ and is pl-space bounded with respect to $\kappa$.

\item There exists  a computable $f:\mathbb N\to\mathbb N$ and 
an injective jump machine $\A$ 
using $(f\circ \kappa)$ many injective jumps that accepts $Q$ and is pl-space bounded with respect to $\kappa$.

\end{enumerate}
\end{lemma}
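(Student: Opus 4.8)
The plan is to prove the three statements equivalent by establishing the cycle of implications $(1)\Rightarrow(2)\Rightarrow(3)\Rightarrow(1)$; actually, since the hard content is essentially symmetric, I will more naturally go $(3)\Rightarrow(2)$ trivially, then $(2)\Rightarrow(1)$, then $(1)\Rightarrow(3)$. The implication $(3)\Rightarrow(2)$ is immediate: an injective jump machine is in particular a jump machine, and injective jumps are jumps, so any machine witnessing (3) witnesses (2) with the same bound $f$.

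For $(2)\Rightarrow(1)$, I would simulate a jump machine $\A$ using $f(\kappa(x))$ jumps and running in space $O(g(\kappa(x))+\log|x|)$ by an ordinary nondeterministic Turing machine that is pl-space bounded and uses few nondeterministic bits. The simulating machine runs $\A$ step by step; whenever $\A$ enters the jump state, the simulator instead guesses a number $m\in[n]$ (where $n=|x|$) bit by bit using $\lceil\log n\rceil$ nondeterministic bits, writes $m$ in binary on a dedicated worktrack, and then emulates the effect of the jump — namely, it must place $\A$'s input head on cell $m$. Since $\A$'s input head position is a value in $[n]$ storable in $O(\log n)$ space anyway, the simulator simply maintains $\A$'s input-head position as a binary counter on a worktape and overwrites it with the guessed $m$. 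Total nondeterministic bits used: at most $f(\kappa(x))\cdot\lceil\log n\rceil = O(f'(\kappa(x))\cdot\log|x|)$ for a suitable computable $f'$; total space: $O(g(\kappa(x))+\log|x|)$ since we only add $O(\log n)$ for the counter and the current guess. Hence $(Q,\kappa)\in\PATH$.

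For $(1)\Rightarrow(3)$, I would take a nondeterministic pl-space bounded machine $\mathbb{M}$ accepting $Q$ using $f(\kappa(x))\cdot\log|x|$ nondeterministic bits, and convert it into an injective jump machine. The key observation is that $f(\kappa(x))\cdot\log|x|$ nondeterministic bits can be packaged as $f(\kappa(x))+O(1)$ guesses of numbers in $[n]$, and a single guess of a number in $[n]$ can be realized by one jump (reading off the cell index $m\in[n]$ by walking the head left and counting, as remarked in the text). To enforce \emph{injectivity} of jumps — the subtle point — I would have the jump machine, before it wants to make its $i$-th "logical" guess of a value $v_i\in\{0,1\}^{O(\log n)}$, actually perform a bounded number of \emph{distinct} jumps and combine their cell indices: e.g., reserve disjoint blocks of cells or, more simply, note that each logical guess needs only $O(\log n)$ bits while a fresh jump yields $\log n$ bits, and keep a record on a worktape of which cells have been jumped to so the machine can, if a re-jump is demanded, instead jump to the next available cell and mod out appropriately. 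Cleanest is this: to simulate all $f(\kappa(x))\cdot\log|x|$ nondeterministic bits, partition the input cells into $f(\kappa(x))$ consecutive blocks each of length roughly $|x|/f(\kappa(x))$ (assuming $|x|$ large enough relative to the parameter, else fall back to brute force in the parameter), and in block $i$ make a single jump yielding an index which, interpreted within that block, supplies $\log(|x|/f(\kappa(x))) = \log|x| - O(\log f(\kappa(x)))$ bits — enough, after increasing $f$ by a constant factor, to cover the required bits, while automatically guaranteeing the jumps land in disjoint blocks and hence are injective. The only space overhead is an $O(\log|x|)$ pointer recording the current block and the running guess string, which we can afford. I expect this bookkeeping to make the jumps genuinely injective to be the main technical nuisance, though it is routine; the remark preceding the lemma, that the jumped-to cell index is pl-computable, is exactly what makes the translation between "guessing a bit string" and "jumping" legitimate in both directions.

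Throughout, I would be slightly careful about the edge case where the input is too short for the block/counter construction, i.e. $|x|$ small compared to $f(\kappa(x))$: in that regime $\log|x|$ is bounded in terms of the parameter, so all the nondeterminism fits in $f'(\kappa(x))$ bits and can be handled deterministically within the $f'(\kappa(x))$ additive space budget, making all three conditions trivially interderivable. Collecting these, the cycle $(3)\Rightarrow(2)\Rightarrow(1)\Rightarrow(3)$ closes and the lemma follows.
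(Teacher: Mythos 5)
There is a genuine gap, and it sits exactly where you declared the argument ``immediate.'' The implication $(3)\Rightarrow(2)$ does \emph{not} follow by reading an injective jump machine as an ordinary jump machine: the injectivity constraint is a semantic restriction on which nondeterministic choices are available at a jump, so the ordinary jump machine obtained by dropping it has \emph{more} runs than the injective one. Since acceptance is existential over runs, the unrestricted machine accepts a potentially larger language --- on an input $x \notin Q$ it may reach an accepting state along a run that revisits a jumped-to cell, a run the injective machine never has. Nothing in the definition of an injective jump machine constrains its behaviour on such illegal runs. Consequently your cycle $(3)\Rightarrow(2)\Rightarrow(1)\Rightarrow(3)$ only establishes $(2)\Rightarrow(3)$ and $(1)\Leftrightarrow(2)$ is incomplete as well, since no correct implication out of $(3)$ remains. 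The missing content is real: to simulate an injective jump machine by a $\PATH$-machine (or by an ordinary jump machine) one must \emph{verify} that the $f(\kappa(x))$ guessed cell indices are pairwise distinct, and storing them outright costs $f(\kappa(x))\cdot\log|x|$ space, which exceeds the pl-space budget. The paper's proof of $(3)\Rightarrow(1)$ resolves this with the color-coding hash functions of Lemma~\ref{lem:hash}: the simulator first guesses a pair $(p,q)$ with $q<p<k^2\log n$ using $O(\log\log n)$ bits, then maintains only the set of fingerprints $h_{p,q}(m)$ of the jumped-to cells --- at most $k$ numbers below $k^2$, hence space bounded in the parameter --- rejecting on a collision; completeness follows because some $h_{p,q}$ is injective on the actual accepting jump set. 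Some such fingerprinting idea is indispensable here, and your proposal contains no substitute for it.

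Your other two implications are essentially sound. The $(2)\Rightarrow(1)$ simulation (guess the cell index bit by bit and overwrite the stored head position) is the easy direction and mirrors the paper's $(1)\Rightarrow(2)$ in reverse. Your $(1)\Rightarrow(3)$ block-partition construction --- jump once per block of length about $|x|/f(\kappa(x))$, reject if the jump lands outside the designated block, and fall back to brute force when $|x|$ is small relative to the parameter --- is a legitimate alternative to the paper's route through $(2)\Rightarrow(3)$, which instead encodes the pair (jump number, value) into a single cell index via a $\lceil\sqrt{n}\rceil$-ary pairing; both devices make injectivity automatic by forcing successive jumps into disjoint regions. But until the direction out of $(3)$ is repaired, the lemma is not proved.
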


\begin{proof}   (1) implies (2): assume (1) and choose $\mathbb A$ and $f$ according Definition~\ref{def:pathmachine}.  
Given an input $x$ we simulate $\mathbb A$ by a 
jump machine $\mathbb B$ that makes use of an extra worktape. 
When $\A$ enters its guess state $\B$ moves its head on the extra worktape right and continues the simulation of $\mathbb A$ in state $s_b$ where $b\in\{0,1\}$ is the bit scanned by this head. In case the head scans a blank cell, $\B$ stores the number $j$ of the cell its input head is scanning and then performs a jump, say to cell $m\in[|x|]$. It computes the binary code  of $m$ of length $\lceil\log (|x|+1)\rceil$. It overwrites the content of the extra worktape by this 
code and sets its head on the first bit $b$ of the code, moves the input head back to cell $j$ and continues the simulation of $\A$ in state~$s_b$. 
Then $\mathbb B$ makes at most 
$f(\kappa(x))$ many jumps.

(2) implies (3):  let $\mathbb A$ and $f$ accord (2). To get a machine according 
to (3) we intend to simply simulate $\A$ on 
an injective jump machine. 
This works provided $\A$ does not have accepting runs with two jumps to the same cell. To ensure this condition we replace $\A$ by the following machine~$\A'$.
Intutively, if $\A$ jumps $k$ times then $\A'$ jumps $2k$ times and accepts only if these $2k$ jumps encode pairs $(1,m_1),\ldots,(2k,m_{2k})$; the simulation of the $i$th jump of $\A$ is done by jumping to the $(m_{2i},m_{2i+1})$th cell. Details follow.

The machine $\A'$ on $x$ first computes $k:=f(\kappa(x))$: note $\kappa(x)$ can be computed in space $O(\log|x|)$ by our convention on parameterizations; then $k$ can be computed from $\kappa(x)$ running some machine computing $f$ on $\kappa(x)$ -- this needs additional space which is effectively bounded in the parameter $\kappa(x)$. 

Then $\A'$ checks that $2k\cdot\lceil\sqrt{n}\rceil\le n$ where $n:=|x|$. 
If this check fails, $\A'$ simulates some fixed decision procedure for $Q$ (note that  (2) implies that $Q$ is decidable). Observe that in this case $k\ge\Omega(\sqrt{n})$, so the decision procedure runs in space effectively bounded in $k$ and hence in the parameter.
Otherwise $2k\cdot\lceil\sqrt{n}\rceil\le n$ and $\A'$ simulates $\A$ as follows. Throughout the simulation it maintains a counter for jumps that initially is set to 0. It will be clear that this counter always stores a number $\le 2k$.

 When $\A$ jumps, $\A'$ jumps twice and 
computes the two numbers $a,b$ of the cells it jumped to. It interprets $a,b$ as encoding pairs $(i_a,m_a),(i_b,m_b)\in[2k]\times [\lceil\sqrt{n}\rceil]$. More precisely, $i_a:=\lceil a/\lceil\sqrt{n}\rceil \rceil$ is the least $i$ such that $i\cdot\lceil\sqrt{n}\rceil\ge a$ and $m_a:=1+a-(i_a-1)\cdot\lceil\sqrt{n}\rceil$; similarly for $(i_b,m_b)$. If $(i_a,m_a)$ or $(i_b,m_b)$ is not in $[2k]\times [\lceil\sqrt{n}\rceil]$, then $\A'$ halts and rejects.

For $i$ the value of the jump counter, $\A'$ checks that $i+1=i_a$ and that $i+2=i_b$. Then it computes $m:=m_a \cdot\lceil\sqrt{n}\rceil+m_b$ and checks that $m\in[n]$. 
Then $\A'$ increases the jump counter by two, moves the input head to cell $m$, changes to the starting state and resumes the simulation of $\A$.

(3) implies (1): choose a machine $\A$ and a function $f$ according (3) and define a machine $\B$ as follows. On $x$ it first computes $k:=f(\kappa(x))$ (within allowed space as seen above) and $n:=|x|$. If $k\ge \log n$ it runs some fixed machine $\mathbb Q$ deciding $Q$ and answers accordingly. Since $k\ge\log n$ this needs space effectively bounded in $k$ and thus in the parameter.
If otherwise $k<\log n$, then $\B$ simulates $\A$ as follows. During the simulation it maintains 
a set $X$ containing at most $k$ natural numbers  all smaller $k^2$ -- intuitively, this set contains fingerprints of the jumps sofar. Initially, $X=\emptyset$.

To begin, $\B$ guesses a pair $(p,q)$ with $q<p<k^2\log n$ and stores it. Note that this
 requires only $O(\log k+\log \log n)\le O(\log\log n)$ nondeterministic bits and space.
Then $\B$ starts simulating $\A$. When $\A$ jumps, $\B$ guesses $\lceil\log (n+1)\rceil$ many bits encoding a number $m\in[n]$. It computes $f:=h_{p,q}(m)$ and checks that $f\notin X$.  Then it adds
$f$ to $X$, moves the input head to the $m$th input bit, changes to the starting state and continues 
the simulation of $\A$. 

Obviously, if $\A$ jumps at most $\ell$ times, then $\B$ uses at most $O(\log\log n + \ell\log n)$ nondeterministic bits.
To see that $\mathbb B$ runs in allowed space, observe that
the ``fingerprint'' $f$ can be computed in space $O(\log n)$: first
$b:=qm\text{ mod } p$ can trivially be computed in space polynomial in
$\log p$ and this is space $(\log\log n)^{O(1)}\le O(\log n)$; second,
$f=b\mod k^2$ can trivially be computed in space polynomial in $(\log
k+\log b)$ and 
the space usage her is $(\log\log n)^{O(1)}$.

We show that $\mathbb B$ accepts $x$ if and only if $x\in Q$. 
If $\B$ accepts $x$ then either because $\mathbb Q$ accepts $x$ 
(and then trivially $x\in Q$) or because $\A$ 
reaches an accepting state when it jumps to cells numbered $m_1,\ldots,m_\ell$; note that the fingerprints of 
these cell numbers are pairwise different, and hence so are the numbers.
This implies $x\in Q$. 
Conversely, if $x\in Q$, then there is an accepting run of $\A$ on $x$ with $\ell\le k$ jumps to pairwise different cells $m_1,\ldots,m_\ell$. By Lemma~\ref{lem:hash} there exist $q<p<k^2\log n$ such that $h_{p,q}$ is injective on $\{m_1,\ldots,m_\ell\}$. Then $\B$ accepts when first guessing some such pair $(p,q)$ and then strings encoding $m_1,\ldots, m_\ell$.
\end{proof}


\begin{theorem}\label{theo:pathemb} Let $\fancya$ be a decidable class of structures of boun\-ded arity and of bounded pathwidth. Then $\EMP{\fancya}\in\PATH$.
\end{theorem}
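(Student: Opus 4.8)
The plan is to follow the template of the proof of Theorem~\ref{theo:embtd}, reading ``bounded pathwidth'' in place of ``bounded tree depth''. Concretely, I would establish the chain of pl-reductions
$$
\EMP{\fancya}\ \le_\pl\ \EMP{\fancya'}\ \le_\pl\ \HOMP{(\fancya')^*}\ \le_\pl\ \HOMP{\P^*},
$$
where $\fancya'$ is a suitable decidable class of \emph{connected} structures of bounded pathwidth. Since $\HOMP{\P^*}\in\PATH$ by Theorem~\ref{theo:pathcomplete} and $\PATH$ is closed under pl-reductions, this yields $\EMP{\fancya}\in\PATH$. In this chain the second reduction is exactly Lemma~\ref{lem:connected}, which applies because $\fancya'$ will be decidable and connected; the third reduction is the instance of Lemma~\ref{lemma:tree-decomp-to-tree} obtained by taking $\fancyr:=\P$, which is legitimate because $\P$ is computably enumerable and contained in $\T$, the Gaifman graph of $(\rela')^*$ is the Gaifman graph of $\rela'$ (adding the unary symbols $C_a$ creates no edges) so $(\fancya')^*$ still has bounded pathwidth, and a path-decomposition is exactly a tree-decomposition whose underlying tree is a path. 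Thus the only genuine task is to construct $\fancya'$, i.e.\ to prove the pathwidth analogue of the Claim inside the proof of Theorem~\ref{theo:embtd}: \emph{there is a decidable class $\fancya'$ of connected structures of bounded pathwidth with $\EMP{\fancya}\le_\pl\EMP{\fancya'}$}.

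To build $\fancya'$, fix $w\ge 1$ bounding the pathwidth of every structure in $\fancya$ (take $w=1$ if the bound happens to be $0$) and a fresh binary relation symbol $E$ not used by structures in $\fancya$. Fix a computable procedure that, given $\rela\in\fancya$, lists the connected components $C_1,\dots,C_\ell$ of the Gaifman graph of $\rela$ in the order induced by the encoding of $\rela$ and, for each $i$, a width $\le w$ path-decomposition of the subgraph induced by $C_i$, and then selects a vertex $u_i$ from its last bag and a vertex $v_i$ from its first bag. Let $\rela'$ expand $\rela$ by interpreting $E$ as the symmetric closure of $\{(u_i,v_{i+1})\mid 1\le i<\ell\}$; then the Gaifman graph of $\rela'$ is connected. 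Concatenating the $\ell$ path-decompositions, inserting the two-element bag $\{u_i,v_{i+1}\}$ between the $i$th and the $(i{+}1)$st of them, gives a path-decomposition of $\rela'$ of width $\max(w,1)=w$ (each vertex still occupies a contiguous block of bags, and each new $E$-edge sits inside one of the inserted bags). Hence $\fancya':=\{\rela'\mid\rela\in\fancya\}$ is connected, has pathwidth $\le w$ and bounded arity, and is decidable, the latter because $\rela\mapsto\rela'$ is computable. Finally the map $(\rela,\relb)\mapsto(\rela',\relb')$, where $\relb'$ expands $\relb$ by interpreting $E$ as $B^2$, is implicitly pl-computable with parameter blow-up bounded by a computable function, and it reduces $\EMP{\fancya}$ to $\EMP{\fancya'}$: an embedding of $\rela$ into $\relb$ is at once an embedding of $\rela'$ into $\relb'$ since $E^{\relb'}=B^2$, and restricting an embedding of $\rela'$ into $\relb'$ to the vocabulary of $\rela$ gives an embedding of $\rela$ into $\relb$.

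I do not foresee a deep obstacle, since the argument is mostly an assembly of facts already available. The step demanding care is the connectedness reduction above: one has to check that the concatenation-through-size-two-bags is a legal path-decomposition and that it keeps the width at $w$, and to carry out the same routine bookkeeping as in the proof of Theorem~\ref{theo:embtd} (computability of $\rela\mapsto\rela'$, hence decidability of $\fancya'$; implicit pl-computability of the output map). One could also prove the theorem directly by describing a $\PATH$-machine that marches left-to-right along a width $\le w$ path-decomposition of $\rela$, guessing the images of the at most $w+1$ vertices of each bag (which costs only $O(|A|\cdot\log|B|)$ nondeterministic bits in total), checking each relation tuple of $\rela$ when it first becomes confined to a single bag, and using the color-coding hash of Lemma~\ref{lem:hash} to enforce injectivity while staying within the allowed number of nondeterministic bits; but the reduction route is shorter and reuses existing machinery.
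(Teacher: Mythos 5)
Your proof is correct, but it takes a genuinely different route from the paper's. The paper proves Theorem~\ref{theo:pathemb} by a direct machine construction: a pl-space bounded \emph{injective-jump} machine normalizes a width-$\le w$ path-decomposition so that consecutive bags differ by strict inclusion, walks along it left to right storing only the current partial homomorphism $h_i$ on the bag $X_i$, and uses one injective jump per newly introduced bag element to guess its image; the injective-jump semantics enforces that the union $\bigcup_i h_i$ is an embedding, and Lemma~\ref{lem:jumps} converts this machine into a $\PATH$ machine. You instead assemble a reduction chain $\EMP{\fancya}\le_\pl\EMP{\fancya'}\le_\pl\HOMP{(\fancya')^*}\le_\pl\HOMP{\P^*}$, where the middle step is Lemma~\ref{lem:connected} and the last is Lemma~\ref{lemma:tree-decomp-to-tree} with $\fancyr=\P$; your connectedness gadget (chaining the components' path-decompositions through two-element bags $\{u_i,v_{i+1}\}$) is a correct pathwidth analogue of the tree-depth gadget in the proof of Theorem~\ref{theo:embtd}, and indeed your overall strategy is exactly the one the paper uses for the treewidth case in Theorem~\ref{theorem:emb-in-tree}. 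The two routes are not so far apart at bottom: the paper hides the color-coding hash family of Lemma~\ref{lem:hash} inside the proof of Lemma~\ref{lem:jumps}, whereas you invoke it through Lemma~\ref{lem:connected}. What the paper's route buys is self-containment plus reuse: the very same machine, read with ordinary rather than injective jumps, is what the paper cites to prove $\HOMP{\P^*}\in\PATH$ in Theorem~\ref{theo:pathcomplete}. This is the one point where you should be careful: since the paper establishes the membership half of Theorem~\ref{theo:pathcomplete} by pointing at the machine from Theorem~\ref{theo:pathemb}, citing Theorem~\ref{theo:pathcomplete} here would be circular as the text stands; you must (and easily can) supply the guess-and-check machine for $\HOMP{\P^*}\in\PATH$ independently, e.g.\ the machine you sketch in your closing sentence without the injectivity bookkeeping. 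Your route's advantage is modularity and uniformity with the $\TREE$ and tree-depth cases; its cost is that it routes a conceptually simple algorithm through several structure transformations.
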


\begin{proof} Choose a constant $w\in\mathbb N$ bounding the pathwidth of~$\fancya$. We use a machine $\A$ with injective jumps to solve $\EMP{\fancya}$. The result will then follow from Lemma~\ref{lem:jumps}.
 
Given an instance $(\rela,\relb)$ of $\EMP{\fancya}$ the machine first computes a width $\le w$ path-decomposition 
$(\mathbf P_k, (X_i)_{i\in[k]})$ of $\rela$ such that $X_{i}\subsetneq X_{i+1}$ or $X_{i+1}\subsetneq X_{i}$ for all $i\in[k-1]$; we further assume that no $X_i$ is empty. This is done in space effectively bounded in the parameter~$|\rela|$ 
and, in particular, $k$ is effectively bounded in~$|\rela|$.

It then computes inductively for each $i\in[k]$ a map $h_i $ from $X_i$ into $B$ 
that is a partial homomorphism from $\rela$ into~$\relb$. 
To start, the machine $\A$ jumps  $|X_1|$ times to guess elements $b_1,\ldots b_{|X_1|}\in B$. 
It checks that the function $h_1:X_1\to B$ that maps the $i$th element of $X_1$ to $b_i$ defines a partial homomorphism from $\rela$ into~$\relb$. 
Having computed $h_{i}$ the machine computes $h_{i+1}$ as follows. If $X_{i+1}\subsetneq X_i$, then $h_{i+1}:=h_i\upharpoonright X_{i+1}$ is the restriction of $h_i$ to~$X_{i+1}$. Otherwise $X_{i+1}\supsetneq X_i$, say $X_{i+1}=X_i\cup\{a_1,\ldots, a_d\}$; then $\A$ jumps $d$ times to guess $b_1,\ldots b_d\in B$ and checks that $h_{i+1}:=(h_i\upharpoonright X_i)\cup\{(a_j,b_j)\mid j\in[d]\}$  is a partial homomorphism from $\rela$ into $\relb$. In the end, if no check fails, $\A$ halts accepting.

This procedure can be implemented in pl-space: 
the space to store the path decomposition is bounded in the parameter, and storing one $h_i$ needs space roughly $w\cdot (\log |A|+\log|B|)$. 

It is routine to check that $\A$ makes exactly $|A|$ many jumps, and that it accepts only if $\bigcup_i h_i$ is a homomorphism from $\rela$ to $\relb$. Since the machine has injective jumps it accepts in fact only if this homomorphism is an embedding. 
Conversely, it is obvious that the machine accepts if an embedding from $\rela$ into $\relb$ exists.
\end{proof}

\begin{proof}[Proof of Theorem~\ref{theo:pathcomplete}]
To see $\HOMP{\P^*}\in\PATH$, just consider the machine $\mathbb A$ described in the proof of Theorem~\ref{theo:pathemb} as a machine with jumps instead of as a machine with injective jumps.
 
To see that $\HOMP{\P^*}$ is hard for $\PATH$ under pl-reductions, 
let $(Q,\kappa)\in\PATH$ and choose a Turing machine $\A$ with jumps according Lemma~\ref{lem:jumps}~(2)  
that accepts~$Q$. We can assume that there are computable $f,g:\mathbb N\to\mathbb N$ such that
$\A$  on $x\in\{0,1\}^*$ runs in space
$O(g(\kappa(x))+\log |x|)$ and makes  on every run exactly $f(\kappa(x))$ many jumps.

Fix $x\in\{0,1\}^*$ and set $k:=\kappa(x)$ and $n:=|x|$.
Let $\A_\textit{det}$ be the deterministic Turing machine defined as $\A$ but with the jump state interpreted as a rejecting halting state. Observe that 
$\A_\textit{det}$ (and $\A$) has at most $m:=2^{g(k)}\cdot n^{c}$ configurations where 
 $c\in\mathbb N$ is a suitable constant. 
Let $c_1,\ldots,c_m$ be a list (possibly with repetitions) of all configurations of $\A_\textit{det}$ on $x$ 
whose state is the starting state. Assume that $c_1$ is the starting configuration of $\A_\textit{det}$.
For $i,j\in[m]$, say $i$ {\em reaches} $j$ if the computation of $\A_\textit{det}$ started on $c_i$ (with $x$ on the input tape) reaches in at most $m$ steps a configuration $c$ with the jump state,
and $c_j$ is obtained from $c$ by changing the jump state to the starting state and changing  
the position of the input head to some arbitrary cell storing an input bit. 
Further, call $i\in[m]$ {\em accepting} if $\A_\textit{det}$ started on $c_i$ accepts  within at most $m$ steps.

Consider the structure $\relb_x$ given by
\begin{eqnarray*}
B_x&:=&[f(k)+1]\times[m],\\
E^{\relb_x}&:=& \textup{the symmetric closure of }\\
&& \{((i,j),(i+1,j'))\mid i\in[f(k)], j
\text{ reaches } j'  \},
\\
C_1^{\relb_x}&:=&\{(1,1)\},\\
C_i^{\relb_x}&:=&\{i\}\times[m]\text{ for }2\le i\le f(k),\\
C_{f(k)+1}^{\relb_x}&:=&\{ (f(k)+1,j)\mid j\text{ is accepting}\}.
\end{eqnarray*} 
It is clear that there exists a homomorphism from $\relp_{f(k)+1}^*$ to
$\relb_x$ if and only if $\A$ accepts $x$, that is, the map $x\mapsto (\relp^{*}_{f(\kappa(x))+1},\relb_x)$ is a 
reduction from $(Q,\kappa)$ to $\HOMP{\P^*}$.
The new parameter $|\relp_{f(\kappa(x))+1}^*|$ depends only on $\kappa(x)$. The reduction is implicitly pl-computable: first observe that the numbers $f(k)$ and $m$ can be computed from $x$ in pl-space. A counter for numbers up to $m$ needs only space $O(g(k)+\log n)$. Hence one can tell whether or not $i$ reaches $j$ in pl-space simply by simulating $\A_\textit{det}$ for at most $m$ 
many steps. Similarly, this space is sufficient to tell whether or not a given $j\in[m]$ is accepting.
\end{proof}

The following result gives information about fundamental  problems: the problems
$\EMP{\dP}$, $\EMP{\C}$,
and $\EMP{\dC}$ 
 are the parameterized problems 
of determining if an input graph contains 
a simple directed $k$-path,
a simple undirected $k$-cycle,
and
 a simple directed $k$-cycle, 
respectively; 
these problems are denoted respectively by
{\sc $p$-DirPath}, 
{\sc $p$-Cycle}, and
{\sc $p$-DirCycle}
by Flum and Grohe~\cite{flumgrohecounting}.

\begin{theorem}\label{theo:pathproblems}
The following parameterized problems are complete for $\PATH$ under pl-reduc\-tions:\
\begin{quote}
$
\begin{array}{ll}
\pstpath,\\
\HOMP{\dP},& \EMP{\dP}\\ 
\HOMP{\C}, & \EMP{\C}\\
\HOMP{\dC}, & \EMP{\dC}
\end{array}
$
 
\end{quote}

\end{theorem}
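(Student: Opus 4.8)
The plan is to establish membership in $\PATH$ and $\PATH$-hardness for each of the six homomorphism/embedding problems on the list, using $\HOMP{\P^*}$ (Theorem~\ref{theo:pathcomplete}) as the known $\PATH$-complete anchor point and $\pstpath$ as a convenient combinatorial pivot. For membership, I would observe that $\dP$, $\C$, and $\dC$ are all decidable classes of structures of bounded arity and bounded pathwidth (indeed bounded treewidth $1$ or $2$, and in fact pathwidth $\le 2$), so Theorem~\ref{theo:pathemb} immediately gives $\EMP{\dP}, \EMP{\C}, \EMP{\dC} \in \PATH$. For the homomorphism variants, $\HOMP{\dP}$ and $\HOMP{\C}$ are subsumed by the embedding problems or handled directly: a homomorphism from a path or cycle can be found by the same injective-jump-free jump machine as in the proof of Theorem~\ref{theo:pathemb} read as an ordinary jump machine (exactly as $\HOMP{\P^*}\in\PATH$ was deduced), and $\pstpath \in \PATH$ because it is literally a guess-a-$k$-path problem solvable by $k+1$ jumps guessing the successive vertices while checking adjacency and the endpoint constraints, all in pl-space.

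For hardness, the cleanest route is to route everything through $\HOMP{\P^*} \le_\pl \pstpath$ and then $\pstpath \le_\pl$ each of the remaining problems, rather than reducing $\HOMP{\P^*}$ to all six directly. First I would reduce $\HOMP{\P^*}$ to $\pstpath$: given $(\relp_k^*, \relb)$, build a layered graph with vertex set (roughly) $[k]\times B$ where layer $i$ contains the elements $b\in B$ with $b\in C_i^{\relb}$ (so the $i$-th constant is respected), put an edge between layer-$i$ vertex $b$ and layer-$(i{+}1)$ vertex $b'$ exactly when $(b,b')\in E^{\relb}$, add a fresh source $s$ adjacent to the layer-$1$ vertex and a fresh sink $t$ adjacent to all layer-$k$ vertices, and ask for an $s$--$t$ path of length at most $k+1$; one must be mildly careful that a shortest such path cannot ``cheat'' by revisiting layers, which is guaranteed by the strict layering (every edge goes between consecutive layers, so any $s$--$t$ walk of length $\le k{+}1$ is forced to be the monotone one). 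This is implicitly pl-computable and the parameter $k+1$ depends only on the old parameter $k$. Then from $\pstpath$ I would reduce to $\HOMP{\dP}$ and $\EMP{\dP}$ by a directed layered-graph construction (orient all edges forward through the layers, so a homomorphism/embedding of $\overrightarrow{\mathbf P_{k+2}}$ into the resulting digraph is exactly an $s$--$t$ path of the right length, using padding to handle the ``at most $k$'' versus ``exactly $k$'' discrepancy via self-loops or a chain of copies of $t$); and to $\HOMP{\C}, \EMP{\C}, \HOMP{\dC}, \EMP{\dC}$ by closing the path into a cycle, i.e.\ after the layered gadget identify or connect $t$ back to $s$ so that an embedded $k'$-cycle corresponds to an $s$--$t$ path plus the return edge.

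The main obstacle, and the step deserving the most care, is the ``at most $k$'' versus ``exactly $k$'' length mismatch together with the simplicity (embedding) requirement for the cycle problems: $\pstpath$ asks for a path of length \emph{at most} $k$ whereas $\EMP{\dC}$ and $\EMP{\C}$ ask for a \emph{simple} cycle on \emph{exactly} some number of vertices, and $\EMP{\dP}$ similarly wants a fixed-length simple directed path. The standard fix is a padding trick: to realize ``length exactly between $1$ and $k$'' as ``length exactly $k$'', attach at the target end a gadget allowing one to ``stall'' the remaining steps (for paths, a disjoint appended segment of fresh degree-two vertices that can only be traversed monotonically; for cycles, a similar stalling chain inserted on the return arc), ensuring the padded walk is simple iff the original path was. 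One must verify that these gadgets do not introduce shortcuts and that they can be generated in parameterized logarithmic space with the parameter remaining a computable function of $k$. Once these reductions are assembled, transitivity of $\le_\pl$ (Lemma~\ref{lem:trans}, and the fact that $\PATH$ is closed under $\le_\pl$) together with the completeness of $\HOMP{\P^*}$ yields completeness of all six listed problems.
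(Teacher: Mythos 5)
Your overall strategy -- membership via Theorem~\ref{theo:pathemb} and the jump machine, hardness via layered-graph reductions pivoting through $\pstpath$ -- is essentially the approach of the paper, which gives the chain $\HOMP{\P^*} \le_\pl \HOMP{\dP}\le_\pl \pstpath \le_\pl \HOMP{\dC_{\textup{odd}}}$ using exactly the layered constructions you describe. Two remarks on where you diverge. First, you obtain the embedding problems by arguing that homomorphisms into the layered gadgets are automatically injective; the paper instead lifts homomorphism-hardness to embedding-hardness uniformly via Corollary~\ref{cor:coreemb} (using that directed paths, directed cycles and odd cycles are cores). Your route works for the directed cases, but it is a different mechanism. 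Second, the ``at most $k$ versus exactly $k$'' issue, which you identify as the main obstacle and propose to solve with stalling gadgets, essentially disappears in the paper: the instances of $\pstpath$ produced by the reduction from $\HOMP{\dP}$ are already strictly layered, so every $s$--$t$ path has the same length, and no padding is needed. Your stalling-chain gadget as described is actually problematic, since the number of stall steps would have to depend on the (unknown) length of the path found.

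The genuine gap is in the undirected cycle problems $\HOMP{\C}$ and $\EMP{\C}$. For $\HOMP{\C}$, an even cycle has a single edge as its core, so a homomorphism from $\mathbf C_{k'}$ with $k'$ even exists into any target with at least one edge; your plan of ``closing the path into a cycle'' produces incorrect reductions whenever the resulting cycle length is even. Even for $\EMP{\C}$, the layered graph with a back edge can contain spurious \emph{even} simple cycles that zigzag between two adjacent layers without ever using the back edge, so the parity restriction is needed there too. The paper handles this by forcing $k$ to be odd (shifting $s$ to a neighbour if necessary) and then exploiting a parity argument: in the layered graph every non-back edge changes the layer index by $\pm 1$, so a closed walk of odd length must use the back edge and must traverse the layers monotonically, which is what ties homomorphisms from $\mathbf C_k$ (for odd $k$) to genuine $s$--$t$ paths. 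Without restricting to odd cycles and making this folding argument, the hardness claims for $\HOMP{\C}$ and $\EMP{\C}$ do not go through. The remaining four problems are handled correctly by your outline.
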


\begin{proof}
By Theorem~\ref{theo:pathemb} all embedding problems are contained in $\PATH$. For the homomorphism problems 
and $\pstpath$
the same argument works (see the proof of Theorem~\ref{theo:pathcomplete}). 
We are thus left to prove hardness.

Recall Example~\ref{exa:core}. Corollary~\ref{cor:coreemb} implies
that 
$\HOMP{\dP^*}\le_\pl\EMP{\dP}$ and also that $\HOMP{\dC^*}\le_\pl\EMP{\dC}.$
Since we trivially have $\HOMP{\fancya}\le_\pl\HOMP{\fancya^*}$ for all classes $\fancya$, we conclude that
$\HOMP{\dP}\le_\pl\EMP{\dP}$ and also that $\HOMP{\dC}\le_\pl\EMP{\dC}$.
For $\C$ we similarly get  
$\HOMP{\C_{\textup{odd}}}\le_\pl\EMP{\C_{\textup{odd}}}$ where $\C_{\textup{odd}}$ is the class of odd length cycles. By 
$\dC_{\textup{odd}}$ we denote the class of odd length directed cycles.

It thus suffices to show that the problems $$\HOMP{\dP}, \HOMP{\dC}, \HOMP{\C_{\textup{odd}}}, \pstpath$$
are $\PATH$-hard.
By Theorem~\ref{theo:pathcomplete}, we know that
$\HOMP{\P^*}$ is hard for $\PATH$.  We give the sequence of reductions
$$
\HOMP{\P^*} \le_\pl \HOMP{\dP}\le_\pl  
\pstpath 
\le_\pl \HOMP{\dC_{\textup{odd}}}
$$
and then show the hardness of 
$\HOMP{\C_{\textup{odd}}}$.\medskip

$\HOMP{\P^*} \le_\pl \HOMP{\dP}$. 
Let $(\relp_k^*,\relb)$ be an instance of $\HOMP{\P^*}$. 
The reduction produces the instance  $(\overrightarrow{\relp_k},\relb')$ where $\relb'$ is 
the directed graph with vertices $ B':=[k]\times B$ and edges
$$ E^{\relb'}:=\{((i,b),(i+1,b'))\mid i\in[k-1], b\in C^{\relb}_i, b'\in  C^{\relb}_{i+1}\}.
$$

$\HOMP{\dP}\le_\pl \pstpath$.
Let $(\overrightarrow{\relp_k},\relg)$ be an instance of
$\HOMP{\dP}$. The reduction produces the instance $(\relg', s, t, k+2)$ 
where $\relg'$ has vertices
$G':=\{s,t\}\cup ([k]\times G)$ and as edges the symmetric closure of
\begin{eqnarray*}
&&\big\{((i,u),(i+1,v))\mid i\in[k-1], (u,v)\in E^{\relg} \big\}\\
&&\cup\ \big(\{s\}\times([1]\times G)\big)\cup \big(\{t\}\times([k]\times G)\big).
\end{eqnarray*}

$\pstpath
\le_\pl \HOMP{\dC_{\textup{odd}}}$. Let $(\relg, s, t, k)$ be an instance of 
the former problem; by the previous reduction, we may assume that 
it is a yes instance if and only
if there is an $s$-$t$ path of length exactly $k$.
We can assume that $k$ is odd (otherwise we take a new neighbor of the given $s$ as our new $s$).
 Define the graph $\relg'$ with vertices $([k]\times G)$ and edges 
as follows.  When $i \in [k-1]$ and $(u, v) \in E^{G}$, there is an edge
from $(i, u)$ to $(i+1, v)$; also, there is an edge from 
$(k, t)$ to $(1, s)$.
Then $(\relg,s, t, k)  \mapsto  (\overrightarrow{\mathbf C_k}, \relg')$ is a reduction as desired.\medskip


Finally, we show the hardness of 
$\HOMP{\C_{\textup{odd}}}$. 
By appeal to Lemma~\ref{lemma:reduction-constants-cores}, 
it suffices to demonstrate a reduction
$\pstpath
\le_\pl \HOMP{\C^*_{\textup{odd}}}$. 
Given an instance $(\relg, s, t, k)$ of the former problem 
of the
above form, 
we define
$\relg'$ as in the previous reduction.  The produced instance is
$({\mathbf C_k^*}, \relg'')$, where $\relg''$ is the expansion of 
the symmetric closure of $\relg'$
with $C_i^{\relg''} = \{ i \} \times G$.
\end{proof}

\section{The class TREE}
\label{sect:tree}

We give a machine characterization of the class of parameterized problems that are pl-reducible to $\HOMP{\T^*}$. 

\begin{definition}\label{def:treemachine} 
The class $\TREE$ contains 
a parameterized problem $(Q,\kappa)$ if there are
a computable function $f:\mathbb N\to\mathbb N$ 
and 
an
alternating Turing machine that accepts $Q$,
is pl-space bounded with respect to $\kappa$,
and uses $f(\kappa(x))\cdot\log |x|$ nondeterministic bits and $f(\kappa(x))$ co-nondeterministic bits.
\end{definition}

The following proposition is straightforward to verify.

\begin{prop} 
The complexity class $\TREE$ is closed under pl-reductions.
\end{prop}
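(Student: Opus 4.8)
The plan is to mimic the classical argument that a machine-defined class is closed under any reduction whose resource consumption the class can absorb; this is the same recipe behind Lemma~\ref{lem:trans} and behind the (asserted) closure of $\PATH$ under pl-reductions. Suppose $(Q',\kappa') \le_\pl (Q,\kappa)$ via a reduction $R$, and $(Q,\kappa)\in\TREE$, witnessed by an alternating machine $\mathbb A$ with bounding function $g$ as in Definition~\ref{def:treemachine}. Since $\textsc{Bitgraph}(R)\in\textup{para-L}$, fix a deterministic machine $\mathbb M$ deciding it within space $O(e(\kappa'(x))+\log|x|)$ for some computable $e$, and fix a computable $f_1$ with $\kappa\circ R\le f_1\circ\kappa'$. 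I would build an alternating machine $\mathbb B$ for $(Q',\kappa')$ that, on input $x'$, simulates $\mathbb A$ on the virtual input $R(x')$: the machine $\mathbb B$ keeps on a worktape the position $i$ of $\mathbb A$'s (virtual) input head, and whenever $\mathbb A$ reads an input cell, $\mathbb B$ runs $\mathbb M$ on $(x',i,\cdot)$ to recover that bit (and thereby also detects when $i$ exceeds $|R(x')|$, i.e.\ when $\mathbb A$ has run off the end of its input), reusing the same block of workspace for each such invocation.

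For the resource accounting I would first record that, because $\mathbb M$ runs in space $O(e(\kappa'(x'))+\log|x'|)$, it can answer ``yes'' only for indices $i$ bounded by $2^{O(e(\kappa'(x'))+\log|x'|)}$, so $|R(x')|\le 2^{O(e(\kappa'(x'))+\log|x'|)}$ and hence $\log|R(x')|\le c\cdot(e(\kappa'(x'))+\log|x'|)$ for a constant $c$. Then the storage of $\mathbb A$'s configuration costs $O(g'(\kappa(R(x')))+\log|R(x')|)$ for some computable $g'$ (as $\mathbb A$ is pl-space bounded with respect to $\kappa$), the counter $i$ costs $O(\log|R(x')|)$, and each run of $\mathbb M$ costs $O(e(\kappa'(x'))+\log|x'|)$; combining these with the two displayed bounds and $\kappa(R(x'))\le f_1(\kappa'(x'))$ shows $\mathbb B$ is pl-space bounded with respect to $\kappa'$. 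Since $\mathbb M$ is deterministic, the guess states encountered by $\mathbb B$ are exactly those of $\mathbb A$ on $R(x')$, so $\mathbb B$ uses at most $g(\kappa(R(x')))\cdot\log|R(x')|\le g(f_1(\kappa'(x')))\cdot c(e(\kappa'(x'))+\log|x'|)$ nondeterministic bits and at most $g(\kappa(R(x')))\le g(f_1(\kappa'(x')))$ co-nondeterministic bits. The latter is already a function of $\kappa'(x')$ alone, and for $|x'|\ge 2$ the former is at most $f'(\kappa'(x'))\cdot\log|x'|$ for a suitable computable $f'$ (absorbing the parameter-only summand into $\log|x'|\ge 1$), while the finitely many inputs with $|x'|\le 1$ are decided by table lookup. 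Correctness is immediate: $\mathbb B$ accepts $x'$ iff $\mathbb A$ accepts $R(x')$ iff $R(x')\in Q$ iff $x'\in Q'$. Hence $(Q',\kappa')\in\TREE$.

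The only genuinely delicate points — which I expect to be routine — are (i) bounding $\log|R(x')|$ linearly in $e(\kappa'(x'))+\log|x'|$ from the space bound on $\mathbb M$, which is what both keeps $\mathbb B$ pl-space bounded and keeps its nondeterministic-bit budget of the form $f'(\kappa'(x'))\cdot\log|x'|$; and (ii) verifying that the simulation introduces no spurious guess states (it does not, since $\mathbb M$ is deterministic) and correctly registers $\mathbb A$'s head crossing the right end of $R(x')$. Everything else is the standard ``compose a machine with a logarithmic-space reduction'' bookkeeping.
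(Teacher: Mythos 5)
Your proof is correct and is precisely the standard ``compose the machine with the reduction via on-demand bit queries'' argument that the paper has in mind when it declares this proposition straightforward to verify (the paper gives no proof); in particular you correctly isolate the two points that matter for $\TREE$, namely that $\log|R(x')|$ is absorbed into $O(e(\kappa'(x'))+\log|x'|)$ so the nondeterministic-bit budget retains the form $f'(\kappa'(x'))\cdot\log|x'|$, and that the co-nondeterministic budget $g(\kappa(R(x')))\le g(f_1(\kappa'(x')))$ remains a function of the new parameter alone. The only cosmetic caveats are that the space bound for the \textsc{Bitgraph} decider formally depends on $\log|(x',i,b)|$ rather than $\log|x'|$ (harmless once $i$ is bounded by the output length) and that one should take $g$ nondecreasing without loss of generality.
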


\begin{definition}\label{def:acc}
An {\em alternating Turing machine with jumps} is a Turing machine $\A$ using nondeterministic jumps and a universal guess state (see Preliminaries). It {\em accepts} an input $x\in\{0,1\}^*$ if its starting configuration on $x$ is {\em accepting}: 
it is already explained what an accepting halting configuration is, and a non-halting configuration which is not in the universal guess state (resp. is in the universal guess state) is accepting if at least one (resp. both) of its successor configurations are accepting. 
\end{definition}

\begin{lemma}\label{lem:jumps2} Let $(Q,\kappa)$ be a parameterized problem. The following are equivalent.
\begin{enumerate}\itemsep=0pt
\item $(Q,\kappa)\in\TREE$ 
\item There exists a computable $f:\mathbb N\to\mathbb N$ and an alternating  Turing machine $\A$ with $f\circ \kappa$ many jumps and $f\circ \kappa$ many co-nondeterministic bits that accepts $Q$ and is pl-space bounded with respect to $\kappa$.
\end{enumerate}
\end{lemma}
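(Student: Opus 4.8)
The plan is to prove the equivalence in Lemma~\ref{lem:jumps2} by adapting the proof of Lemma~\ref{lem:jumps}, exploiting the fact that the ``jump'' primitive is exactly a nondeterministic guess of a number in $[n]$, costing $\lceil\log(n+1)\rceil$ nondeterministic bits, while co-nondeterministic bits are simply passed through unchanged. Since the difference between $\TREE$ and $\PATH$ is precisely the addition of $f\circ\kappa$ many co-nondeterministic (universal) bits on top of the $f\circ\kappa\cdot\log|x|$ nondeterministic bits, and the difference between an ``alternating Turing machine with jumps'' and a ``jump machine'' is precisely the addition of a universal guess state, the correspondence should go through essentially verbatim, with the universal branching carried along passively in every simulation.

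First I would prove (1)$\Rightarrow$(2): given $\A$ witnessing $(Q,\kappa)\in\TREE$ via Definition~\ref{def:treemachine}, I build a machine $\B$ with jumps as in the proof of (1)$\Rightarrow$(2) of Lemma~\ref{lem:jumps}. Whenever $\A$ consumes a nondeterministic bit, $\B$ reads it off an auxiliary worktape onto which, every $\lceil\log(|x|+1)\rceil$ steps, it has written the binary code of the cell number $m\in[|x|]$ it last jumped to; the bookkeeping (remembering the current input-head position $j$, jumping, decoding $m$ by sweeping left, restoring the head to $j$) is exactly as in Lemma~\ref{lem:jumps}. The crucial new point is that $\B$ simply keeps $\A$'s universal guess states \emph{as} universal guess states: the co-nondeterministic bits are not touched by the simulation, so their number stays $\le f(\kappa(x))$, and one jump suffices to supply $\lceil\log(|x|+1)\rceil$ nondeterministic bits, so $\B$ makes at most $f(\kappa(x))$ jumps (after absorbing the $\lceil\log(\cdot+1)\rceil$ factor into a new computable bound, exactly as before). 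For the converse (2)$\Rightarrow$(1), I start from $\A$ with $f\circ\kappa$ jumps and $f\circ\kappa$ co-nondeterministic bits and turn each jump into a guess of $\lceil\log(n+1)\rceil$ ordinary nondeterministic bits encoding the target cell $m\in[n]$, moving the input head to cell $m$ and resuming in the starting state; the universal guess states of $\A$ are again copied directly as the co-nondeterminism of the resulting alternating machine. This yields an alternating machine that is pl-space bounded and uses $f(\kappa(x))\cdot\lceil\log(n+1)\rceil\le f'(\kappa(x))\cdot\log|x|$ nondeterministic bits and $f(\kappa(x))$ co-nondeterministic bits, as required by Definition~\ref{def:treemachine}; correctness of the jump-versus-guess translation is exactly the observation recorded after Definition~\ref{def:acc}, that a jump to cell $m$ can be detected and the head repositioned in logarithmic space.

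I do not expect an injective-jump variant to be needed here (unlike Lemma~\ref{lem:jumps}, where clause (3) is used for the embedding problem), so I would omit it; the statement only asks for the two-way equivalence with plain jumps. The main obstacle, such as it is, is purely the accounting: one must check that threading the co-nondeterministic bits through the guess-simulation does not inflate the space budget beyond $O(f(\kappa(x))+\log|x|)$ and does not interfere with the rejecting-halting-state reinterpretation of the jump state used on the deterministic side of Lemma~\ref{lem:jumps}'s argument (which here is not needed because we are not doing the $k\ge\log n$ fallback trick); since the universal branches are handled structurally rather than simulated, each simulation step adds only $O(\log|x|)$ extra space for the current subconfiguration, the jump counter, and the worktape holding the decoded cell number, so the bound holds. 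In short, the proof is a routine ``carry the universal quantifiers along'' refinement of the proof of Lemma~\ref{lem:jumps}, and I would present it by referring back to that proof for the shared bookkeeping and highlighting only the handling of the co-nondeterministic guess states.
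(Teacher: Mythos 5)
Your proposal is correct and follows essentially the same route as the paper: the paper likewise proves (1)$\Rightarrow$(2) by invoking the corresponding implication of Lemma~\ref{lem:jumps} (packaging blocks of $\lceil\log(|x|+1)\rceil$ nondeterministic bits into single jumps while leaving the universal guess states untouched), and proves (2)$\Rightarrow$(1) by simulating each jump with an existential guess of a binary string encoding the target cell $m\in[n]$ and repositioning the input head. Your additional bookkeeping remarks and the observation that no injective-jump clause is needed are consistent with, and slightly more explicit than, the paper's two-line argument.
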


\begin{proof} The implication from (1) to (2) can be seen analoguously to the corresponding implication in Lemma~\ref{lem:jumps}. 

Conversely, let $\A$ and $f$ accord (2). A machine $\B$ according (1) can be obtained by simulating a jump of $\A$ by existentially guessing a binary string encoding a number $m\in[n]$ and moving the input head to cell~$m$. 
\end{proof}

\begin{theorem}\label{theo:treecomplete} 
$\HOMP{\T^*}$ is complete for $\TREE$ under pl-reductions.
\end{theorem}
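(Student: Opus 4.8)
The plan is to establish the two directions $\HOMP{\T^*}\in\TREE$ and $\TREE$-hardness of $\HOMP{\T^*}$ separately, working throughout with the machine-with-jumps model of Lemma~\ref{lem:jumps2} (equivalence~(2)), which is the convenient formulation for guess-and-check arguments. For membership, the key observation is that a tree decomposition of a structure $\rela\in\T^*$ is itself a tree, so the structure $\rela$ \emph{is} essentially its own decomposition tree: one picks an arbitrary vertex of the tree $\rela$ as a root and processes it top-down. An alternating jump machine would, starting at the root, existentially jump to guess an image in $\relb$ for the current vertex, check that the guess respects the unary constant symbol $C_a$ and the edges to the already-committed parent, and then \emph{universally} branch over the children of the current vertex to recurse into each subtree independently. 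This uses $O(|\rela|)$ existential jumps (one per vertex) and $O(|\rela|)$ co-nondeterministic bits (to select which child to pursue along each universal branch); at any point along a single root-to-leaf computation path only $O(\log|\rela|+\log|\relb|)$ bits of worktape are needed to store the current vertex, its guessed image, and the parent's image, so the machine is pl-space bounded. Correctness is the standard fact that a homomorphism from a tree is exactly a consistent family of choices of images propagating down from the root, which here is captured precisely by the existential-then-universal alternation pattern; this is the same guess-and-check paradigm used for $\HOMP{\P^*}$ in Theorem~\ref{theo:pathcomplete} but with universal branching replacing the linear sweep.

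For hardness, I would mimic the proof of $\PATH$-hardness of $\HOMP{\P^*}$ in Theorem~\ref{theo:pathcomplete}, replacing the path gadget by a tree gadget encoding the computation tree of an alternating jump machine. Given $(Q,\kappa)\in\TREE$, fix by Lemma~\ref{lem:jumps2} an alternating jump machine $\A$ accepting $Q$ that is pl-space bounded and makes exactly $f(\kappa(x))$ jumps and uses $f(\kappa(x))$ co-nondeterministic bits on every run. Fix an input $x$, set $k=\kappa(x)$, $n=|x|$. As in the earlier proof, let $\A_{\textit{det}}$ be $\A$ with the jump state turned into a rejecting halting state (so that between consecutive jumps the behaviour is deterministic \emph{modulo the universal guesses}); bound the number of configurations by $m=2^{g(k)}\cdot n^{c}$, and enumerate the configurations $c_1,\dots,c_m$ whose state is the starting state. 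The difference from the path case is that each segment of computation between two jumps may itself contain universal branchings, so a single "row transition" of the gadget must encode, for a starting-state configuration index $j$, the \emph{set of starting-state configuration indices reachable at the next jumps along every universal branch}, together with a success flag along branches that accept without jumping. I would take the gadget structure $\rela_x$ to be a finite \emph{tree} whose shape is determined by $k=f(\kappa(x))$ and the branching degree of $\A$'s universal states: a root, then alternating "jump levels" (existential, handled by the homomorphism's freedom in choosing an image) and "universal levels" (handled by the tree's branching, so that a homomorphism must supply a consistent image down \emph{every} branch). As before, the vertex set of $\relb_x$ will be (level)$\times[m]$, with edge relations built from the reachability relation of $\A_{\textit{det}}$ restricted to at most $m$ steps, and with unary constants pinning the root to the starting configuration and forcing leaves to land on accepting configurations. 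One then checks that there is a homomorphism from $\rela_x^*$ to $\relb_x$ iff the starting configuration of $\A$ on $x$ is accepting, i.e.\ iff $x\in Q$; that $\rela_x$ lies in $\T$ (it is a tree by construction) and $|\rela_x^*|$ depends only on $\kappa(x)$; and that the map $x\mapsto(\rela_x^*,\relb_x)$ is implicitly pl-computable, since the reachability and acceptance predicates on indices in $[m]$ can be decided by simulating $\A_{\textit{det}}$ for at most $m$ steps using a counter of size $O(g(k)+\log n)$, exactly as in Theorem~\ref{theo:pathcomplete}.

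The main obstacle I anticipate is the bookkeeping for the universal branchings inside a single between-jumps segment: in the path case a configuration deterministically reaches the next jump, but here the "reaches" relation becomes many-valued, and one must design the tree $\rela_x$ so that its branching structure exactly mirrors the universal branching structure of $\A$ (whose depth between jumps is only bounded by $m$, not by a parameter). The clean way around this is to first normalize $\A$ so that universal guesses and existential jumps alternate in a regular pattern with a parameter-bounded number of each — this is legitimate because $\A$ makes only $f(\kappa(x))$ co-nondeterministic bits total, so one can pad and reorganize the machine so that every root-to-leaf path encounters the jumps and universal guesses in a fixed interleaved order, at the cost of at most a constant-factor (in the parameter) increase — and then the tree $\rela_x$ becomes a balanced-ish tree of parameter-bounded depth, with the intermediate deterministic segments absorbed into the definition of the edge relation of $\relb_x$. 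Getting this normalization stated cleanly, and verifying that the resulting $\rela_x$ is genuinely a tree (in particular connected and acyclic, so that $\rela_x\in\T$ and $\rela_x^*$ is a core by Example~\ref{exa:core}), is where the real care is needed; the homomorphism-correctness argument and the pl-computability of the reduction then follow the template of Theorem~\ref{theo:pathcomplete} essentially verbatim.
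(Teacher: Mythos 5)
Your proposal is correct and follows essentially the same route as the paper's proof: membership via an alternating machine that roots the tree and interleaves existential guesses of images with universal guesses of children, and hardness via normalizing the alternating jump machine so that universal guesses and jumps strictly alternate, then encoding configurations reachable under each universal choice $b$ into a ``$b$-reaches'' relation on a levels-times-configurations target structure. The only detail you leave slightly open --- the exact shape of the source tree --- is resolved in the paper by taking it to be the complete binary tree $\relt_{f(k)+1}$, with the two children of each node corresponding to the two outcomes of the (binary) universal guess, exactly as your normalization suggests.
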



\begin{proof}(Theorem~\ref{theo:treecomplete}) We show that $\HOMP{\T^*}\in\TREE$. Consider the following alternating Turing machine. Given an instance $(\relt,\relb)$ of $\HOMP{\T^*}$, the machine chooses some $t\in T$ as a ``root'' and computes the directed ``tree'' $\relt'$ with edges directed away from $t$. It existentially guesses ($O(\log |B|)$ bits encoding) a $b\in C_t^B$ and writes $(t,b)$ on some tape. While the pair $(t,b)$ written on the tape is such that $t$ has children in $\relt'$ the machine does the following: universally guess ($O(\log |T|)$ bits encoding) a child $t'$ of $t$; 
existentially guess $b'\in B$; check that $(b,b')\in E^B$ and $b'\in C^B_{t'}$. The while loop is left rejecting if this check fails. If the machine leaves the while loop otherwise, it accepts.

The number of universal guesses is bounded by $O(|T|\cdot\log|T|)$. The number of existential guesses is 
bounded by $|T|\cdot\log|B|$. The machine uses space to store $\relt'$ and at most two pairs in $T\times B$, so it is pl-space bounded.

To show $\HOMP{\T^*}$ is TREE-hard under pl-reductions, let the parameterized problem $(Q,\kappa)$ be in $\TREE$. Choose an alternating  machine $\A$ with jumps according to Lemma~\ref{lem:jumps2} for $(Q,\kappa)$. By adding some dummy jumps and dummy universal guesses we can assume that $\A$ on every $x$ and every run on $x$
 first makes one universal guess, then one jump, then one universal guess and so on. We can further assume that $\A$ on $x$ on every run on $x$ makes exactly $f(\kappa(x))$ many jumps and exactly $f(\kappa(x))$ many universal guesses.
Let $\A^0$ ($\A^1$) be the machine obtained from $\A$ by fixing the transition from a configuration with universal guess state to the first (second) successor configuration. Note $\A^0$ and $\A^1$ are Turing machines with jumps.

Let $x\in\{0,1\}^*, k:=\kappa(x) $, $n:=|x|$. Recall the proof of Theorem~\ref{theo:pathcomplete}. As there, let $c_1,\ldots,c_m$ enumerate all configurations of $\A$ on $x$ with the starting state; assume $c_1$ is the starting configuration.
Let  $\A^0_{\textit{det}}$ and $\A^1_{\textit{det}}$ be the deterministic machines obtained form $\A^0$ and 
$\A^1$ by interpreting the jump state as a rejecting halting state. For $i,j\in[m],b\in\{0,1\}$ we 
define what it means that $i$ {\em $b$-reaches} $j$ as in the proof of Theorem~\ref{theo:pathcomplete} with $\A^b_\textit{det}$ in place of $\A_\textit{det}$ there; call $i$ {\em accepting} if $\mathbb A_\textit{det}^0$ (equivalently  $\mathbb A_\textit{det}^1$) started on $c_i$ accepts in at most $m$ steps without entering the universal guess state.

Recall the notation $\mathbf T_{k}$ from the preliminaries.
 The reduction outputs $(\relt^*_{f(k)+1},\relb)$ where
$\relb$ is defined as follows.
\begin{align*}
B:=\ &\{0,1\}^{\le f(k)+1}\times[m],\\
E^{\relb}:=\  & \textup{ the symmetric closure of } \\
\ &\big\{((\sigma,j),(\sigma b,j'))\mid b\in\{0,1\},\sigma\in\{0,1\}^{\le f(k)}, j \text{ $b$-reaches } j' \big \},\\
C_{\lambda}^{\relb}:=\ &\{(\lambda,1)\},\text{ where $\lambda$ is the empty string},\\
C_{\sigma}^{\relb}:=\ &\{\sigma\}\times[m], \text{ for }1\le |\sigma|\le f(k),\\
C_{\sigma}^{\relb}:=\ &\{ (\sigma,j)\mid j \text{ is accepting}\}, \text{ for }|\sigma|=f(k)+1.
\end{align*}
It is not hard to see that $(\relt^*_{f(k)+1},\relb)$ can be computed in pl-space (cf. Proof of Theorem~\ref{theo:pathcomplete}). To see this indeed defines a reduction, first assume
$h$ is a homomorphism from $\relt^*_{f(k)+1}$ to $\relb$. As $h$ preserves the unary relations $C_{\sigma}$, for every $\sigma$ there is an $i_\sigma\in[m]$ such that $h(\sigma)=(\sigma,i_\sigma)$. It follows by induction on $\ell$ that for every $\sigma\in\{0,1\}^{f(k)+1-\ell}$ the configuration $c_{i_{\sigma}}$ is accepting (Definition~\ref{def:acc}).  But $i_\lambda=1$, so $c_{i_{\lambda}}=c_1$ is the starting configuration and $\A$ accepts~$x$.

Conversely, assume  $\A$ accepts $x$. We define an accepting configuration $c_\sigma$ for every $\sigma\in \{0,1\}^{\le f(k)+1}$: $c_\lambda$ is the starting configuration $c_1$. All other $c_\sigma$s are going to be the result of a jump (are a successor of a configuration in the jump state). Assume $c_\sigma$ is already defined. Then $c_\sigma$ is the starting configuration or results from a jump. In both cases the machine  $\A$ reaches from  $c_\sigma$  deterministically a universal guess state with two accepting successors $c'_0,c'_1$. For every $b\in\{0,1\}$, $\A$ reaches deterministically from $c'_b$ either an accepting halting configuration or a configuration in the jump state. In the first case let $c_{\sigma b}$ be this accepting halting configuration and in the second let it be some accepting successor of the jump. 
For every $\sigma$ choose $i_{\sigma}\in[m]$ such that $c_\sigma=c_{i_\sigma}$. Then $\sigma\mapsto (\sigma,i_\sigma)$ defines a homomorphism from $\relt^*_{f(k)+1}$ to~$\relb$.\end{proof}

\begin{theorem} 
\label{theorem:emb-in-tree}
Let $\fancya$ be a decidable class of structures of boun\-ded arity and bounded treewidth. Then $\EMP{\fancya}\in\TREE$.
\end{theorem}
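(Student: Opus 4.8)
The plan is to establish membership in $\TREE$ by bootstrapping from results already in hand rather than by designing a dedicated machine: I would exhibit a short chain of pl-reductions
$$\EMP{\fancya} \ \le_{\pl}\ \EMP{\fancya'} \ \le_{\pl}\ \HOMP{(\fancya')^{*}} \ \le_{\pl}\ \HOMP{\T^{*}},$$
where $\fancya'$ is a ``connected version'' of $\fancya$, and then invoke Theorem~\ref{theo:treecomplete}, which gives $\HOMP{\T^{*}}\in\TREE$, together with the closure of $\TREE$ under pl-reductions; the three reductions compose by Lemma~\ref{lem:trans}.

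\emph{Step 1: pass to a connected class without destroying bounded treewidth.} Fix a binary relation symbol $E$ not occurring in any structure of $\fancya$, and fix $w$ with $\tw(\rela)\le w$ for all $\rela\in\fancya$. Given $\rela\in\fancya$, choose one representative vertex in each connected component of the Gaifman graph of $\rela$ and let $\rela'$ expand $\rela$ by interpreting $E$ as the star that joins the representative of the lexicographically least component to the representatives of all the other components. Then $\rela'$ is connected, and $\tw(\rela')\le w+1$: glue together tree decompositions of width at most $w$ of the components in any fashion and add the distinguished representative to every bag; it then lies in every bag, so all connectedness conditions of a tree decomposition persist and the new $E$-edges are covered. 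Hence $\fancya':=\{\rela'\mid\rela\in\fancya\}$ is a decidable class of connected structures of bounded treewidth, and the map $(\rela,\relb)\mapsto(\rela',\relb')$, where $\relb'$ expands $\relb$ by $E^{\relb'}:=B\times B$, is a pl-reduction from $\EMP{\fancya}$ to $\EMP{\fancya'}$: an embedding of $\rela$ into $\relb$ automatically respects the trivially interpreted relation $E$, and the restriction to the old vocabulary of any embedding of $\rela'$ into $\relb'$ is an embedding of $\rela$ into $\relb$. This is exactly the device already used in the proof of Theorem~\ref{theo:embtd}.

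\emph{Steps 2 and 3.} Since $\fancya'$ is a decidable class of connected structures, Lemma~\ref{lem:connected} gives $\EMP{\fancya'}\le_{\pl}\HOMP{(\fancya')^{*}}$. Moreover every structure in $(\fancya')^{*}$ has the form $(\rela')^{*}$ and has the same Gaifman graph as $\rela'$, since unary expansions create no edges; hence it has a tree decomposition of width at most $w+1$ whose underlying tree trivially belongs to the computably enumerable class $\T$. Lemma~\ref{lemma:tree-decomp-to-tree}, applied with $\fancyr:=\T$, therefore yields $\HOMP{(\fancya')^{*}}\le_{\pl}\HOMP{\T^{*}}$. Composing Steps 1--3 and appealing to Theorem~\ref{theo:treecomplete} and the closure of $\TREE$ under pl-reductions completes the proof.

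The only step that is not pure bookkeeping is Step 1, where one must restore connectedness without leaving the bounded-treewidth regime; adding the $E$-edges as a single star (rather than as an arbitrary spanning structure on the components) is what keeps the cost to one extra unit of width, since one then merely duplicates a single ``apex-like'' vertex into all bags. A self-contained alternative, parallel to Theorem~\ref{theo:pathemb}, would instead build directly an alternating jump machine that walks down a nice rooted tree decomposition of $\rela$ of width at most $w$: it branches universally at join nodes (consuming only a parameter-bounded number of co-nondeterministic bits, since the decomposition tree has size bounded in the parameter) and guesses images existentially at introduce nodes (at most $|A|$ jumps along any branch), and it certifies injectivity of the globally assembled map by the color-coding hash of Lemma~\ref{lem:hash}, just as in the proof of Lemma~\ref{lem:connected}; I would nonetheless present the reduction chain above, as it reuses the existing machinery most economically.
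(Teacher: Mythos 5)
Your proof is correct and follows essentially the same route as the paper's: pass to a decidable connected class $\fancya'$ of bounded treewidth via a claim of the form $\EMP{\fancya}\le_\pl\EMP{\fancya'}$, apply Lemma~\ref{lem:connected} to get $\EMP{\fancya'}\le_\pl\HOMP{(\fancya')^*}$, reduce the latter to $\HOMP{\T^*}$ (the paper cites the Classification Theorem here, you invoke Lemma~\ref{lemma:tree-decomp-to-tree} directly, which is the same underlying reduction), and finish with Theorem~\ref{theo:treecomplete}. The only divergence is in how connectivity is restored: the paper interprets a fresh binary symbol by $\bigcup_{t\in T}X_t^2$ over a width-$(w+1)$ tree decomposition with pairwise-intersecting adjacent bags, so that the same decomposition witnesses the width bound, whereas you reuse the star-of-representatives device from Theorem~\ref{theo:embtd} and pay for it with a correct apex argument adding one to the width --- both work, provided your choice of representatives is made canonically so that $\fancya'$ remains decidable.
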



\begin{proof} Let $\fancya$ accord the assumption. We proceed as in the proof of Theorem~\ref{theo:embtd}.\medskip

\noindent{\em Claim.} There exists a decidable class of connected structures $\fancya'$ of
 bounded treewidth such that $\EMP{\fancya}\le_\pl\EMP{\fancya'}$.\medskip

Note $\EMP{\fancya'}\le_\pl\HOMP{(\fancya')^*}$ by Lemma~\ref{lem:connected}, the latter problem pl-reduces to
$\HOMP{\mathcal T^*}$ by the Classification Theorem, and $\HOMP{\mathcal T^*}\in\textup{TREE}$ by Theorem~\ref{theo:treecomplete}.
We are thus left to prove the claim.

 Assume $\fancya$ has treewidth at most $w$.
Fix a computable function that maps every $\rela\in\fancya$ to a width $\le w+1$ tree decomposition 
$(\relt, (X_t)_{t\in T})$ of $\rela$ such that $|X_t|\ge 2$ for all $t\in T$, and $X_s\cap X_t\neq\emptyset$ for all $(s,t)\in E^{\relt}$. Let
$\rela'$ be the expansion of $\rela$ by interpreting a new binary relation symbol $R$ by $\bigcup_{t\in T}X_t^2$. 
Then $(\relt, (X_t)_{t\in T})$ is also a tree decomposition of $\rela'$ and $\rela'$ is connected. Clearly,
$\fancya':=\{\rela'\mid\rela\in\fancya\}$ is decidable. The map $(\rela,\relb)\mapsto(\rela',\relb')$, where $\relb'$ 
is the expansion of $\relb$ interpreting $R$ by $B^2$, is a pl-reduction from $\EMP{\fancya}$ to $\EMP{\fancya'}$.
\end{proof}

\begin{theorem}
\label{thm:hom-hard-for-tree}
The parameterized problems $\HOMP{\BIN}$,
$\HOMP{\dBIN}$,
$\EMP{\BIN}$, and
$\EMP{\dBIN}$
are complete for $\TREE$ under pl-reduc\-tions.
\end{theorem}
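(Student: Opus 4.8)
The plan is to handle the directed classes $\dBIN,\dBIN^*$ first, where the structures are rigid and the Classification Theorem applies directly, and then to reduce the undirected classes to these via cores.

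\emph{Membership.} The embedding problems $\EMP{\BIN}$ and $\EMP{\dBIN}$ lie in $\TREE$ at once by Theorem~\ref{theorem:emb-in-tree}, since every $\relb_k$ and every $\overrightarrow{\relb_k}$ has treewidth at most $1$. For the homomorphism problems I would use the trivial reductions $\HOMP{\BIN}\le_\pl\HOMP{\BIN^*}$ and $\HOMP{\dBIN}\le_\pl\HOMP{\dBIN^*}$ (interpret each fresh unary symbol of the target by its whole universe), noting that $\BIN^*$ and $\dBIN^*$ are classes of cores (Example~\ref{exa:core}) whose Gaifman graphs are the trees $\relt_k$; these have treewidth $1$ and, by Example~\ref{exa:unbounded}, unbounded pathwidth, so Theorem~\ref{thm:classification}(1) gives $\HOMP{\BIN^*}\equiv_\pl\HOMP{\dBIN^*}\equiv_\pl\HOMP{\T^*}$, which is in $\TREE$ by Theorem~\ref{theo:treecomplete}. (Alternatively one exhibits the expected alternating jump machine directly: choose the root of $\relb_k$ and existentially guess its image; then, descending level by level, universally guess a bit $i\in\{0,1\}$, existentially guess the image of the $S_i$-child, and check that edge; along every computation branch this uses $O(k)$ co-nondeterministic bits, $O(k\log|B|)$ nondeterministic bits, and pl-space.)

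\emph{Hardness for the directed classes.} The structure $\overrightarrow{\relb_k}$ is rigid: $\lambda$ is its unique source, and if a homomorphism $\overrightarrow{\relb_k}\to\overrightarrow{\relb_k}$ sent $\lambda$ to a vertex of positive depth $d$, then the vertices of depth $k$ would be forced to depth $d+k>k$, which is impossible; so $\lambda$ is fixed, and then every vertex is. Hence $\dBIN=\core(\dBIN)$ is a decidable class of connected cores, of treewidth $1$ and of unbounded pathwidth (the Gaifman graphs are the $\relt_k$, so Example~\ref{exa:unbounded} applies). Theorem~\ref{thm:classification}(1) then gives $\HOMP{\dBIN}\equiv_\pl\HOMP{\T^*}$; Corollary~\ref{cor:embhom} gives $\EMP{\dBIN}\equiv_\pl\HOMP{\dBIN^*}$, and $\HOMP{\dBIN^*}\equiv_\pl\HOMP{\T^*}$ by Theorem~\ref{thm:classification}(1) applied to the class of cores $\dBIN^*$. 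With Theorem~\ref{theo:treecomplete} this settles $\TREE$-completeness of $\HOMP{\dBIN}$ and $\EMP{\dBIN}$.

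\emph{Hardness for the undirected classes.} Here the $\relb_k$ are \emph{not} themselves cores --- already $\relb_2$ retracts onto a path on five vertices --- so the argument must go through $\core(\BIN)$, the class of cores of the $\relb_k$, which is a decidable class of connected cores, each a forest and hence of treewidth $1$. The point to establish is that $\core(\BIN)$ has \emph{unbounded pathwidth}. Granting this, Theorem~\ref{thm:classification}(1) applied to $\BIN$ gives $\HOMP{\BIN}\equiv_\pl\HOMP{\T^*}$, and applied to the class of cores $\core(\BIN)^*$ it gives $\HOMP{\core(\BIN)^*}\equiv_\pl\HOMP{\T^*}$; combining the latter with Corollary~\ref{cor:coreemb} yields $\HOMP{\T^*}\le_\pl\EMP{\core(\BIN)}$. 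It remains to reduce $\EMP{\core(\BIN)}\le_\pl\EMP{\BIN}$: on input $(\reld,\relc)$ with $\reld=\core(\relb_k)$, recover $\relb_k$ and a retraction $r\colon\relb_k\to\reld$ by enumerating $\BIN$ and computing cores (at cost effectively bounded in the parameter $|\reld|$), then build $\relc'$ from $\relc$ by attaching to every vertex a private copy of the forest that $r$ collapses onto each core vertex, arranged so that an embedding of $\relb_k$ into $\relc'$ restricts to an embedding of $\reld$ into $\relc$ and conversely. Together with the membership statements above, all four problems are $\TREE$-complete.

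\emph{The main obstacle.} The heart of the proof is showing that $\core(\BIN)$ has unbounded pathwidth, i.e.\ that the $S_0/S_1$-labelling forces the cores of the complete binary trees to stay genuinely branching rather than caterpillar-like; by the Excluded Tree Theorem~\ref{thm:rs}(2) this is equivalent to every tree being a minor of some $\core(\relb_k)$. The natural route is to single out a ``frontier'' of $\relb_k$ that survives every folding --- for instance the leaves whose root-to-leaf label word ends in two distinct bits, together with their ancestors --- to argue that no homomorphism $\relb_k\to\relb_k$ can avoid this frontier, and to check that contracting it leaves a subdivided complete binary tree of height $\Theta(k)$, whose pathwidth grows without bound. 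The secondary technical point is the amalgamation reduction $\EMP{\core(\BIN)}\le_\pl\EMP{\BIN}$, routine in spirit but needing care so that the attached gadgets are rigid enough not to be confused with the copy of $\relc$.
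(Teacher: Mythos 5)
Your handling of the directed problems and of membership is correct and is essentially the paper's own argument: $\overrightarrow{\relb_k}$ is rigid (your depth argument works), so $\dBIN$ is a decidable class of connected cores whose Gaifman graphs have unbounded pathwidth, and the Classification Theorem, Theorem~\ref{theo:treecomplete}, Corollary~\ref{cor:embhom} and Theorem~\ref{theorem:emb-in-tree} do the rest. You have also caught a genuine slip in the paper, whose proof simply asserts that the structures in $\BIN$ are connected cores: your folding of $\relb_2$ is valid, since after taking symmetric closures $(0,\lambda)\in S_0^{\relb_2}$ and $(1,\lambda)\in S_1^{\relb_2}$, so sending $00$ and $11$ to $\lambda$ is a non-injective endomorphism.

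The gap is in your repair of the undirected case. The ``main obstacle'' you isolate --- that $\core(\BIN)$ has unbounded pathwidth --- is not just left unproved; it is false, so no argument along your lines can close it. Your own folding of $\relb_2$ generalizes: let $P$ be the substructure of $\relb_k$ induced by the path of length $2k$ from the leaf $0101\cdots$ through the root to the leaf $1010\cdots$. Its edge labels alternate, so every interior vertex of $P$ has exactly one $S_0$-neighbour and one $S_1$-neighbour on $P$. Map $\lambda$ to the midpoint of $P$ and, inductively, map $wi$ to the unique $S_i$-neighbour on $P$ of the image of $w$; since the image of $w$ lies within distance $|w|\le k-1$ of the midpoint whenever $w$ has a child, the walk never runs off the ends of $P$, and this defines a homomorphism from $\relb_k$ onto $P$. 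Hence $\core(\relb_k)$ is (a weak substructure of) a path, $\core(\BIN)$ has pathwidth $1$, and Theorem~\ref{thm:classification} places $\HOMP{\BIN}$ in the degree of $\HOMP{\P^*}$ (case (2)) rather than that of $\HOMP{\T^*}$; your proposed frontier/branching argument would be proving a false statement. Consequently, unless $\PATH=\TREE$, the theorem itself fails for $\HOMP{\BIN}$, and the hardness of $\EMP{\BIN}$ --- which both you and the paper try to pull back through cores --- is likewise left without proof, since Corollary~\ref{cor:coreemb} applied to $\BIN$ only yields hardness of embedding alternating \emph{paths}. The conclusions for $\HOMP{\dBIN}$ and $\EMP{\dBIN}$, and membership of all four problems in $\TREE$, stand as you argue.
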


\begin{proof} 
It is straightforward to verify that the structures in $\BIN$ 
and in $\dBIN$ are connected cores.
Hence, each of the first two problems is $\TREE$-complete by 
the Classification Theorem
and
Theorem~\ref{theo:treecomplete}.

The problems $\EMP{\BIN}$ and $\EMP{\dBIN}$ are $\TREE$-hard by
Corollary~\ref{cor:embhom} and the hardness of 
$\HOMP{\BIN}$ and $\HOMP{\dBIN}$, which immediately imply the hardness of
$\HOMP{\BIN^*}$ and $\HOMP{\dBIN^*}$.

The problems $\EMP{\BIN}$ and $\EMP{\dBIN}$ are in $\TREE$
by Theorem~\ref{theorem:emb-in-tree}.
\end{proof}

\section{Counting classification}
\label{sect:counting}

\newcommand{\fancyd}{\mathcal{D}}
\newcommand{\fancyt}{\mathcal{T}}

In this section we present a classification of the counting problems corresponding to
 the problems $\HOMP{\fancya}$.

\subsection{Preliminaries on parameterized counting complexity}
A machine with oracle $O\subseteq\{0,1\}^*$ 
has an extra write-only {\em oracle tape};
such a machine has a
{\em query state} and the word $y$ written on the oracle tape is the {\em query} of a configuration with this state; 
the successor state is obtained by erasing the oracle tape and moving to one of two distinguished states depending of 
whether the query is contained in the oracle $O$ or not. The oracle tape is not accounted for in space bounds (as in~\cite{ladnerlynch}).

A {\em parameterized counting problem} is a pair $(F,\kappa)$ of a function $F:\{0,1\}^*\to\mathbb N$ and a parameterization $\kappa$. 
To say it is in para-L, means that $F$ is implicitly pl-computable with respect to $\kappa$. Equivalently one could say that there is a
Turing machine with a write-only output tape that computes $F$ and is pl-space bounded with respect to~$\kappa$.  

A {\em parsimonius fpt-reduction} from $(F,\kappa)$ to another parameterized counting problem $(F',\kappa')$ is a function $R:\{0,1\}^*\to\{0,1\}^*$ that is computable by an fpt-time bounded (with respect to $\kappa$) Turing machine such 
that $F=F'\circ R$ and $\kappa'\circ R\le f\circ \kappa$ for some computable $f:\mathbb N\to\mathbb N$. 
In the logspace setting we define a {\em parsimonious pl-reduction} similarly demanding that the reduction is 
implicitly pl-computable instead of computable by a fpt-time bounded machine. We again write $(F,\kappa)\le_\pl(F',\kappa')$ if  such a reduction exists. 

We say $(F,\kappa)$ is {\em pl-Turing reducible} to $(F',\kappa')$ and write $(F,\kappa)\le^T_\pl (F',\kappa')$ if there are 
a pl-space bounded (with respect to $\kappa$) Turing machine $\mathbb A$ 
with oracle to $\textsc{Bitgraph}(F')$ that decides $\textsc{Bitgraph}(F)$, and a computable $f$ such that on every input $x\in\{0,1\}^*$ all 
queries $y\stackrel{?}{\in} \textsc{Bitgraph}(F')$ 
of $\A$ on $x$ have parameter $\kappa'(y)\le f(\kappa(x))$. Here, we denote the parameterizations of $\textsc{Bitgraph}(F)$ and $\textsc{Bitgraph}(F')$ again by $\kappa$ and $\kappa'$ respectively.


%
%

\subsection{Classification theorem}

For a 
class of structures $\fancya$ consider the parameterized counting problem.
\npprob{$\SHOMP{\fancya}$}{A pair of structures $(\rela, \relb)$
where  $\rela\in\fancya$}{$|\rela|$}{Compute the number of homomorphisms from $\rela$ to $\relb$.}
Dalmau and Jonsson~\cite{dalmaucounting} gave a classification
of counting problems of this form, showing that
for a class of structures $\fancya$ of boun\-ded arity,
the problem $\SHOMP{\fancya}$ is in FPT if $\fancya$ has bounded
tree\-width,
and is $\textsc{\#}\textup{W[1]}$-complete otherwise.
We give a fine classification of the case where $\fancya$ has
bounded treewidth, analogous to our fine classification for the
problem 
$\HOMP{\fancya}$.

\begin{theorem}[Counting Classification]
\label{thm:counting-classification}
Let $\fancya$ be a decidable class of structures 
having bounded arity and bounded treewidth.
\begin{enumerate}
\item If $\fancya$ has unbounded pathwidth,
then 
$$\SHOMP{\fancya} \le_\pl \SHOMP{\T^*} \le^T_\pl \SHOMP{\fancya}.$$

\item If $\fancya$ has bounded pathwidth and unbounded tree depth,
then 
$$\SHOMP{\fancya} \le_\pl \SHOMP{\P^*} \le^T_\pl \SHOMP{\fancya}.$$

\item If $\fancya$ has bounded tree depth, then 
$$\SHOMP{\fancya}\in\textup{para-L}.$$
\end{enumerate}
\end{theorem}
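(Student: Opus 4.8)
The plan is to follow the three cases of the Classification Theorem~\ref{thm:classification}, re-running the reductions of Lemma~\ref{lemma:tree-decomp-to-tree} and of the Reduction Lemma~\ref{lem:redrow} while tracking the \emph{number} of homomorphisms. For the upper reductions: since $\fancya$ has bounded treewidth, every structure in it has a bounded-width tree-decomposition whose decomposition tree is a tree, and if $\fancya$ has bounded pathwidth that tree can be taken to be a path; so Lemma~\ref{lemma:tree-decomp-to-tree} applies with $\fancyr=\T$ in case~(1) and $\fancyr=\P$ in case~(2), and by Remark~\ref{rem:count} the reduction it produces induces a \emph{bijection} between $\hom(\rela,\relb)$ and $\hom(\relt^*,\relb')$, hence is parsimonious. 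This gives $\SHOMP{\fancya}\le_\pl\SHOMP{\T^*}$ and $\SHOMP{\fancya}\le_\pl\SHOMP{\P^*}$ in cases~(1) and~(2).

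For the lower (Turing) reductions, let $\fancyg$ be the class of Gaifman graphs of $\fancya$ and $\fancym$ the class of minors of graphs in $\fancyg$, both computably enumerable. Unbounded pathwidth of $\fancya$ forces $\fancyg$ to have unbounded pathwidth, so $\T\subseteq\fancym$ by the Excluded Tree Theorem~\ref{thm:rs}(2); unbounded tree depth gives $\P\subseteq\fancym$ by the Excluded Path Theorem~\ref{thm:rs}(3). Inspection of the proofs of Lemmas~\ref{lemma:red-minors-to-graphs} and~\ref{lemma:red-graphs-to-structures} shows that the maps $h\mapsto h'$ used there are bijections, so those reductions are parsimonious too; chaining them with the trivial inclusion $\T^*\subseteq\fancym^*$ (resp.\ $\P^*\subseteq\fancym^*$) yields $\SHOMP{\T^*}\le_\pl\SHOMP{\fancya^*}$ (resp.\ $\SHOMP{\P^*}\le_\pl\SHOMP{\fancya^*}$). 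It then remains to prove $\SHOMP{\fancya^*}\le^T_\pl\SHOMP{\fancya}$.

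This last step is the main obstacle. In the decision setting it is Lemma~\ref{lemma:reduction-constants-cores}, whose proof goes through the core of $\rela$; since the core has in general a different number of homomorphisms into $\relb$, that route is unavailable, and this is exactly where genuine oracle access is needed (hence only $\le^T_\pl$). Given $(\rela^*,\relb)$ with $\rela\in\fancya$, put $S_a:=C_a^{\relb}$ and let $\relb_\tau$ be the reduct of $\relb$ to the vocabulary $\tau$ of $\rela$, so that $\#\hom(\rela^*,\relb)$ is the number of $h\in\hom(\rela,\relb_\tau)$ with $h(a)\in S_a$ for all $a$. I would compute this list-homomorphism count by an inclusion--exclusion / M\"obius-inversion scheme whose terms are ordinary counts $\#\hom(\rela,\relc)$ for $\tau$-structures $\relc$ assembled from $\rela$ and $\relb_\tau$; a natural building block is the structure on $\{(a,b):a\in A,\ b\in S_a\}$ with $R^{\relc}=\{((a_1,b_1),\dots,(a_r,b_r)):(a_1,\dots,a_r)\in R^{\rela}\text{ and }(b_1,\dots,b_r)\in R^{\relb_\tau}\}$, whose homomorphisms are exactly the pairs $(\alpha,\beta)$ with $\alpha\in\hom(\rela,\rela)$, $\beta\in\hom(\rela,\relb_\tau)$ and $\beta(v)\in S_{\alpha(v)}$ for all $v$, the target quantity being the $\alpha=\mathrm{id}$ contribution. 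Since a pl-Turing reduction may issue $f(|\rela|)$-many queries, all with first component $\rela\in\fancya$, and combine their answers by a bounded number of iterated integer additions, multiplications and polynomial interpolations --- all doable in parameterized logarithmic space, as iterated integer addition and multiplication lie in $\textup{L}$ --- the plan is to cancel the contributions of the non-identity endomorphisms by further such queries, using ``blown-up'' copies of the targets to expose the relevant multiplicities. I expect the bookkeeping needed to isolate the identity endomorphism --- i.e.\ to deal with the symmetries and non-bijective endomorphisms of $\rela$, isolated and low-degree vertices being absorbed by dividing out powers of $|\relc|$ --- to be the technically delicate point; all of it would be immediate if every $\rela\in\fancya$ were a rigid core.

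For statement~(3) I would adapt the proof of Lemma~\ref{lem:tdinL}, but \emph{without} passing to the core (which changes the count): compute a tree-depth decomposition of $\rela$ itself --- a rooted forest of height at most $w=\max\{\td(\rela):\rela\in\fancya\}$ on the vertices of the Gaifman graph of $\rela$ each of whose closure-edges contains all Gaifman edges --- and count homomorphisms by the obvious recursion along it: at a node $c$, for a fixed image of the at most $w$ ancestors of $c$, sum over the $|B|$ admissible images of $c$ the product of the counts returned by the subtrees hanging off $c$, and take the product over the roots. Since $w$ is a constant an ancestor-assignment occupies only $O(\log|B|)$ space, and the recursion exhibits $\#\hom(\rela,\relb)$ as an arithmetic $(+,\times)$-formula of polynomial size whose shape and leaf values are computable within the space budget; it can then be evaluated bit by bit in parameterized logarithmic space, once more because iterated integer addition and multiplication are logspace computable, so that no single intermediate count --- which may have $|A|\cdot\log|B|$ bits --- need ever be held in memory. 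I expect this case to be routine once the arithmetic is organised in this streaming way.
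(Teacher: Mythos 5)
Your overall architecture is the paper's: parsimoniousness of the Lemma~\ref{lemma:tree-decomp-to-tree} reduction via Remark~\ref{rem:count} for the upper reductions, parsimoniousness of Lemmas~\ref{lemma:red-minors-to-graphs} and~\ref{lemma:red-graphs-to-structures} plus the excluded tree/path theorems for the lower ones, and a height-bounded product--sum recursion evaluated by logspace iterated arithmetic for statement~(3). Those parts are correct. But the step you yourself flag as ``the main obstacle'' --- the Turing reduction $\SHOMP{\fancya^*}\le^T_\pl\SHOMP{\fancya}$ --- is where your proposal has a genuine gap: you set up the right object (homomorphisms into the product-like structure on $\{(a,b):b\in C_a^{\relb}\}$ decompose into pairs of an endomorphism $\alpha$ of $\rela$ and a compatible homomorphism into $\relb$, with the target being the $\alpha=\mathrm{id}$ slice), but the mechanism you propose for isolating that slice --- ``blown-up copies'', polynomial interpolation, dividing out powers of $|\relc|$ for low-degree vertices --- is not an argument, and it is not clear it can be made into one. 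Cancelling the contributions of arbitrary non-identity, possibly non-bijective endomorphisms is precisely the hard part, and you leave it as a hope.

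The paper (following Dalmau and Jonsson) resolves it with a cleaner two-step device that you should compare against. First, one does \emph{not} try to isolate $\alpha=\mathrm{id}$ directly, but rather $\alpha$ \emph{surjective}: for each nonempty $S\subseteq A$ one queries the oracle on $(\rela,\relb_S)$, where $\relb_S$ is the induced substructure of $\rela\times\relb_0$ on $\{(a,b)\in S\times B: b\in C_a^{\relb}\}$; the answer $N_{\subseteq S}$ counts the homomorphisms $h$ with $(\pi_1\circ h)(A)\subseteq S$, and inclusion--exclusion $N_{=A}=\sum_{S\subseteq A}(-1)^{|A|-|S|}N_{\subseteq S}$ recovers the count of those with surjective first projection ($2^{|A|}$ queries, all with left-hand side $\rela$, so parameter-bounded). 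Second, every such $h$ factors \emph{uniquely} as $g\circ\sigma$ with $g_1=\mathrm{id}$ and $\sigma$ a bijective endomorphism of $\rela$ (take $m$ with $(\pi_1\circ h)^m=\mathrm{id}$ and set $g=h\circ(\pi_1\circ h)^{m-1}$), so the desired count is $N_{=A}$ divided by the number of bijective endomorphisms of $\rela$ --- a quantity computable from $\rela$ alone, hence within the parameter budget, with the division done in logspace. This eliminates exactly the bookkeeping you were worried about; without some such device your Turing reduction, and with it both chains in statements~(1) and~(2), is unproved.
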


The proof of this result partly involves an analysis of the proof of
Theorem~\ref{thm:classification}, 
and builds on techniques of Dalmau and 
Jonsson~\cite{dalmaucounting}.

\begin{lemma}
\label{lemma:counting-reductions}
Let $\fancya$ be a decidable set of finite structures, 
let $\fancyg$ be the set of Gaifman
graphs of $\fancya$, and let $\fancym$ be the set of minors of graphs
in $\fancyg$. Then
\begin{eqnarray*}
&&\SHOMP{\fancym^*} \le_{\pl} \SHOMP{\fancyg^*} 
\le_{\pl} \SHOMP{\fancya^*}
\le^T_{\pl} \SHOMP{\fancya}.
\end{eqnarray*}
\end{lemma}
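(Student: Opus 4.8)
The plan is to prove the three displayed reductions one at a time; since the first two are \emph{parsimonious} ($\le_\pl$) while only the last is a Turing reduction ($\le^T_\pl$), nothing needs to be composed. For the first two, $\SHOMP{\fancym^*}\le_\pl\SHOMP{\fancyg^*}$ and $\SHOMP{\fancyg^*}\le_\pl\SHOMP{\fancya^*}$, I would revisit the proofs of Lemmas~\ref{lemma:red-minors-to-graphs} and~\ref{lemma:red-graphs-to-structures} and observe that the reductions built there are in fact parsimonious: in each case the proof does not merely transfer existence of a homomorphism but exhibits a bijection between the homomorphism sets involved. In Lemma~\ref{lemma:red-minors-to-graphs} the assignment $h\mapsto h'$ is injective (recover $h(m)$ as the second coordinate of $h'$ on the non-empty set $\mu(m)$), and the proof's second half shows every homomorphism from $\relg^*$ to $\relb'$ is of this form, so $\mathrm{Hom}(\relm^*,\relb)$ and $\mathrm{Hom}(\relg^*,\relb')$ are in bijection; likewise $h\mapsto h'$ with $h'(a)=(a,h(a))$ is a bijection in Lemma~\ref{lemma:red-graphs-to-structures}. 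As the output structures and new parameters are exactly as in the decision versions, this gives the two parsimonious pl-reductions.

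The reduction $\SHOMP{\fancya^*}\le^T_\pl\SHOMP{\fancya}$ is the crux, and here the decision-level argument of Lemma~\ref{lemma:reduction-constants-cores} does not transfer because it relied on $\reld$ being a core. Given an instance $(\rela^*,\relb)$ with $\rela\in\fancya$, write $\relb_*$ for the $\tau$-reduct of $\relb$, put $L_a:=C_a^\relb$, and note that the number to be computed is $\#\mathrm{Hom}(\rela^*,\relb)=\#\{h\in\mathrm{Hom}(\rela,\relb_*):h(a)\in L_a\ \forall a\}=:N_{\mathrm{id}}$. Following the construction of Lemma~\ref{lemma:reduction-constants-cores}, but applied with $\reld:=\rela$ itself, form the $\tau$-structure $\relb':=\langle\{(a,b):a\in A,\ b\in L_a\}\rangle^{\rela\times\relb_*}$. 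Unwinding the direct product, a homomorphism from $\rela$ to $\relb'$ is the same as a choice of an endomorphism $e$ of $\rela$ together with a homomorphism from $\rela$ to $\relb_*$ whose value at $a$ lies in $L_{e(a)}$; hence
\begin{equation*}
\#\mathrm{Hom}(\rela,\relb')=\sum_{e\in\mathrm{End}(\rela)}N_e,
\qquad\text{where }N_e:=\#\{h\in\mathrm{Hom}(\rela,\relb_*):h(a)\in L_{e(a)}\ \forall a\in A\}.
\end{equation*}
Thus one oracle call to $\SHOMP{\fancya}$, with source $\rela$ and target $\relb'$, returns the entire sum, and the remaining task is to isolate the term $e=\mathrm{id}$.

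Two facts drive the extraction. First, precomposition with an automorphism $\alpha$ of $\rela$ is a bijection showing $N_{e\circ\alpha}=N_e$; in particular $N_\alpha=N_{\mathrm{id}}$ for all $\alpha\in\mathrm{Aut}(\rela)$, so when $\rela$ is a core we are already done after dividing by $|\mathrm{Aut}(\rela)|$. Second, for every endomorphism $e$ one has $N_e=\#\mathrm{Hom}(\rela^*,\relb^{[e]})$, where $\relb^{[e]}$ is $\relb$ with each unary relation $C_a$ reinterpreted as $C_{e(a)}^\relb$ (so $\relb^{[\mathrm{id}]}=\relb$): this is again an instance of $\SHOMP{\fancya^*}$, but one whose list family is "folded" through the image $e(A)$. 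I would then compute $N_{\mathrm{id}}$ by recursion: using that $\#\mathrm{Hom}(\rela,\relb')=|\mathrm{Aut}(\rela)|\cdot N_{\mathrm{id}}+\sum\nolimits_{e}\#\mathrm{Hom}(\rela^*,\relb^{[e]})$ (the sum over the non-surjective $e\in\mathrm{End}(\rela)$, of which there are at most $|A|^{|A|}$), one solves for $N_{\mathrm{id}}$ once the recursive values $\#\mathrm{Hom}(\rela^*,\relb^{[e]})$ are available; the recursion bottoms out at instances all of whose lists coincide with a single set $L$, where $N_{\mathrm{id}}=\#\mathrm{Hom}(\rela,\langle L\rangle^{\relb_*})$ is one further oracle call (and $N_{\mathrm{id}}=0$ if $L=\emptyset$). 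Every query has source $\rela$, so the new parameter stays $\le|\rela|$; all auxiliary structures are producible in parameterized logarithmic space from $(\rela^*,\relb)$; and the whole computation is organised by a pl-space bounded oracle machine, invoking Lemma~\ref{lem:trans} to compose the implicitly pl-computable constructions.

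The main obstacle is exactly this last step: making the extraction rigorous, i.e.\ proving that the recursion terminates with parameter-bounded depth and total size — equivalently, that "folding along a non-surjective endomorphism" genuinely yields a structurally simpler instance. This is where the techniques of Dalmau and Jonsson~\cite{dalmaucounting} enter, via a Möbius-type inversion over the lattice of partitions of $A$: one groups the endomorphisms of $\rela$ by their kernels, uses that a non-trivial kernel forces the associated list family to be constant on a non-singleton block (hence to range over a controlled, strictly poorer set of values), and inverts on the partition lattice to peel off the identity contribution; carrying this through, together with all the space and parameter bookkeeping, is the substantive content. Granting it, Lemma~\ref{lemma:counting-reductions} follows, and composing the three links (a parsimonious pl-reduction being in particular a pl-Turing reduction) gives the reduction $\SHOMP{\fancym^*}\le^T_\pl\SHOMP{\fancya}$ that the Counting Classification uses.
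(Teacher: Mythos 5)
Your treatment of the first two reductions matches the paper: the maps of Lemmas~\ref{lemma:red-minors-to-graphs} and~\ref{lemma:red-graphs-to-structures} are reused verbatim and checked to be parsimonious via the bijections already implicit in those proofs. For the third reduction your setup also coincides with the paper's up to notation: your $\relb'$ is the paper's $\relb_A$, your identity $\#\mathrm{Hom}(\rela,\relb')=\sum_{e\in\mathrm{End}(\rela)}N_e$ is the same decomposition by first coordinates, and your observation that $N_\alpha=N_{\mathrm{id}}$ for every automorphism $\alpha$ is what lets the paper divide by the number of bijective endomorphisms.

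The gap is exactly where you flag it: you never isolate the identity term, and the recursion you sketch is not shown to terminate. Folding along a non-surjective $e$ yields the instance with lists $a\mapsto L_{e(a)}$, and recursing on it produces terms $N_{e\circ e'}$; since $e$ need not be injective on $e'(A)$, the image $e(e'(A))$ can equal $e(A)$, and the partition of $A$ by list-identity need not strictly coarsen, so neither of your candidate progress measures is guaranteed to decrease. The paper avoids recursion entirely. It first shows that every homomorphism $h$ from $\rela$ to $\relb_A$ with $h_1(A)=A$ factors uniquely as $g\circ\sigma$ with $g_1=\mathrm{id}$ and $\sigma$ a bijective endomorphism of $\rela$, so $N_{\mathrm{id}}=M_h/S$, where $M_h$ counts the homomorphisms with surjective first coordinate and $S$ is computed from $\rela$ alone. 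It then computes $M_h$ by inclusion--exclusion on the subset lattice of $A$: for each $S\subseteq A$ a single oracle call on $(\rela,\relb_S)$ returns $N_{\subseteq S}$, the number of homomorphisms with $h_1(A)\subseteq S$, and $M_h=\sum_{S\subseteq A}(-1)^{|A|-|S|}N_{\subseteq S}$. This uses only $2^{|A|}$ parameter-bounded queries, a logspace iterated sum, and a logspace division --- no induction on folded instances. Your kernel/partition-lattice idea might be made to work in the style of Dalmau--Jonsson, but as written the substantive step of the lemma is missing.
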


\begin{proof}
The first two reductions are exactly as before, that is,
they are the reductions from Lemmas~\ref{lemma:red-minors-to-graphs}
and~\ref{lemma:red-graphs-to-structures}. 
These reductions are readily verified
to be parsimonious. We thus prove that $\SHOMP{\fancya^*} \le_{\pl}^T \SHOMP{\fancya}$.

Let $\rela$ be an element of $\fancya$, and
let $(\rela^*, \relb)$ be an instance of
$\SHOMP{\fancya^*}$.
Let $\relb_0$ be the restriction of $\relb$ to 
relation symbols of $\rela$.
For each non-empty subset $S \subseteq A$,
define $\relb_S$ to be the induced substructure
of $\rela \times \relb_0$
on universe $\{ (a, b) \in S \times B\mid b \in C_{a}^{\relb} \}$.
For a mapping $g$ from $A$ to a set of the form $B_S$,
let $g_1$ denote the map $(\pi_1 \circ g)$ where $\pi_1$ is the projection of a pair to its first component.

The number of homomorphisms $g$ from $\rela^*$ to $\relb$ is the same as the number $M_g$ of homomorphisms $g'$ from $\rela$ to $\relb_A$ such that
$g'_1$ is the identity on $A$ (consider the bijection $g\mapsto g'$
with $g'(a):=(a,g(a))$).
%
%
%
To compute $M_g$, it suffices to compute the number $M_h$ of homomorphisms 
$h$ from $ \rela$ to $\relb_A$ such that
$h_1(A) = A$.
This is because of the fact that a mapping $h: A \rightarrow A \times B$
is a homomorphism from $\rela$ to $\relb_A$
with
$h_1(A) = A$
if and only if 
$h$ has the form $g \circ \sigma$ where 
$g$ is a homomorphism from $\rela $ to $\relb_A$,
$g_1$ is the identity,
and $\sigma$ is a bijective homomorphism from $\rela $ to~$\rela$.
From this fact, it follows that $M_g = M_h / S$ where $S$
is the number of bijective homomorphisms from $\rela$ to~$\rela$;
note that $S$ can be computed directly from $\rela$, and so this gives
a way to determine $M_g$ (division is logspace computable~\cite{chiu}).
We prove the claimed fact as follows.
The backward direction is clear, so we prove the forward direction.
Let $h$ be a homomorphism from $\rela $ to $\relb_A$
with $(\pi_1 \circ h)(A) = A$.
There exists an integer $m \geq 1$ such that
$(\pi_1 \circ h)^m$ is the identity mapping on $A$.
Set $g = h \circ (\pi_1 \circ h)^{m-1}$; we then have 
$h = h \circ (\pi_1 \circ h)^m = g \circ (\pi_1 \circ h)$, as desired.

For each subset $S \subseteq A$,
the Turing reduction will query the instance $(\rela, \relb_S)$
of $\SHOMP{\fancya}$; denote the result by $N_{\subseteq S}$.
Observe that $N_{\subseteq S}$ is the number of homomorphisms
$h$ from $\rela$ to $\relb_A$ with $h_1(A) \subseteq S$.
For a subset $S \subseteq A$, let $N_{= S}$ denote the number
of homomorphisms 
$h$ from  $\rela$ to $\relb_A$ with $h_1(A) = S$.
We have, for each subset $S \subseteq A$, the identity
$N_{\subseteq S} = \sum_{T \subseteq S} N_{= T}$.
By inclusion-exclusion, we have
$N_{=A} = \sum_{S \subseteq A} (-1)^{|A| - |S|} N_{\subseteq S}$
which is the value $M_h$ that we wanted to determine.
We can evaluate the sum expression in pl-space
by combining two observations: first, 
with oracle access to $\SHOMP{\fancya}$ the sequence of the numbers
 $(-1)^{|A| - |S|} N_{\subseteq S}$ is implicitly pl-computable;
second, summing a sequence of integers can be done in logspace.
\end{proof}

\begin{proof}[Proof of Theorem~\ref{thm:counting-classification}]
Statements (1) and (2) each make two claims. The claims made first concern parsimonious pl-reductions and follow from Lemma~\ref{lemma:tree-decomp-to-tree} and Remark~\ref{rem:count}. The second claims concern Turing reductions and follow from the previous lemma together with the Excluded Tree Theorem~\ref{thm:rs}~(2) and the Excluded Path Theorem~\ref{thm:rs}~(3) respectively. 

We are left to prove Statement (3). It is not hard to see that a structure of tree depth at most $w'$ has a tree decomposition of width at most $w'+1$ such that the underlying tree has height at most $w'$ with respect to some root. By Lemma~\ref{lemma:tree-decomp-to-tree} and Remark~\ref{rem:count}, it suffices to show $\SHOMP{\mathcal T(w)}\in\textup{para-L}$ for every $w\in \mathbb N$. Here, we let $\mathcal T(w)$ be the class of structures $\relt^*$
such that $\relt$ is a tree that can be rooted in such a way that its height 
is at most $w$.

%

For $w = 0$, this is easy to see. So let $w > 0$ and assume by induction that 
$\SHOMP{\fancyt(w-1)}\in\textup{para-L}$.
Given an instance $(\relt^*, \relb)$ of $\SHOMP{\fancyt(w)}$,
we conceive of $\relt^*$ as a rooted tree with root $r$
and of height at most $w$.
For elements $t \in T$ and $b \in B$, we define
$N_{t \rightarrow b}$ to be the number of partial homomorphisms
$h$ that are defined on the subtree rooted at $t$ and such that $h(t) = b$.
Let $t_1, \ldots, t_m$ denote the children of $r$ in $\relt$.
The number that we desire to determine is
 $\sum_{b \in C_r^{\relb}} N_{r \rightarrow b}$.
For a particular value $b \in B$, 
it is straightforward to verify that 
$$\textstyle
N_{r \rightarrow b} = 
\prod_{i=1}^m \sum_{b'}   N_{t_i \rightarrow b'}$$
where the sum is over all $b' \in C_{t_i}^{\relb}$
such that $(b, b') \in E^{\relb}$. Thus, the number we desire to compute equals a certain sum-product-sum expression. But this expression is implicitly pl-computable: to determine bits of the numbers
$N_{t_i \rightarrow b'}$ one can run an algorithm witnessing $\SHOMP{\fancyt(w-1)}\in\textup{para-L}$.
Using the facts that iterated sum and iterated product are computable
in logarithmic space~\cite{chiu}, 
it follows that our sum-product-sum expression
can be evaluated in logarithmic space. This yields  the result.
\end{proof}

\section{Discussion}

A classification of the parameterized complexity of embedding problems is famously open \cite[p.355]{flumgrohe}, in particular, it is not known whether the embedding 
problem for complete bipartite cliques is W[1]-hard (under fpt-reductions). 

A fundamental problem whose complexity we failed to settle within our framework is $\EMP{\P}$. 
By Theorem~\ref{theo:pathemb}, we know $\EMP{\P}\in\PATH$, but we do not know whether it is $\PATH$-hard (under pl-reductions). 
We note that its restriction to regular graphs is in para-L.
\npprob{$\EMP{\P}_{\textup{reg}}$}{A regular graph $\relg$ and $k\in\mathbb N$}{$k$}{Does $\relg$ contain a path of length $k$?}

\begin{proposition}\label{theo:regpath}
 $\EMP{\P}_{\textup{reg}}\in \textup{para-L}$.
\end{proposition}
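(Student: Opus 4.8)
The plan is to decide an instance $(\relg,k)$ by a case analysis on the regularity degree $d$ of $\relg$, which can be computed (and, if one wishes, verified) in space $O(\log|G|)$ by a single scan over the vertex set; write $n:=|G|$. If $k\le d$, the machine simply accepts: this is trivial when $d\le 1$ (then $k\le 1$), and for $d\ge 2$ a simple path $u_0,u_1,\dots,u_k$ exists by a greedy construction, since when $u_i$ is to be chosen ($1\le i\le k$) the vertex $u_{i-1}$ has $d$ neighbours, at most $i-1\le k-1\le d-1$ of which already occur among $u_0,\dots,u_{i-1}$, leaving a free choice. If $k>d$ and $d\le 1$, the machine rejects, since then $\relg$ has no path with more than $d<k$ edges. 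So in both of these cases the machine outputs the answer outright, in space $O(\log n)$.

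The interesting case is $2\le d<k$, where the degree is bounded by the parameter. Here I would search exhaustively over a \emph{compact} description of candidate paths: a simple path of length $k$ starting at a vertex $u_0$ is determined by $u_0$ together with a tuple $(c_1,\dots,c_k)\in[d]^k$, where for $i\ge 1$ the vertex $u_i$ is the $c_i$-th neighbour of $u_{i-1}$ in increasing vertex order. The machine iterates over all $(u_0,c_1,\dots,c_k)\in[n]\times[d]^k$ --- a tuple stored in $\lceil\log n\rceil+k\lceil\log d\rceil=O(\log n)+O(k\log k)$ bits --- and for each one reconstructs $u_1,\dots,u_k$ one vertex at a time (locating ``the $c$-th neighbour of a given vertex'' is a linear scan in $O(\log n)$ reusable space), then checks that $u_0,\dots,u_k$ are pairwise distinct by recomputing $u_i$ and $u_j$ on demand and comparing, again in $O(\log n)$ reusable space. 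It accepts iff some tuple yields pairwise distinct vertices. Correctness is immediate in both directions: a distinct tuple reconstructs to a simple path of length $k$, and every such path in $\relg$ is the reconstruction of exactly one tuple. The total space is $O(\log n)+O(k\log k)$, so the procedure is pl-space bounded, which gives $\EMP{\P}_{\textup{reg}}\in\textup{para-L}$.

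The one subtlety --- and the reason the regularity hypothesis is used --- is that a naive depth-first search storing the current partial path as a list of vertex identifiers would use $\Theta(k\log n)$ space, which is of the shape $f(k)\cdot\log n$ rather than $f(k)+O(\log n)$ and hence not pl-bounded. Describing a path by a starting vertex plus a length-$k$ sequence of ``which neighbour'' choices confines the input-length-dependent storage to a single $O(\log n)$ block, and this description is short precisely because in the remaining case $d<k$, so the ``which neighbour'' indices range over a parameter-bounded set; regularity is what makes every such index well defined. No graph-theoretic ingredient beyond the elementary greedy argument for $k\le d$ is required.
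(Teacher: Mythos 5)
Your proof is correct, but it takes a different route from the paper's. Both arguments hinge on the same dichotomy --- if the degree $d$ is at least the target length then a path of length $k$ exists outright (your greedy argument is the standard minimum-degree fact), and otherwise the degree is bounded by the parameter --- but they diverge in how the low-degree case is handled. The paper invokes as a black box the result of Flum and Grohe that model checking first-order sentences on degree-$\le d$ graphs is in $\textup{para-L}$ with parameter $d+|\varphi|$, and simply writes down the FO sentence asserting the existence of $k+1$ pairwise distinct vertices forming a path. You instead give a self-contained algorithm: encode a candidate path by its start vertex plus a sequence of $k$ neighbour indices in $[d]^k$, so that the only $\Theta(\log n)$-sized object ever stored is a single vertex name, and all pairwise-distinctness checks are done by recomputation. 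Your encoding is in essence the mechanism underlying the bounded-degree model-checking theorem, so what you lose in brevity you gain in transparency; the paper's reduction is shorter and would generalize immediately to any other first-order-expressible pattern, while yours makes explicit exactly where regularity enters (every ``$c$-th neighbour'' index is well defined) and why the naive DFS storing $k$ vertex names would overshoot the $f(k)+O(\log n)$ space bound. One cosmetic difference: you accept already when $k\le d$, whereas the paper accepts only when $d>k$ and lets the model checker handle $d=k$; both are sound.
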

The proof uses a result of Flum and Grohe~\cite[Example~6]{describing} stating that model checking first-order logic on 
bounded degree graphs is in $\textup{para-L}$. 
Their proof actually shows 
\begin{theorem}[\cite{describing}] 
\label{thm:fo-on-bounded-degree}
The following parameterized problem is in $\textup{para-L}$.
\em
\npprob{}{A graph $\mathbf G$ of degree at most $d$ and a first-order sentence $\varphi$}{$d+|\varphi|$}{$\mathbf G\models\varphi$ ?}
\end{theorem}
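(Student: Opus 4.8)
The plan is to prove this by appeal to \emph{Hanf locality}, which for structures of bounded degree reduces first-order model checking to counting local isomorphism types up to a fixed threshold. Set $q := \qr(\varphi)$. Hanf's theorem supplies a radius $r$ and a threshold $T$, both computable from $q$ and $d$ alone, with the property that whether $\mathbf G \models \varphi$ is determined by the \emph{capped type histogram} of $\mathbf G$: for each isomorphism type $\theta$ of a radius-$r$ ball centred at a distinguished vertex in a degree-$\le d$ graph, record the number of vertices $v$ of $\mathbf G$ whose $r$-ball realizes $\theta$, with each count truncated at $T$. Two degree-$\le d$ graphs with the same capped histogram agree on all sentences of quantifier rank $\le q$, and hence on $\varphi$. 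The finite set $\Theta$ of possible types, together with $r$ and $T$, is bounded by a computable function of the parameter $d + |\varphi|$, so it can be computed and stored in space $f(d+|\varphi|)$. I prefer Hanf to Gaifman locality here precisely because the counting formulation fits the \emph{additive} space budget $O(f(d+|\varphi|)+\log n)$, whereas the scattered-tuple existential of a Gaifman normal form would tempt one into maintaining several $\Omega(\log n)$-bit vertex names at once.

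First I would describe the algorithm at a high level. Compute $q,r,T$ and enumerate $\Theta$; then make a \emph{single} pass over the vertices $v$ of $\mathbf G$, maintaining one capped counter per type $\theta\in\Theta$; for each $v$, determine the isomorphism type of its $r$-ball and increment the corresponding counter (never exceeding $T$); finally, decide $\mathbf G \models \varphi$ from the resulting histogram. The only storage that grows with $n$ is a single $O(\log n)$-bit index for the current $v$; the $|\Theta|$ counters occupy $|\Theta|\cdot\log T = f(d+|\varphi|)$ bits, and the final decision is a computable function of the histogram and $\varphi$, both of parameter-bounded size, hence computable in space bounded by the parameter. So, modulo the ball-type subroutine, the computation runs in space $O(f(d+|\varphi|) + \log n)$.

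The crux, and the step I expect to be the main obstacle, is computing the isomorphism type of the radius-$r$ ball around $v$ within space $O(f(d+|\varphi|)+\log n)$ rather than the naive $f\cdot\log n$. The naive method lists the boundedly many ball vertices by their global identifiers, but there can be up to $(d+1)^{r}$ of them and each costs $\Theta(\log n)$ bits, overshooting the additive budget. To avoid this I would name each ball vertex not by its identifier but by a bounded-length \emph{access path} from $v$: a sequence of at most $r$ neighbour-indices in $\{1,\dots,d\}$, encoded in $r\lceil\log(d+1)\rceil = f(d+|\varphi|)$ bits, where the $j$-th neighbour of a vertex is located by scanning the adjacency relation in the read-only input. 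Equality of two access paths (do they reach the same vertex?) and adjacency of the vertices they reach are then decided by re-walking from $v$, holding only one or two current $O(\log n)$-bit identifiers at a time and never materializing the whole ball. This yields oracle access to the $r$-ball as a structure whose universe is parameter-bounded, on which the type can be extracted either by brute-force canonicalization or, equivalently, by evaluating the $r$-local formulas defining each $\theta$ using the bounded model-checking routine of Lemma~\ref{lem:mcfo}; since that structure has parameter-bounded size and each atomic test costs $O(\log n)$, the subroutine stays within the para-L budget.

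Finally I would assemble the pieces and verify the space bound, emphasizing that the per-vertex type computation \emph{reuses} the same $O(\log n)$ workspace across the single pass, so the $\log n$ contributions do not accumulate across vertices or across ball walks. Correctness is exactly Hanf's theorem specialized to the bounded-degree setting; the only genuinely new content is the space analysis, whose single delicate point is the access-path encoding that keeps ball manipulation in additive rather than multiplicative space.
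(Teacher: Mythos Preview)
The paper does not give its own proof of this theorem; it simply cites Flum and Grohe~\cite[Example~6]{describing} and notes that their argument in fact yields the stated parameterization. Your sketch via Hanf locality together with the access-path encoding of ball vertices is essentially the argument used there: the reduction to counting $r$-neighborhood types up to a computable threshold is exactly Hanf's theorem for bounded-degree structures, and the observation that a vertex in the $r$-ball around $v$ can be named by a length-$\le r$ sequence over $[d]$ (rather than by its $\Theta(\log n)$-bit global identifier) is precisely what keeps the space additive rather than multiplicative in $\log n$. The one place you are a bit quick is the final step of deciding $\varphi$ from the histogram: the clean way to justify computability here is to invoke the effective Hanf normal form, which rewrites $\varphi$ over degree-$\le d$ structures as a Boolean combination of threshold counting sentences on $r$-types; this normal form is computable from $\varphi$ and $d$, so evaluating it on the histogram is a parameter-bounded computation. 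With that minor addition your argument is complete and matches the cited source.
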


\begin{proof}[Proof of Proposition~\ref{theo:regpath}]
Given a regular graph $\relg$ of degree $d$ and a natural $k\in\mathbb N$, distinguish two cases: if $d>k$ then accept; otherwise 
check, using the algorithm of
Theorem~\ref{thm:fo-on-bounded-degree},
whether $\relg$ satisfies $ \exists x_0\cdots x_k\big(\bigwedge_{i<j\le k}\neg x_i=x_j \wedge \bigwedge_{i< k}Ex_ix_{i+1} \big)$.
\end{proof}

On the more structural side, 
as mentioned in the introduction,
we believe that it could be worthwhile to
investigate whether or not the classes $\PATH$  and $\TREE$ are
closed under complement. 
Relatedly, one can ask whether or not it holds that
 $\textup{co-PATH}\subseteq\TREE$.

\subsection*{Acknowledgements} 
The first author was supported 
by the Spanish Project FORMALISM (TIN2007-66523), 
by the Basque Government Project S-PE12UN050(SAI12/219), and 
by the University of the Basque Country under grant UFI11/45.
The second author thanks the FWF (Austrian Science Fund) for its
support through Project P~24654~N25.

\bibliographystyle{abbrv}
\bibliography{refined,hubiebib}

\end{document}